% Doble cloumna formato artículo TVT
\documentclass[journal,10pt]{IEEEtran}
% DARFT
% *** Uncomment for one column version ***
%\documentclass[journal]{IEEEtran}
% *** Uncomment for two columns version ***
%\documentclass[12pt, draftclsnofoot, onecolumn]{IEEEtran}

\ifCLASSINFOpdf
  \usepackage[pdftex]{graphicx}
%  \graphicspath{{../Imagenes/EPS/}}
  \usepackage{epstopdf}
  \DeclareGraphicsExtensions{.eps, .pdf, .jpeg}
\else
\fi

\ifCLASSINFOpdf
	\usepackage[pdftex]{graphicx}
\else
	\usepackage[dvips]{graphicx}
\fi

\usepackage{amsmath}
\usepackage{amsthm}
\usepackage{cite}
\usepackage{amsfonts}
\usepackage{array}
\usepackage{booktabs}
\usepackage{colortbl}
\usepackage{mdwmath}
\usepackage{mdwtab}
\usepackage{eqparbox}
\usepackage[ruled,vlined]{algorithm2e}

% *** THEOREMS ***

\usepackage{amssymb}
\usepackage{amsthm}

\newtheorem{theorem}{Theorem}

\newtheorem{lemma}{Lemma}
\newtheorem{corollary}{Corollary}

\newtheorem{remark}{Remark}
%\declaretheorem[style=definition, qed=$\square$]{assumption}
\newtheorem{assumption}{Assumption}

% *** COLOR and COMMENTS
\usepackage{csquotes}
\usepackage[dvipsnames]{xcolor}
\usepackage{color}
\usepackage{color, soul}
\usepackage{xr}

% *** TABLES PACKAGES ***
\usepackage{array,ragged2e}
\usepackage{multirow}
\usepackage{hhline}
\usepackage{array}
\usepackage{booktabs}
\usepackage{colortbl}
\usepackage{mdwmath}
\usepackage{mdwtab}

\usepackage{footnote}
\makesavenoteenv{tabular}
\makesavenoteenv{table}

\newcommand{\figSize}[1]{\def\@figSize{#1}}
\figSize{3.11in} % *** write 2.5in one colum and 3.2 two columns ***

% correct bad hyphenation here
\hyphenation{net-works me-thods mode-lled mode-led shado-wing}

\begin{document}
\title{Geo-Location Based Access for Vehicular Communications: Analysis and Optimization via Stochastic Geometry}
% VTC: Analytical Modeling of Distributed Location Based Access for Vehicular Ad Hoc Networks
% title 1: Geo-location Based Access for Vehicular Ad Hoc Networks: Analysis and Optimization via Stochastic Geometry
% title 2: On Stochastic Geometry Modeling of Distributed Location Based Access for VANETs with a Minimum Inter-Vehicle Distance
%
\author{Francisco~J.~Martin-Vega, Beatriz~Soret, Mari~Carmen~Aguayo-Torres, \\Istvan~Z.~Kovacs and Gerardo~Gomez% <-this % stops a space
\thanks{F.~J.~Martin-Vega, M.~C.~Aguayo-Torres and G.~Gomez are with the Departamento
de Ingenier\'ia de Comunicaciones, Universidad de M\'alaga, M\'alaga 29071, Spain (e-mail: \{fjmvega, aguayo, ggomez\}@ic.uma.es)}% <-this % stops a space
\thanks{B.~Soret and I.~Z.~Kovacs are with Nokia Bell Labs, Aalborg, Denmark.}
%(e-mail: \{beatriz.soret, istvan.kovacs\}@nokia.com).}
\thanks{This work has been supported by the Spanish Government (Ministerio de Econom\'ia y Competitividad), under grants TEC2013-44442-P and TEC2016-80090-C2-1-R, the Universidad de M\'alaga and Nokia Bell Labs.}
%
%\thanks{Manuscript received XXX, XX, 2016; revised XXX, XX, 2016.}
%
\thanks{A small part of this manuscript has been submitted to Vehicular Technology Conference, spring  2017 \cite{Martin-Vega16b}.}
\thanks{This work has been submitted to the IEEE for possible publication. Copyright
may be transferred without notice, after which this version may no longer be
accessible.}}
\markboth{IEEE Transactions on Vehicular Technology,~Vol.~XX, No.~XX, XXX~2016}
{}

% make the title area
\maketitle

\begin{abstract}
Delivery of broadcast messages among vehicles for safety purposes, which is known as one of the key ingredients of Intelligent Transportation Systems (ITS), requires an efficient Medium Access Control (MAC) that provides low average delay and high reliability. To this end, a Geo-Location Based Access (GLOC) for  vehicles has been proposed for Vehicle-to-Vehicle (V2V) communications, aiming at maximizing the distance of co-channel transmitters while preserving a low latency when accessing the resources. In this paper we analyze, with the aid of stochastic geometry, the delivery of periodic and non-periodic broadcast messages with GLOC, taking into account path loss and fading as well as the random locations of transmitting vehicles. Analytical results include the average interference, average Binary Rate (BR), capture probability, i.e., the probability of successful message transmission, and Energy Efficiency (EE). Mathematical analysis reveals interesting insights about the system performance, which are validated thought extensive Monte Carlo simulations. In particular, it is shown that the capture probability is an increasing function with exponential dependence with respect to the transmit power and it is demonstrated that an arbitrary high capture probability can be achieved, as long as the number of access resources is high enough. Finally, to facilitate the system-level design of GLOC, the optimum transmit power is derived, which leads to a maximal EE subject to a given constraint in the capture probability. 
\end{abstract}

% Note that keywords are not normally used for peerreview papers.
\begin{IEEEkeywords}
VANETs, V2V, geo-location, MAC, reliability, stochastic geometry, Hard-Core Point Process
\end{IEEEkeywords}

% For peer review papers, you can put extra information on the cover
% page as needed:
% \ifCLASSOPTIONpeerreview
% \begin{center} \bfseries EDICS Category: 3-BBND \end{center}
% \fi
%
% For peerreview papers, this IEEEtran command inserts a page break and
% creates the second title. It will be ignored for other modes.
\IEEEpeerreviewmaketitle

\section{Introduction}
\label{sec:Introduction}
% Introducción a ITS:
% +++++++++++++++++++++++++++
\IEEEPARstart{E}{xisting}
 Intelligent Transportation Systems (ITS) mainly rely on cameras, roadside sensors, and variable-message signs to monitor and control the traffic. The high operational cost, limited effectiveness and reliability of such an approach have lead to a growing interest in Vehicular Ad Hoc Networks (VANETs) \cite{Festag14}, \cite{Karagiannis11}, which allows to implement more proactive tools for safety and traffic efficiency applications. The development of Global Positioning System (GPS) and other related techniques \cite{ Chen15} to achieve fine position tracking have also played a crucial role in the expectations that have been put on VANETs. 
 
% non-periodic and periodic broadcast messages
Within the safety applications that require direct Vehicle-to-Vehicle (V2V) communication, two types of broadcast messages can be identified: periodic and non-periodic messages \cite{Chen10}, \cite{Hartenstein08}.  Periodic messages aim at achieving vehicle awareness and consist on the periodic transmission of broadcast status messages, informing nearby vehicles of their position, velocity and direction. %, among other parameters. 
On the other hand, non-periodic, e.g. event-driven messages, are transmitted to respond to specific hazardous situations, like a sudden hard braking vehicle from other neighboring vehicles, the presence of emergency vehicles, or to undertake early countermeasures to prevent chain-reaction accidents \cite{Biswas06}. 

% MAC:
These broadcast messages require high reliability and low latency, thus making the design of the Medium Access Control layer (MAC) an issue of paramount importance. On the one hand, centralized solutions for MAC are associated with high reliability since collisions can be reduced. However, such solutions have two main drawbacks: (i) they are complex as they require an association procedure, control channels, and infrastructure; and (ii) they have an inherent mean delay. For these reason distributed solutions, like slotted ALOHA and Carrier Sense Multiple Access (CSMA) \cite{ElSawy13b}, are normally preferred in this scenario. 
%
% Estado de la técnica:
%In slotted ALOHA each node with data to be sent draws a Bernoulli random variable; with probability $p_\mathrm{ALOHA}$ the node access the channel whereas with probability $1-p_\mathrm{ALOHA}$ the node waits for a new attemp in the following time interval. Although it is possible to obtain the optimal value of $p_\mathrm{ALOHA}$ \cite{Baccelli09c}, 
Collisions are a limiting factor related to slotted ALOHA. Hence, ALOHA schemes evolved into listen-before-talk solutions, like CSMA, aiming to avoid collisions of nearby transmitting nodes. 
In this context, IEEE 802.11p, which makes use of CSMA as a MAC protocol, has been presented as an attractive solution for VANETs since it provides decentralized and ad hoc connectivity. % in a low-cost manner.

Nevertheless, 802.11p suffers from the main limitations related to 802.xx standards, such as poor scalability to high traffic density and poor support of high mobility \cite{Hameed14}. Hence, solutions based on 4G and 5G cellular networks come to the fore \cite{Lee16}. LTE V2X is the response of the 3GPP standardization body to the high market expectations and will use the same principles as those that are envisioned for Device-to-Device (D2D) communications. % in LTE-A. 
Here, instead of the traditional communication through the infrastructure, which may suffer from long delays, the local data exchange through the direct V2V path is preferred. With autonomous mode of operation, it is even possible for the devices to select the transmission resources without network involvement. Yet, the network has a key role in providing time and frequency synchronization.

% Aquí poner cita del paper de Beatriz subido al VTC
\subsection{Motivation and Related Work}
A novel MAC technique that makes use of the cellular network for synchronization purposes is Geo-Location based access (GLOC) \cite[Sect. 23.14.1.1. (support for V2X sidelink)]{3gpp2016b}, \cite{Soret16}. In this technique, vehicles access the channel based on its position. The road is divided into segments, where each segment is associated with a single orthogonal Access Resource (AR). The mapping between ARs and segments is made to maximize the co-channel distance.
In \cite{Soret16}, it is proposed GLOC for the discovery and communication phase in V2V communications. It is shown through simulation that the benefits of this technique are: (i) high reliability, since the distance to interfering vehicles can be increased with the number of ARs and (ii) MAC layer does not add any delay on accessing the channel, i.e., vehicles start their transmission once they have data to transmit.

% Trabajos relaccionados (Geometría estocástica):
% Building on \cite{Soret16}, 
Besides simulation based studies like \cite{Soret16}, analytical models can provide further insights about the inter-plays among reliability and binary rate as well as the number of ARs for medium access, or the transmit power. Additionally, mathematical analysis leads to expressions that can be evaluated quickly and allows to perform optimization of the most relevant performance metrics. 
Here is where stochastic geometry \cite{ElSawy13a, Haenggi09, Haenggi13} appears as a promising tool, since it allows tractable and realistic analysis due to the random nature of the location of transmitting vehicles in VANETs. 
For instance, \cite{Blaszczyszyn09} analyzes, with the aid of stochastic geometry, the capture probability, average throughput and mean density progress of transmitted packets for the case of unicast transmissions with ALOHA.  
In \cite{Farooq15}, CSMA for unicast multi-hop communications is considered with several routing strategies. It also considers multi-lane abstraction model which is more accurate than single-lane models for wide roads. 
The case of a head vehicle that broadcasts info and control messages to a sectorized cluster of client vehicles is considered in \cite{Jeong13a}. This work models the positions of vehicles as a Cox process whose density follows a Fox distribution; however, the interference caused by other transmitting vehicles is not taken into account. 
The spatial propagation of broadcast information is tackled in \cite{Wu09}; nevertheless, the signal propagation is neglected and it is assumed that transmission is always successful as long as the distance towards the receiver is smaller than a given distance. 
The performance of IEEE 802.11p is assessed with the aid of stochastic geometry and queuing theories in \cite{Tong16}. Here it is considered the temporal behavior of CSMA which adds a delay to access the system by means of a back-off counter. To account for the spatial dependence, which is derived from the carrier sensing, and also for the temporal behavior, which is derived from the back-off counter, a discrete 
Mat\'ern Hard Core Point Process (HCPP) is proposed to model the locations of concurrent transmitters. 

% Contribuciones:
\subsection{Main Contributions}

In this work, GLOC access technique is analyzed taking into account the velocity-dependent safe distance, $d_{\rm safe}$, between vehicles of the same lane. Such a safe distance imposes some correlation between locations of the vehicles, since there is not two neighboring vehicles closer than $d_{\rm safe}$. Hence, the location of vehicles in this paper is modeled by means of a Mat\'ern HCPP of type II. However, such a point process is generally intractable, and only some moments of the interference can be obtained without resorting to approximations \cite{Haenggi11}. To overcome such an intractability, we will use conditional thinning as in \cite{Martin-Vega16}. In simple terms, the locations of vehicles are first assumed to be placed according to a PPP of a given density. Then, spatial constrains (correlation) in the form of a minimum distance between points are imposed by means of an indicator function, but only in the proximity of the transmitter and the receiver. Additionally, it is considered that the length of the road is much higher than its width, and hence it is assumed that locations of the vehicles in each lane can be modeled as points in the real line. 
Based on these modeling assumptions, which are validated against extensive Monte Carlo simulations, we provide the following contributions:
 
%\begin{enumerate}
1) \textit{Mathematical framework for analysis of geo-location based access}: We propose a mathematical framework for the analysis of GLOC considering a minimum distance between vehicles of the same lane. Two kind of resource allocation schemes are considered: Single-Lane Partition (SLP) and Multi-Lane Partition (MLP), which have different trade-offs and mainly differ on whether 
lane-finding is required or not. With SLP, the road is divided in different segments, whereas with MLP, each lane is divided in segments. Both broadcast messages, i.e., periodic and non-periodic, are modeled to obtain a complete understanding about the capabilities of GLOC as a MAC for ITS. 
Additionally, system-level parameters like message size, reporting rate, broadcast distance, etc. are taken from recommentations of the 3GPP Work Items \cite{3gpp2016} and \cite{3gpp2015} to study the support of LTE for V2V services. 
The path loss slope and path loss exponent is taken from \cite{Karedal11} where it has been performed a vast V2V channel measurement campaign conducted in Sweden over a carrier frequency of $5.2$ GHz. 
Interestingly, the path loss exponent in V2V channels, $\alpha$, is normally smaller than $2$ \cite{Karedal11, Viriyasitavat15,Fernandez13}. This means that only one-dimensional PPPs can be considered%
\footnote{As it is mentioned in \cite{Haenggi13}, the Probability Generating Functional of the PPP in $\mathbb{R}^d$, with $d \in \mathbb{N}^+$, only exists for a path loss exponent, $\alpha>d$. Hence, if we consider two-dimensional PPPs, the mathematical analysis is restricted to the case $\alpha>2$.}. 
Finally, mathematical expressions for a wide variety of performance indicators  have been obtained, leading to a deep understanding of the studied techniques. In particular, the capture probability, the average interference, the average Binary Rate (BR) and the average Energy Efficiency (EE) are derived. 

2) \textit{Theoretical insights}: Many useful insights have been obtained from the derived expressions. Interestingly, it has been shown that: (i) the capture probability is an increasing function with respect to the transmit power with exponential dependence; (ii) the system is noise-limited for MLP when the number of ARs is high enough whereas it is interference-limited in case of SLP; (iii) the average interference diverges when it is evaluated in co-channel segments with SLP, whereas it always converges for the case of MLP. The fact that with MLP the system is noise-limited for a given number of ARs means that it is possible to achieve an arbitrary high capture probability by increasing the transmit power. 

3) \textit{Optimization}: The optimum transmit power that achieves maximal EE subject to a minimum capture probability is obtained. Such a minimum value is expressed as a percentage, $\delta$, of the maximum capture probability that can be achieved. Interestingly, the same optimal transmit power is obtained for SLP and MLP.
%\end{enumerate}

The rest of the paper is organized as follows. Section \ref{sec:System Model} presents the system model. The mathematical analysis and optimization are explained in Section \ref{sec:Performance Analysis}. Finally, numerical results are illustrated in Section \ref{sec:Numerical Results} whereas conclusions are drawn in Section \ref{sec:Conclusion}. 

% ++++++++++++++++++++++++++++++++++++++++++++
% Opcional el tema de la notación: o dejas este párrafo (lo que menos ocupa, o las tablas que hay más adelante a una y dos columnas)
% ++++++++++++++++++++++++++++++++++++++++++++
%\textbf{Notation}: Through this paper $\mathbb{E}[\cdot]$ stands for the expectation operator and $\Pr(\cdot)$ for the probability measure. $\mathbf{1} (\cdot)$ is the indicator function and $\mathfrak{b}_x (r)$ is the ball centered at $x$ with radius $r$. The first and second derivatives of $f(x)$ evaluated at $x_0$ are represented as $f'(x_0)$ and $f''(x_0)$. Random variables (RV) and events are represented with capital letters whereas lower case is reserved for deterministic values and parameters. If $X$ is a RV, $f_X(\cdot)$, $F_X(\cdot)$, $\bar{F}_X(\cdot)$ and $\mathcal{L}_X(\cdot)$ represent its probability density function (pdf), cumulative distribution function (cdf), complementary cdf (CCDF) and Laplace transform of its pdf respectively. $\Gamma(z)=\int_{0}^{\infty} t^{z-1} \mathrm{e}^{-t} \mathrm{d}t$ stands for the Euler gamma function, $\Gamma(a,z)=\int_{z}^{\infty} t^{a-1} \mathrm{e}^{-t} \mathrm{d}t$ represents the incomplete gamma function and $_2F_1(\cdot,\cdot,\cdot,\cdot)$ is the Gauss hypergeometric function defined in \cite{Abramowitz65} (Ch. 15).

\section{System Model}
\label{sec:System Model}

\begin{figure}[t]
\centering
\includegraphics[width=3.5in]{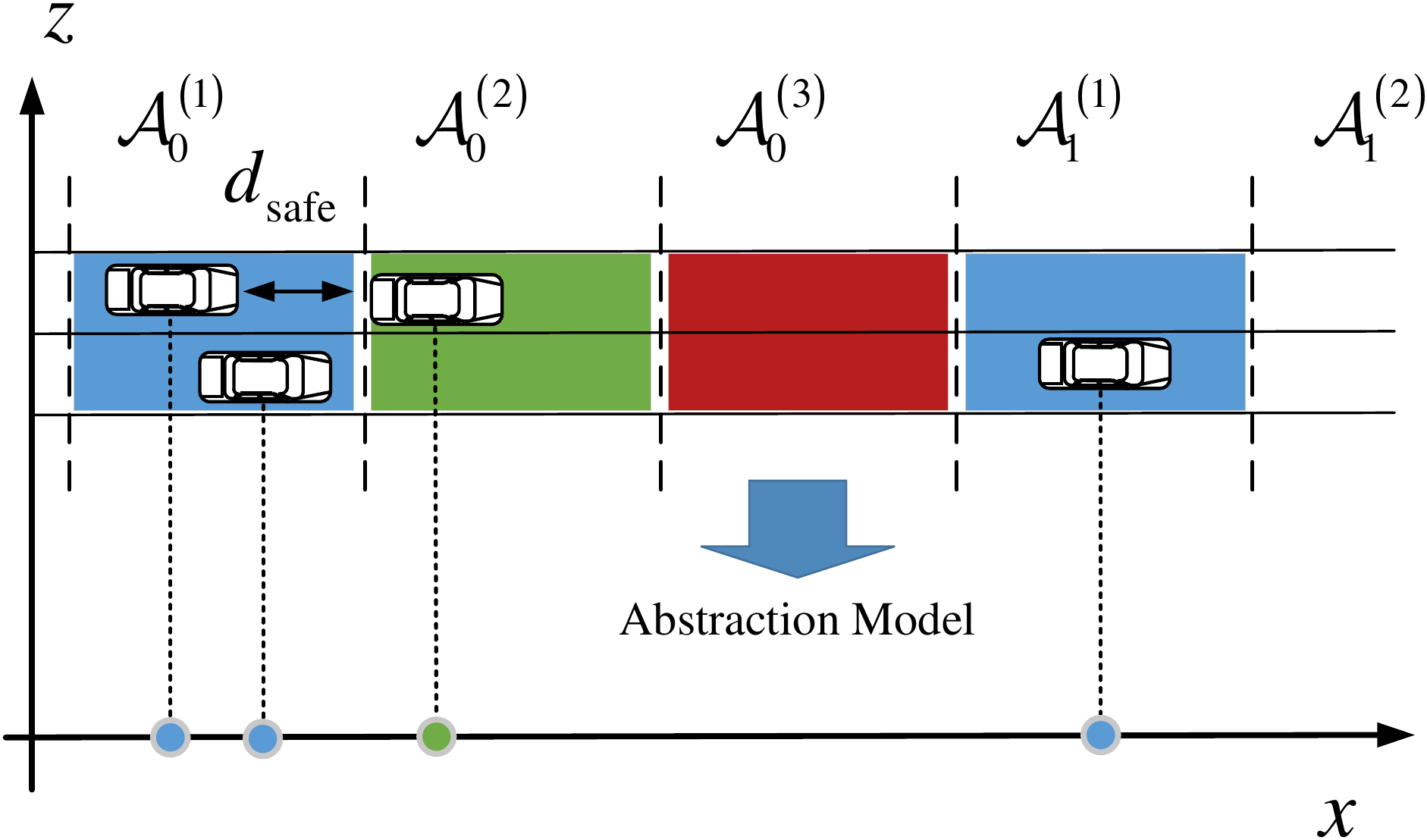}
\caption{Sketch of GLOC based access with SLP for a two-lane road and $3$ ARs. Different colors are associated to different ARs. On the bottom it is shown the abstraction model for the study of the road where positions of vehicles are treated as points in the real line.}
\label{fig:SLP}
\end{figure}

A straight road with $n_L$ lanes is considered as appears in figures \ref{fig:SLP} and \ref{fig:MLP} ($n_L=2$), where the length of the road is much greater than its width and thus the $z$ coordinate can be neglected. 
As pointed out in \cite{3gpp2016}, there is a velocity-dependent safe distance between vehicles of the same lane, referred to $d_{\rm safe}$. Hence, positions of vehicles within the same lane are assumed to follow a Mat\'ern HCPP of type II, $\Phi_L=\{\mathrm{V}_0, \mathrm{V}_1, \cdots\}\subset \mathbb{R} $, whose density is $\lambda_L$ and its minimum distance between points is $d_{\rm safe}$. The assumption of a minimum distance between vehicles leads to a maximum vehicle density per lane, which is 
$\lambda_{L, \rm max} = 1/\left( 2 d_{\rm safe}\right)$.
It is assumed that at a given time instant a vehicle has data to transmit with probability $p_\mathrm{a}$; hence the set of active vehicles $\Phi_L^{(\mathrm{a})}=\{\mathrm{VT}_0, \mathrm{VT}_1, \cdots\}$ is obtained through independent thinning from $\Phi_L$ with density $\lambda_L p_\mathrm{a}$. A summary of main symbols and functions is provided in Table \ref{tab:Symbols}.

\subsection{Resource Partition Schemes} 
With GLOC, the road is divided into segments of length $d_\mathcal{A}$ meters, and each segment is associated with a given orthogonal AR. The useful system bandwidth, $b_w$, is divided between the ARs. 
At a given time instant, each vehicle with data to transmit determines its current segment based on its position and then, it transmits with the mapped AR. The mapping between segments and ARs is made to maximize the co-channel distance.
Modeling the location of the vehicles randomly in terms of point processes allow us to treats the VANET as a snapshot of a stationary random field of communicating vehicles, where realizations of such point process are associated with different vehicle locations.
Hence, it is assumed that the segment does not change during the transmission of a single message. 
Besides, it is considered that the vehicles are aware of the mapping between segments and ARs, and the position and size of the segments. This can be achieved following the same process as specified in \cite[ Sect. 23.14.1.1. (support for V2X sidelink)]{3gpp2016b}. 
We propose two resource partition schemes, identified as SLP and MLP, that mainly differ on whether 
lane-finding is required or not. With SLP, the road is divided in different segments, whereas with MLP, each lane is divided in segments. 

The frequency allocation process of SLP is depicted as appears next:
\begin{enumerate}
\item The road is divided into segments of $d_{\cal A}$ meters. Each segment consists of $n_L$ lanes. 
\item A bandwidth of $b_w/n_{\rm AR}$ is allocated to each AR, where $n_{\rm AR}$ is the number of ARs.
\item The segments are grouped into consecutive clusters of $n_{\rm AR}$ segments. A single orthogonal AR is allocated to every segment within a given cluster. The mapping between segments and ARs is made with maximum co-channel distance criterion, aiming at minimizing the interference. 
\end{enumerate}
A sample of SLP scheme for $n_{\rm AR}=3$ and $n_L=2$ is shown in the top of Fig. \ref{fig:SLP}, whereas the mathematical abstraction model as a one-dimensional point process is illustrated at the bottom of the figure. In this case, each color (blue, green and red) represents a different AR whereas segments are represented as 
$\mathcal{A}_c^{(j)}$, where $j$ identifies the AR, and $c$ identifies the cluster. 

\begin{figure}[t]
\centering
\includegraphics[width=3.5in]{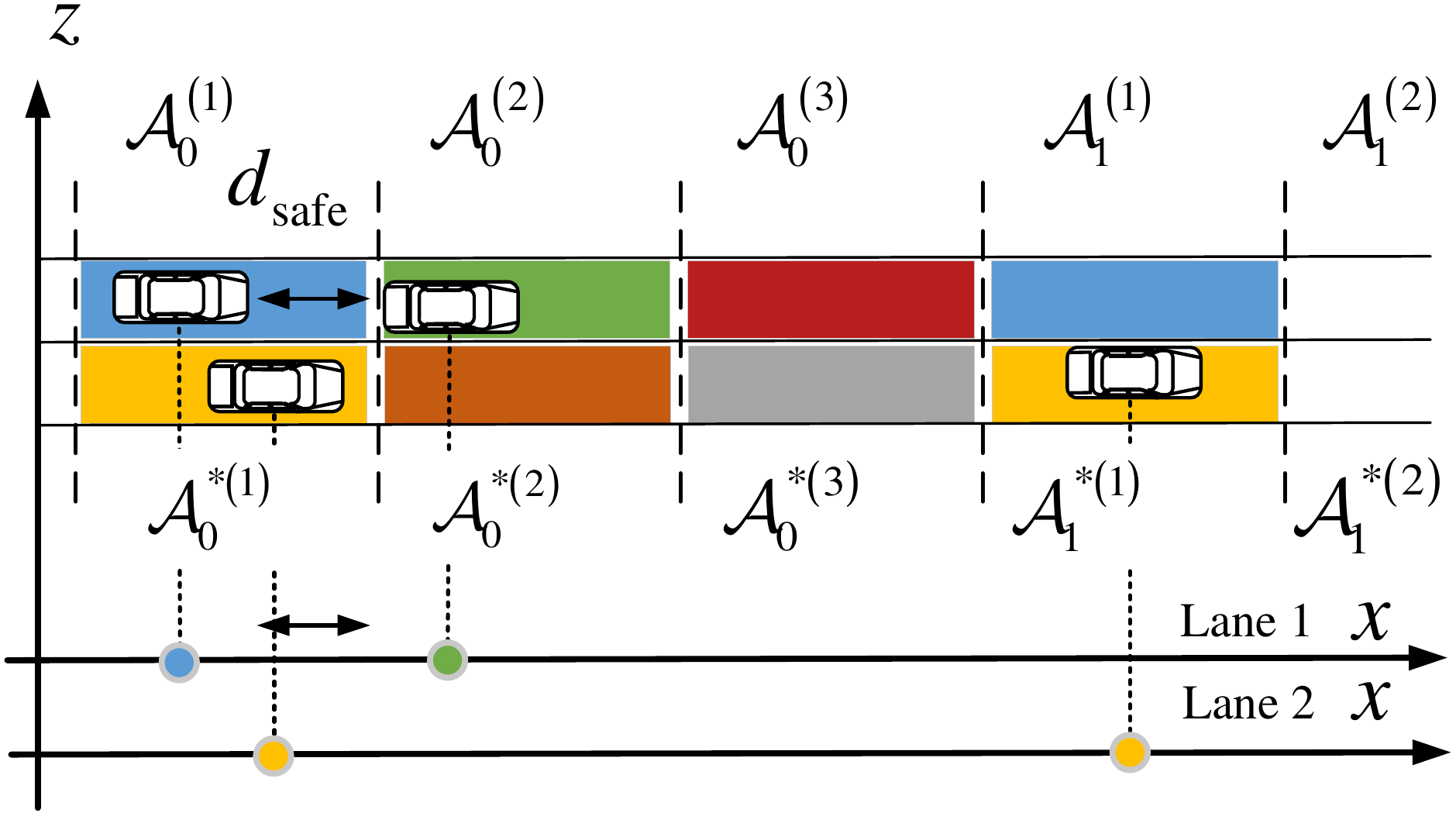}
\caption{Sketch of GLOC based access with MLP for a two-lane road and $6$ ARs. Different colors are associated to different ARs. On the bottom it is shown the abstraction model for the study of each lane where positions of vehicles are treated as points in the real line.}
\label{fig:MLP}
\end{figure}

On the other hand, MLP considers that each segment only contains a single lane. The process to allocate frequencies with MLP is described as follows:
\begin{enumerate}
\item The system bandwidth is equally divided among lanes. Therefore, there is $b_w/n_L$ Hz available for each lane and there is no interference among different lanes. 
\item Each lane is divided into segments of $d_{\cal A}$ meters. 
\item A bandwidth of $b_w/(n_{\rm AR} \cdot n_L)$ is allocated to each AR, where $n_{\rm AR}$ is the number of ARs per lane. Thus, the overall number of ARs is $n_L \cdot n_{\rm AR}$.
\item The segments of each lane are grouped in consecutive clusters of $n_{\rm AR}$ segments. A single orthogonal AR is allocated to every segment within a given cluster. The mapping among segments of the same lane and ARs is made with maximum co-channel distance criterion, aiming at minimizing the interference. 
\end{enumerate}
Fig. \ref{fig:MLP} illustrates a sample of MLP for $n_{\rm AR}=3$ and $n_L=2$. In this case 
$\mathcal{A}_c^{(j)}$ represents segments that belong to lane $1$ whereas $\mathcal{A}_c^{*(j)}$ represents segments related to lane $2$. The abstraction model for each lane is represented at the bottom of the figure. 

% +++++++++++++++++++++++++++++++++++++++++++++++++++++++++++
% TWO COLUMNS VERSION
\begin{table}
%\begin{table}[t]
\renewcommand{\arraystretch}{1.0}
\caption{Summary of main symbols and functions used throughout the paper.}
\label{tab:Symbols}
\centering
\begin{tabular}{ l l }
\toprule
{Symbol} & {Definition}  
\\ \toprule
%$f_X (\cdot),\bar F_X (\cdot)$ & PDF, CCDF of a random variable $X$ 
%\\ \hline
%${\cal L}_X (\cdot)$ & Laplace transform of a random variable $X$ 
%\\ \hline
%$f'\left( {x_0 } \right),f''\left( {x_0 } \right)$ & First and second derivatives of function $f(x)$ evaluated at $x=x_0$  
%\\ \hline
$\lfloor \cdot \rfloor$, $\lceil \cdot \rceil$ & Floor and ceil functions
\\ \hline
$_2F_1(\cdot,\cdot,\cdot,\cdot)$ & Gauss hypergeometric function defined in \cite{Abramowitz65} (Ch. 15).
\\ \hline
$\mathbb{E}\left[  \cdot\right]$
 & Expectation operator
\\ \hline
$\Pr \left( \cdot  \right)$
 & Probability measure 
\\ \hline
${\bf{1}}\left( \cdot \right)$
 & Indicator function 
%\\ \hline
%${_2 F_1} (\cdot,\cdot,\cdot,\cdot)$  & Gauss Hypergeometric function % as defined in \cite{Abramowitz65} (Ch. 15). 
\\ \hline
$\mathfrak{b}_x (r)$ & Ball centered at $x$ with radius $r$ 
\\ \hline 
$n_L$ & Number of lanes 
\\ \hline
$d_{\rm safe}$ & Safe distance between vehicles 
\\ \hline
$d_{\rm max}$ & Maximum communication distance
\\ \hline
 $\Phi_L, \lambda_L$ & HCPP that models locations of vehicles within the \\ &  same lane and its density
\\ \hline
$\Phi^{(a)}_L, \lambda^{(a)}_L$ & Thinned HCPP that models locations of active vehicles 
\\ &  within the same lane and its density
\\ \hline
$d_{\cal A}$ & Segment's length
\\ \hline
$b_w$ & Useful system's bandwidth
\\ \hline
$n_{\rm AR}$ & With SLP it is the number of orthogonal resources. With
\\ &  MLP it is the number of orthogonal resources per lane
%\\& 
\\ \hline
${\cal A}_c^{(j)}$ & Segment associated with the $j$-th AR within cluster $c$
\\ \hline
${\cal A}^{(j)}$ & Union of all the segments associated with the $j$-th AR
\\ \hline
$\tau,\alpha$ & Slope and exponent of the path loss function
\\ \hline
$\rho_{\rm VT}$ & Transmit power per Hz
\\ \hline
$b_{\rm AR}$ & Bandwidth of a single AR
\\ \hline
$\Phi,\lambda$ & Point process that models the location  of vehicles in the 
\\ &   abstraction model for SLP and MLP
\\ \hline
$\Phi^{(a)},\lambda^{(a)}$ & Thinned point process that model the location of active
\\ &   vehicles in the abstraction model for SLP and MLP
\\ \hline
${\rm VT_0}, {\rm AR_0}$ & Probe vehicle transmitter and its related AR
\\ \hline
${\rm VT_i}, H_{\rm VT_i}$ & Generic active vehicle and its fading towards the \\ &  probe receiver
\\ \hline
$\sigma _n^2 ,I$ & Noise power
%(including thermal noise and the noise figure)
and aggregate interference 
\\ \bottomrule
\end{tabular}
\end{table}

Each scheme has different pros and cons. With SLP it is not required to identify the lane in which the vehicle is traveling, which relaxes the requirement imposed to position estimation. On the other hand, MLP considers that vehicles are capable of estimating their position and also their current lane; however, this can be achieved using similar techniques as proposed in \cite{Chen15}. Additionally, the requisites imposed over position estimation for 5G are around $30$ cm, which assures lane-awareness \cite{5G-PPP2015}.
On the negative side, it can be noticed that SLP leads to a higher density of co-channel interfering vehicles, since each segment has several lanes. Additionally, the minimum distance towards the nearest interfering vehicle is reduced, since in this case an interfering vehicle could be located in the same location as the receiver in a different lane. This does not happen in case of MLP thanks to the minimum (inter-vehicle) safe distance. Nevertheless the bandwidth for each AR is lower in MLP, since the bandwidth is also divided among lanes. 

These differences have also implications on the mathematical modeling of SLP and MLP. In particular, with SLP there is not a minimum distance between vehicles, and hence the position of interfering vehicles can be modeled as a PPP. 
The bandwidth per AR and vehicle densities are also different, as summarized in Table 
\ref{tab:SLP and MLP}.

The analysis is performed for a typical transmitter, i.e., a randomly selected Vehicle Transmitter (VT). This transmitter is named the probe VT, and it is represented with symbol $\mathrm{VT}_0$. In this paper we made an abuse of notation since $\mathrm{VT}_0$ is used to represent the probe VT as well as its position in the real line. Analogously, its associated AR is the probe AR, which is denoted by 
$\mathrm{AR}_0$. 
Symbol $\mathcal{A}_c^{(j)}$ identifies the segment associated with $j$-th AR within cluster $c$. 
The set that represents all the segments associated with AR $j$ is represented as 
$
\mathcal{A}^{(j)}=\bigcup\limits_{c=-\infty }^{\infty }{\mathcal{A}_{c}^{(j)}}
$.
Fig. \ref{fig:abstractionModel} shows a sketch of the abstraction model, either for SLP or MLP. In case of SLP, this abstraction model is related to a given road, whereas in case of MLP it is related to a given lane. 
Without loss of generality, it is considered that the probe segment, $\mathcal{A}_0^{(\mathrm{AR_0})}$, is centered at the origin. 
Being the probe segment centered at the origin, the $c$-th co-channel segment, $\mathcal{A}_{c}^{(\mathrm{AR}_{0})}$, can be expressed as

\begin{align}
& \mathcal{A}_{c}^{(\mathrm{AR}_{0})}= \left\{y\in \mathbb{R}:c n_{\mathrm{AR}} d_{\mathcal{A}}-\frac{d_{\mathcal{A}}}{2}\le y \right.
% \nonumber \\ & \quad 
\left. <c n_{\mathrm{AR}}d_{\mathcal{A}}+\frac{d_{\mathcal{A}}}{2} \right\}
\end{align}

The $c$-th co-channel segment is centered around $c \cdot n_\mathrm{AR} \cdot d_\mathcal{A} $, with 
$c \in \mathbb{Z}$. 

\begin{table}
%\begin{table}[t]
\renewcommand{\arraystretch}{1.3}
\caption{Modeling Differences between SLP and MLP}
\label{tab:SLP and MLP}
\centering
\begin{tabular}{ c c c c c}
\toprule
Scheme & $\lambda$ & $b_{\rm AR}$ & Minimum dist. & Type of $\Phi$\\
\toprule
SLP & $\lambda_L \cdot n_L$ & $b_w/n_{\rm AR}$ & $0$ & PPP \\
\hline
MLP & $\lambda_L$ & $b_w/\left( n_{\rm AR} n_L \right)$ & $d_{\rm safe}$ & HCPP\footnote{Although the position of vehicles with MLP are modeled  as a HCPP, it is approximated by means of conditional thinning as a PPP as it is stated with \textbf{Assumption \ref{ass:HCPP}}.} \\
\bottomrule
\end{tabular}
\end{table}

\begin{figure}[t]
\centering
\includegraphics[width=3.5in]{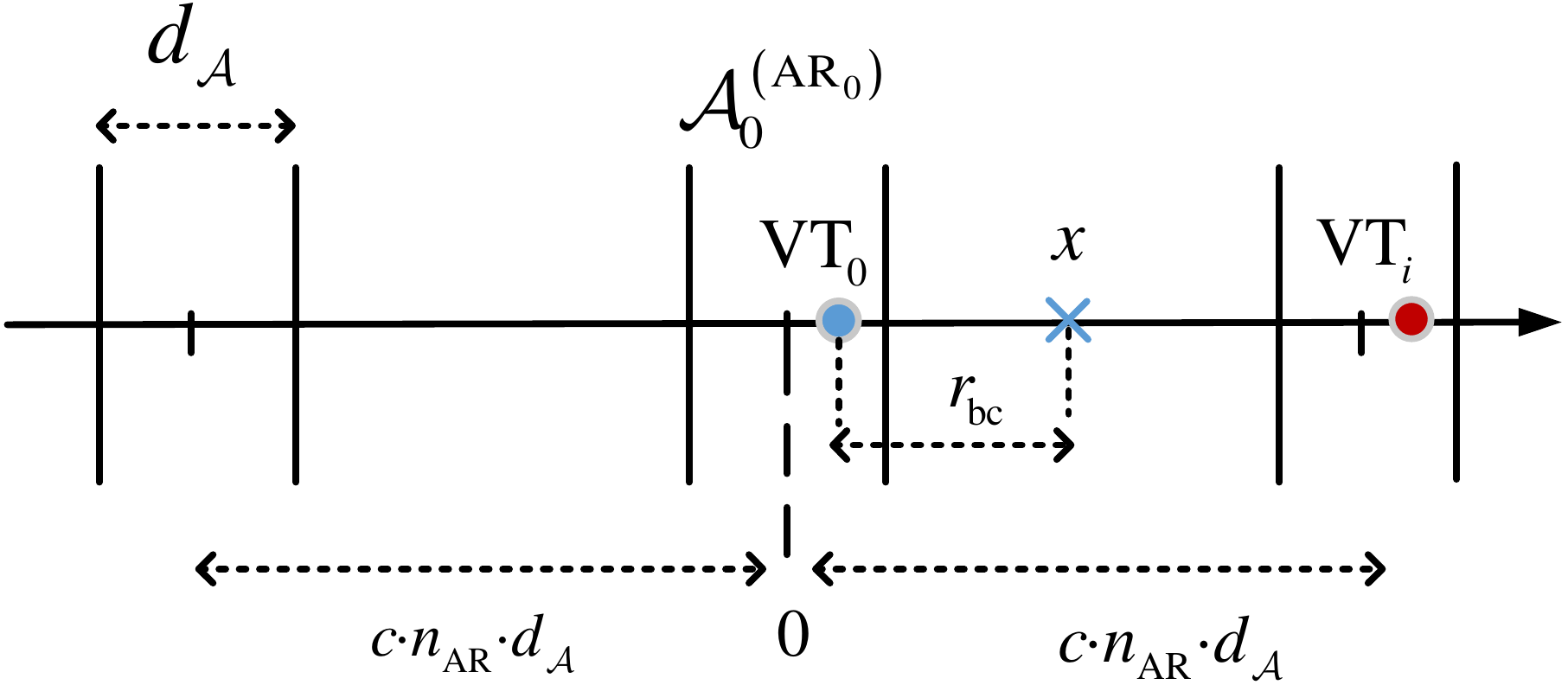}
\caption{Illustration of co-channel segments in the abstraction model. The probe transmitter is represented as $\mathrm{VT}_0$, the probe AR and segment as $\mathrm{AR}_0$ and $\mathcal{A}_{0}^{(\mathrm{AR}_0)}$ respectively, the probe receiver as a blue $x$ and a single interfering vehicle as $\mathrm{VT}_i$. The length of each segment is represented as $d_\mathcal{A}$ and thus $n_\mathrm{AR} \cdot d_\mathcal{A}$ is the minimum co-channel distance.}
\label{fig:abstractionModel}
\end{figure}

\subsection{Signal Modeling}
Transmitted signals suffer from Rayleigh fading, hence the channel power gain is exponentially distributed with unitary mean. Path loss is modeled through a path loss slope $\tau$ and a path loss exponent $\alpha$. Having a receiver placed at location $x$, the SINR can be expressed as follows

\begin{equation}
\label{eq:SINR}
\mathrm{SINR}\left( x \right)=\frac{H_{\mathrm{VT}_{0}}\left( \tau |\mathrm{VT}_{0}-x| \right)^{-\alpha }\rho_{\mathrm{VT}}}{I\left( x \right)+\sigma _{n}^{2}}
\end{equation}

\noindent where $|\cdot|$ the Euclidean distance, $H_{\mathrm{VT}_{0}}$ is the fading power gain between $\mathrm{VT}_0$ and the test receiver, $\rho_\mathrm{VT}$ is the transmit power per Hz, $\sigma_n^2$ is the noise power  and $I(x)$ the received interference at $x$. It is assumed that there is a maximum communication range given by $d_\mathrm{max}$, hence a receiver cannot detect signals from transmitters farther than $d_\mathrm{max}$. 
Thus the interference term can be expressed as follows

\begin{align}
\label{eq:I(x)}
  & I\left( x \right)=\sum\limits_{\mathrm{VT}_{i}\in \Phi ^{(\mathrm{a})}\setminus \left\{ \mathrm{VT}_{0} \right\}}{H_{\mathrm{VT}_{i}}\left( \tau |\mathrm{VT}_{i}-x| \right)^{-\alpha }} \nonumber \\ 
 & \times \rho_{\mathrm{VT}}\mathbf{1}\left( \mathrm{VT}_{i}\in \mathcal{A}^{(\mathrm{AR}_{0})} \right) \mathbf{1}\left( \mathrm{VT}_{i}\in \mathfrak{b}_x \left( d_{\max } \right) \right)
\end{align}

\noindent being $\Phi^{(a)}$ the set of active vehicles in the abstraction model; $\mathbf{1} (\cdot)$ the indicator function and 
$\mathfrak{b}_x (d_\mathrm{max})$ the ball centered at $x$ with radius $d_\mathrm{max}$. 
$H_{\mathrm{VT}_{i}}$ is the fading between $\mathrm{VT}_{i}$ and $x$. 
As it has been justified before, in case of SLP, $\Phi^{(a)}$ is a PPP obtained through independent thinning, with probability $p_a$, from $\Phi$. However, in case of MLP, $\Phi^{(a)}$ is obtained through independent thinning over $\Phi$, which is now a Mat\'ern HCPP of type II. 
Such point process is difficult to analyze because its probability generating functionals do not exist, \cite{Haenggi11}. Hence, the following assumption over the interference is proposed for the sake of tractability. 
\begin{assumption}
\label{ass:HCPP}
The interference term in case of MLP can be approximated as appears below
\begin{align}
  & I\left( x \right)=\sum\limits_{\mathrm{VT}_{i}\in \Phi ^{(\mathrm{a})}}{H_{\mathrm{VT}_{i}}\left( \tau |\mathrm{VT}_{i}-x| \right)^{-\alpha } \rho_{\mathrm{VT}}} \mathbf{1}\left( \mathrm{VT}_{i}\in \mathcal{A}^{(\mathrm{RB}_{0})} \right)
 \nonumber \\ & \quad
 \times \mathbf{1}\left( \left| \mathrm{VT}_{i}-x \right|>d_{\rm safe } \right)\mathbf{1}\left( \left| \mathrm{VT}_{i}-\mathrm{VT}_{0} \right|>d_{\rm safe } \right) 
\end{align}
\noindent where $\Phi^{(a)}$ is a PPP with density $\lambda_L p_a$. 
\end{assumption}

The reasoning behind \textbf{Assumption \ref{ass:HCPP}} is explained below. For tractability, it is assumed that $\Phi^{(a)}$ follows a PPP, instead of a thinned version of a HCPP that represents the locations of active vehicles within the same lane. 
The correlations in the actual point process are captured in the form of spatial constraints by means of a dependent thinning with two indicator functions. 
These constrains guarantee that there are no vehicles nearer than $d_{\rm safe}$ to the probe transmitter nor to the test receiver, which is placed at $x$. It should be noticed that this dependent thinning leads to a point process which is not a homogeneous PPP. 

\begin{remark}[Exact analysis]
\label{rem:Exact analysis}
In the forthcoming analysis: 
(i) the results for the SLP case are exact, since there is PPP as the generative process of the locations of active vehicles; and (ii) the results for the MLP case are approximations due to \textbf{Assumption \ref{ass:HCPP}}. 
\end{remark}

\subsection{Key Performance Indicators}
The capture probability represents the probability that a message is correctly received. Having a receiver placed at location $x$, it is expressed as the probability of the SINR being higher than a given threshold, $\gamma$, which is the CCDF of the SINR. 

We consider that each transmitter uses a fixed Modulation and Coding Scheme (MCS). Here, we use the same abtraction as in \cite{Cheung12, Lin15} to represent the goodput, or equivalently the BR of correctly received bits as

\begin{equation}
\label{eq:BR}
\mathrm{BR}\left( x \right)=\mathbf{1}\left( \mathrm{SINR}\left( x \right)>\gamma  \right)\cdot b_{\rm AR}
\cdot \log _{2}\left( 1+\gamma  \right)
\end{equation}

\noindent where $b_{\rm AR}$ represents the bandwidth associated with a given AR and it is given in Table \ref{tab:SLP and MLP} for SLP and MLP. On the other hand, $\gamma$, which is a system-level parameter, represents the SINR threshold associated with the considered MCS, where $\log_2 (1+\gamma)$ is its spectral efficiency in terms of bps/Hz. 

The EE is defined as the quotient between the BR and the transmit power in a given AR, which can be written in terms of b/J as follows 

\begin{equation}
\label{eq:EE}
\mathrm{EE} \left( x \right)=\frac{\log _{2}\left( 1+\gamma  \right)}{\rho _{\mathrm{VT}}}\mathbf{1}\left( \mathrm{SINR}(x)>\gamma  \right)
\end{equation}

\noindent where the bandwidth term, $b_w$, is canceled out since it appears in the definition of BR and also in the expression of the transmit power in a given AR. 

\subsection{Broadcast Message Types}
As it is mentioned in the introduction, there are two types of broadcast messages: non-periodic and periodic. For non-periodic messages, it is assumed that the probability of being active, i.e., with data to transmit, depends on traffic conditions and other related human issues and thus it is a fixed parameter. 
The case of periodic messages is different, since in this case, the probability of being active depends on the periodic rate and the time needed to transmit the message. This latter metric depends on the spectral efficiency of the MCS and also of the AR bandwidth, $b_{\rm AR}$. Hence, for the case of periodic messages, the probability of being active is expressed as follows

\begin{equation}
\label{eq:pa periodic}
p_{a}=\frac{m_{\mathrm{bc}}}{b_{\mathrm{AR}}t_{\mathrm{rep}}\log _{2}\left( 1+\gamma  \right)}
\end{equation}
\noindent where $t_{\mathrm{rep}}$ is the reporting latency, i.e., the time between two consecutive messages and $m_{\rm bc}$ is the message size in bits. It should be noticed that the time required to transmit the message, which is $m_{\rm bc}/\left( b_{\rm AR} \log_2 (1+\gamma) \right)$, cannot be higher than $t_{\rm rep}$. This imposes the following inequality over the above parameters that must be fulfilled
%
%\begin{equation}
%\label{eq:inequality pa periodic}
$m_{\mathrm{bc}}/\left( b_{\mathrm{AR}}\log _{2}\left( 1+\gamma  \right) \right)<t_{\mathrm{rep}}$.
%\end{equation}
%\left(\right)

\begin{remark}[Density of active vehicles]
\label{rem:Density of active vehicles}
In view of (\ref{eq:pa periodic}) and Table \ref{tab:SLP and MLP} it should be noticed that the density of active vehicles transmitting periodic messages, $\lambda \cdot p_a$, is the same for SLP and MLP schemes for the same $n_{\rm AR}$. 
\end{remark}

\section{Analysis of the Interference and Capture Probability}
\label{sec:Performance Analysis} 

In this section, main performance metrics related to Single-Lane Partition and Multi-Lane Partition are derived. 
Given the broadcast nature of the considered transmissions, a probe receiver placed at a distance 
$r_\mathrm{bc}$ from the probe transmitter, $\mathrm{VT_0}$, is considered. 
Hence, the metrics of interest - capture probability, average BR and average EE - are evaluated at $x=\mathrm{VT}_0 + r_\mathrm{bc}$.

\subsection{Single-Lane Partition (SLP)}
\label{sec:Analysis SLP}

\begin{figure*}[t]
\normalsize 
%\small
%\scriptsize
%\hrulefill
\begin{align}
\label{eq:kappa}
& \kappa \left( c,s,x \right)=\sum\limits_{j\in \{1,2\}}\mathbf{1}\left( \mu _{L}^{(j)}<\mu _{U}^{(j)} \right)
%\nonumber \\ & \quad \times 
\Bigg(\mu _{U}^{(j)} {_2F_1}\left( 1,\frac{1}{\alpha },1+\frac{1}{\alpha },\frac{\left( \tau \mu _{U}^{(j)} \right)^{\alpha }}{-s\cdot \rho _{\mathrm{VT}}} \right)
%\nonumber \\ & \quad 
-\mu _{L}^{(j)} {_2F_{1}}\left( 1,\frac{1}{\alpha },1+\frac{1}{\alpha },\frac{\left( \tau \mu _{L}^{(j)} \right)^{\alpha }}{-s\cdot \rho _{\mathrm{VT}}} \right) \Bigg)
\end{align}
\hrulefill
\end{figure*}

To obtain the capture probability, we first compute the Laplace transform the interference, which is given with the following lemma.

\begin{lemma}
\label{lem:SLP LTI}
With SLP, the Laplace transform of the interference evaluated at $x\in \mathbb{R}$ can be written as 
\begin{align}
\label{eq:LTI}
\mathcal{L}_{I\left( x \right)}\left( s \right)=\exp \left(-\lambda \cdot p_{a}\sum\limits_{c=-\left\lfloor d_{\max }/\left( n_{\mathrm{AR}}d_{\mathcal{A}} \right) \right\rfloor }^{\left\lceil d_{\max }/\left( n_{\mathrm{AR}}d_{\mathcal{A}} \right) \right\rceil }{\kappa \left( c,s,x \right)} \right)
\end{align}

\noindent where 
%$\lfloor \cdot \rfloor$ and $\lceil \cdot \rceil$ the floor and ceil functions respectively, where 
the function $\kappa\left( c,s,x \right)$\footnote{The dependence of functions $\mu _{L}^{(j)}\left( c,x \right)$ and $\mu _{U}^{(j)}\left( c,x \right)$ with $c$ and $x$ has not been written in (\ref{eq:kappa}), (\ref{eq:muL muU}) and (\ref{eq: LTI_1}) for convenience.} appears in (\ref{eq:kappa})
%, $_2F_1(\cdot,\cdot,\cdot,\cdot)$ is the Gauss hypergeometric function 
 and 
% function defined in \cite{Abramowitz65} (Ch. 15) and

\begin{align}
\label{eq:muL muU}
& \mu _{U}^{(1)}=\min \left( c\cdot n_{\mathrm{AR}}\cdot d_{\mathcal{A}}+\frac{d_{\mathcal{A}}}{2}-x,d_{\max } \right) \nonumber \\ 
& \mu _{L}^{(2)}=-\max \left( c\cdot n_{\mathrm{AR}}\cdot d_{\mathcal{A}}-\frac{d_{\mathcal{A}}}{2}-x,-d_{\max } \right) 
\nonumber  \\ &
\mu _{L}^{(1)}=\max \left( \mu _{L}^{(2)},0 \right)
; \; \mu _{U}^{(2)}=-\min \left( \mu^{(1)}_{U},0 \right)
\end{align}
\end{lemma}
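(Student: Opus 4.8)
The plan is to view the interference $I(x)$ in (\ref{eq:I(x)}) as a shot-noise functional driven by the PPP $\Phi^{(a)}$ of active vehicles (intensity $\lambda p_a$ on $\mathbb{R}$) carrying i.i.d.\ unit-mean exponential fading marks, and to compute $\mathcal{L}_{I(x)}(s)=\mathbb{E}[e^{-sI(x)}]$ via the probability generating functional (PGFL) of a PPP. Since the probe transmitter $\mathrm{VT}_0$ is itself a point of $\Phi^{(a)}$, the relevant expectation is under the Palm distribution of $\Phi^{(a)}$ given a point at $\mathrm{VT}_0$; by Slivnyak's theorem the reduced Palm distribution of a PPP coincides with the PPP, so the $\setminus\{\mathrm{VT}_0\}$ in (\ref{eq:I(x)}) merely restores an ordinary PPP and plays no further role. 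Setting $g(y)=(\tau|y-x|)^{-\alpha}\rho_{\mathrm{VT}}\,\mathbf{1}(y\in\mathcal{A}^{(\mathrm{AR}_0)})\,\mathbf{1}(|y-x|<d_{\max})$, the PGFL yields $\mathcal{L}_{I(x)}(s)=\exp\big(-\lambda p_a\int_{\mathbb{R}}(1-\mathbb{E}_H[e^{-sHg(y)}])\,dy\big)$.

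Next I would carry out the expectation over the unit-mean exponential fading: wherever the two indicator functions in $g$ are active, $\mathbb{E}_H[e^{-sHg(y)}]=(1+s\rho_{\mathrm{VT}}(\tau|y-x|)^{-\alpha})^{-1}$, and everywhere else $1-\mathbb{E}_H[e^{-sHg(y)}]$ vanishes identically. Hence the integral collapses to $\int_{\mathcal{A}^{(\mathrm{AR}_0)}\cap\,\mathfrak{b}_x(d_{\max})}\frac{s\rho_{\mathrm{VT}}(\tau|y-x|)^{-\alpha}}{1+s\rho_{\mathrm{VT}}(\tau|y-x|)^{-\alpha}}\,dy$. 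The domain is a finite disjoint union of intervals: only those co-channel segments $\mathcal{A}_c^{(\mathrm{AR}_0)}$ that intersect $[x-d_{\max},x+d_{\max}]$ contribute, and because $\mathcal{A}_c^{(\mathrm{AR}_0)}$ is centered at $c\,n_{\mathrm{AR}}d_{\mathcal{A}}$ this is exactly the index range $c\in\{-\lfloor d_{\max}/(n_{\mathrm{AR}}d_{\mathcal{A}})\rfloor,\dots,\lceil d_{\max}/(n_{\mathrm{AR}}d_{\mathcal{A}})\rceil\}$ appearing in (\ref{eq:LTI}).

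For a fixed $c$ I would substitute $u=y-x$ and split the shifted, ball-truncated segment $[\,c n_{\mathrm{AR}}d_{\mathcal{A}}-d_{\mathcal{A}}/2-x,\ c n_{\mathrm{AR}}d_{\mathcal{A}}+d_{\mathcal{A}}/2-x\,)\cap[-d_{\max},d_{\max}]$ into its part in $[0,\infty)$ and its part in $(-\infty,0]$, reflecting the latter through the origin so that in both pieces the integrand depends only on $|u|$; this is the source of the sum over $j\in\{1,2\}$, and the endpoints of the two pieces (after truncation by $d_{\max}$ and by the respective half-line) are precisely $\mu_L^{(j)},\mu_U^{(j)}$ in (\ref{eq:muL muU}), with $\mathbf{1}(\mu_L^{(j)}<\mu_U^{(j)})$ discarding an empty piece. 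On each piece the integrand equals $(1+\tau^\alpha u^\alpha/(s\rho_{\mathrm{VT}}))^{-1}$, and the standard antiderivative of $(1+au^\alpha)^{-1}$ is $u\,{}_2F_1(1,\tfrac1\alpha;1+\tfrac1\alpha;-au^\alpha)$, which one checks by differentiating the series $\sum_{k\ge0}(-au^\alpha)^k/(1+\alpha k)$ term by term. Taking $a=\tau^\alpha/(s\rho_{\mathrm{VT}})$, so that $-au^\alpha=(\tau u)^\alpha/(-s\rho_{\mathrm{VT}})$, the definite integral over $[\mu_L^{(j)},\mu_U^{(j)}]$ equals the bracketed difference of hypergeometric terms in (\ref{eq:kappa}); summing over $j$ produces $\kappa(c,s,x)$, and placing the sum over admissible $c$ inside the exponential gives (\ref{eq:LTI}).

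The PGFL step and the hypergeometric antiderivative are routine; the one place that needs real care — and the likely source of any sign or off-by-one slip — is the geometric bookkeeping, i.e.\ intersecting each co-channel segment with the communication ball and folding the absolute value, which is exactly what generates the somewhat intricate definitions of $\mu_L^{(j)},\mu_U^{(j)}$ and the case indicator in (\ref{eq:kappa}).
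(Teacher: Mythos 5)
Your proposal is correct and follows essentially the same route as the paper's proof: PGFL of the PPP with expectation over the exponential fading, decomposition of $\mathcal{A}^{(\mathrm{AR}_0)}$ into co-channel segments truncated by $\mathfrak{b}_x(d_{\max})$, the change of variables $t=y-x$ with the absolute value split into the two half-lines yielding $\mu_L^{(j)},\mu_U^{(j)}$, and evaluation of the resulting integrals via the ${}_2F_1$ antiderivative of $(1+au^{\alpha})^{-1}$. Your explicit invocation of Slivnyak's theorem for the removal of $\mathrm{VT}_0$ and your series verification of the hypergeometric antiderivative merely make explicit steps the paper leaves implicit.
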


\begin{proof}
The proof is given in Appendix \ref{app:Proof of lem SLP LTI}.
\end{proof}

\begin{theorem}
\label{theorem:CCDF SINR}
With SLP, the CCDF of the SINR, or equivalently the capture probability, at a distance $r_\mathrm{bc}$ from the typical vehicle transmitter, $\mathrm{VT}_0$, appears below

\begin{align}
\label{eq:CCDF SINR}
& \bar{F}_{\mathrm{SINR}\left( {\rm VT_0} + r_{\rm bc} \right)}\left( \gamma  \right)=\frac{{\mathrm{e}^{-\frac{\gamma \sigma_{n}^{2}}{\rho_{\mathrm{VT}}}  \left( \tau r_{\mathrm{bc}} \right)^{\alpha }}} }{d_{\mathcal{A}}}
\nonumber \\ & \quad \times 
\int\limits_{v=-\frac{d_{\mathcal{A}}}{2}}^{\frac{d_{\mathcal{A}}}{2}} \mathcal{L}_{I\left( {v + r_{\rm bc}} \right)}\left( \frac{\gamma \left( \tau r_{\mathrm{bc}} \right)^{\alpha }}{\rho_{\mathrm{VT}}} \right) \mathrm{d}v 
\end{align}

\noindent where $\mathcal{L}_{I\left( x \right)}\left( s\right)$ is the Laplace transform of the interference, which is given in \textbf{Lemma \ref{lem:SLP LTI}}, with $s = \frac{\gamma }{\rho_{\mathrm{VT}}}\left( \tau r_{\mathrm{bc}} \right)^{\alpha }$ and 
$x = \mathrm{VT}_0 + r_\mathrm{bc}$.
\end{theorem}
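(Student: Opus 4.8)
The plan is to apply the standard Rayleigh-fading argument: use the exponential law of $H_{\mathrm{VT}_0}$ to turn the capture event into an evaluation of the Laplace transform of the interference, and then average over the random position of the probe transmitter within its segment. First I would condition on the location of $\mathrm{VT}_0$. Since $\mathrm{VT}_0$ is a typical transmitter assigned to $\mathrm{AR}_0$ and its segment $\mathcal{A}_0^{(\mathrm{AR}_0)}$ is centered at the origin, its position is uniformly distributed on $[-d_{\mathcal{A}}/2,\,d_{\mathcal{A}}/2]$; conditioned on $\mathrm{VT}_0 = v$, the probe receiver is located at $x = v + r_{\mathrm{bc}}$, so that $|\mathrm{VT}_0 - x| = r_{\mathrm{bc}}$.

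Second, conditioning additionally on the interference $I(x)$ and using that $H_{\mathrm{VT}_0}$ is exponential with unit mean and independent of both $\Phi^{(\mathrm{a})}$ and the interferers' fadings, I would write
$$\Pr\big(\mathrm{SINR}(x) > \gamma \,\big|\, \mathrm{VT}_0 = v,\, I(x)\big) = \Pr\!\left(H_{\mathrm{VT}_0} > \frac{\gamma (\tau r_{\mathrm{bc}})^{\alpha}\,(I(x)+\sigma_n^2)}{\rho_{\mathrm{VT}}}\right) = \exp\!\left(-\frac{\gamma (\tau r_{\mathrm{bc}})^{\alpha}}{\rho_{\mathrm{VT}}}\big(I(x)+\sigma_n^2\big)\right).$$
Taking expectation over $I(x)$ then factors out the deterministic noise term $\exp\!\big(-\gamma \sigma_n^2 (\tau r_{\mathrm{bc}})^{\alpha}/\rho_{\mathrm{VT}}\big)$ and leaves $\mathbb{E}\big[\exp(-s\,I(x))\big] = \mathcal{L}_{I(x)}(s)$ with $s = \gamma (\tau r_{\mathrm{bc}})^{\alpha}/\rho_{\mathrm{VT}}$, which is exactly the quantity provided by \textbf{Lemma \ref{lem:SLP LTI}} evaluated at $x = v + r_{\mathrm{bc}}$.

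Finally, averaging over the uniform position $v\in[-d_{\mathcal{A}}/2,\,d_{\mathcal{A}}/2]$ yields the prefactor $1/d_{\mathcal{A}}$ together with the integral $\int_{-d_{\mathcal{A}}/2}^{d_{\mathcal{A}}/2}\mathcal{L}_{I(v+r_{\mathrm{bc}})}(s)\,\mathrm{d}v$, which is precisely (\ref{eq:CCDF SINR}). The main point requiring care — and the chief obstacle — is the conditioning on $\mathrm{VT}_0$ being a point of $\Phi^{(\mathrm{a})}$: by Slivnyak's theorem the reduced Palm distribution of a PPP coincides with its stationary law, so the interferers in $\Phi^{(\mathrm{a})}\setminus\{\mathrm{VT}_0\}$ still form the same PPP on $\mathcal{A}^{(\mathrm{AR}_0)}$, and hence the Laplace transform computed in \textbf{Lemma \ref{lem:SLP LTI}} is the correct one to substitute. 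A secondary subtlety worth flagging is that this Laplace transform genuinely depends on the observation point $x$, so the integration variable $v$ must be propagated into its argument rather than treated as fixed.
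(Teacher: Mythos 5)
Your proposal is correct and follows essentially the same route as the paper's proof: conditioning on the uniform position of $\mathrm{VT}_0$ in its segment, using the exponential fading to reduce the capture event to the Laplace transform of the interference at $s=\gamma(\tau r_{\mathrm{bc}})^{\alpha}/\rho_{\mathrm{VT}}$, factoring out the noise term, and averaging over $v$ to obtain the $1/d_{\mathcal{A}}$ integral. Your added remark on Slivnyak's theorem is a sound (and welcome) justification of a step the paper leaves implicit.
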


\begin{proof}
Since the probe transmitter is chosen at random from the set of active vehicles, its position inside the probe cluster, which is represented as $\mathrm{VT}_0$, is uniformly distributed within the interval $[-d_{\mathcal{A}}/2,d_{\mathcal{A}}/2)$. Hence, the CCDF of the SINR at 
$x=\mathrm{VT}_{0}+r_{\mathrm{bc}}$ can be written as

\begin{align}
& \bar{F}_{\mathrm{SINR}\left( \mathrm{VT}_0 + r_\mathrm{bc} \right)}\left( \gamma  \right)=\Pr \left( \mathrm{SINR}\left( \mathrm{VT}_0 + r_\mathrm{bc} \right)>\gamma  \right) 
\nonumber \\ & \, 
\overset{\mathrm{(a)}}{=} \mathbb{E}_{\mathrm{VT}_{0}}\left[ \Pr \left( H_{\mathrm{VT}_{0}}>
\frac{\gamma \left( \tau r_{\rm bc} \right)^{\alpha } }{\rho_{\mathrm{VT}}}\left( I\left( \mathrm{VT}_0 + r_\mathrm{bc} \right)+\sigma _{n}^{2} \right)
 \right) \right] 
\nonumber \\ & \, 
\overset{\mathrm{(b)}}{=} \mathbb{E}_{\mathrm{VT}_{0}}\mathbb{E}_{I}\left[ \mathrm{e}^{-\frac{\gamma }{\rho_{\mathrm{VT}}}\left( I\left(  \mathrm{VT}_{0}+r_{\mathrm{bc}} \right) +\sigma _{n}^{2} \right)\left( \tau r_{\mathrm{bc}} \right)^{\alpha }} \right] 
\nonumber \\ & \, 
\overset{\mathrm{(c)}}{=} 
\frac{\mathrm{e}^{-\frac{\gamma }{\rho_{\mathrm{VT}}}\sigma _{n}^{2}\left( \tau r_{\mathrm{bc}} \right)^{\alpha }}}{d_{\mathcal{A}}}\int\limits_{v=-d_{\mathcal{A}}/2}^{d_{\mathcal{A}}/2}\mathcal{L}_{I\left( v+r_{\rm bc} \right)}\left( \frac{\gamma }{\rho_{\mathrm{VT}}}\left( \tau r_{\mathrm{bc}} \right)^{\alpha } \right)\cdot \mathrm{d}v 
\end{align}

\noindent where (a) comes after reordering the expression of the SINR and applying the total probability theorem over position $\mathrm{VT}_0$; (b) after performing expectation over the fading and conditioning over the interference term and (c) after expressing the expectation over ${\rm VT}_0$ in integral form and identifying the Laplace transform of the interference. 
\end{proof}

%\begin{remark}[Exponential dependence]
%\label{rem:Dependence with noise and transmit power}
%In view of \textbf{Theorem \ref{theorem:CCDF SINR}}, it can be observed that the capture probability only depends on $\rho_{\rm VT}$ as $c^{\rm (SLP)}_1 \exp(-c_2/\rho_{\rm VT})$, which is an increasing function with respect to $\rho_{\rm VT}$, where $c_1$ and $c_2$ depend on other system parameters and are constants with respect to $\rho_{\rm VT}$. 
%%This is because the dependence with $\rho_{\rm VT}$ disappears when it is evaluated in $s=\frac{\gamma \left( \tau r_{\mathrm{bc}} \right)^{\alpha }}{\rho_{\mathrm{VT}}}$. 
%Analogously, the capture probability depends on the noise power, $\sigma^2_n$, as $c^{\rm (SLP)}_1 \exp(-c^{\triangle}_2 \sigma^2_n)$, which is a decreasing function with respect to $\sigma^2_n$. 
%\end{remark}

\begin{corollary}
\label{cor:nRB to inf}
The capture probability with SLP in the limiting case where $n_{\rm AR} \to \infty$ is given as follows
\begin{align}
& \lim\limits_{n_\mathrm{AR}\to \infty} \bar{F}_{\mathrm{SINR}({\rm VT_0}+r_{\rm bc})}\left( \gamma  \right)=\frac{e^{-\frac{\gamma \sigma _{n}^{2}}{\rho _{\mathrm{VT}}}\left( \tau r_{\mathrm{bc}} \right)^{\alpha }}}{d_{\mathcal{A}}}
\nonumber \\ & \quad
\int\limits_{v=-\frac{d_{\mathcal{A}}}{2}}^{\frac{d_{\mathcal{A}}}{2}}{\exp \left( -\lambda p_{a}\kappa \left( 0,\frac{\gamma \sigma _{n}^{2}}{\rho _{\mathrm{VT}}}\left( \tau r_{\mathrm{bc}} \right)^{\alpha },v+r_{\mathrm{bc}} \right) \right)} \mathrm{d}v
\end{align}
\end{corollary}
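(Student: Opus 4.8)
The plan is to feed the Laplace transform of \textbf{Lemma \ref{lem:SLP LTI}} into the capture-probability formula of \textbf{Theorem \ref{theorem:CCDF SINR}} and let $n_{\rm AR}\to\infty$. The key observation is that in (\ref{eq:CCDF SINR}) the dependence on $n_{\rm AR}$ is confined to $\mathcal{L}_{I(v+r_{\rm bc})}(\cdot)$, and there $n_{\rm AR}$ enters only through the summation limits $-\lfloor d_{\max}/(n_{\rm AR}d_{\mathcal{A}})\rfloor$ and $\lceil d_{\max}/(n_{\rm AR}d_{\mathcal{A}})\rceil$ of (\ref{eq:LTI}) and through $\kappa(c,s,x)$ via the quantities $\mu_L^{(j)},\mu_U^{(j)}$ of (\ref{eq:muL muU}). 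So the whole proof reduces to evaluating $\lim_{n_{\rm AR}\to\infty}\mathcal{L}_{I(v+r_{\rm bc})}(s)$ for a fixed $v$.

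First I would fix any $n_{\rm AR}>d_{\max}/d_{\mathcal{A}}$, so that $0<d_{\max}/(n_{\rm AR}d_{\mathcal{A}})<1$ and hence the sum in (\ref{eq:LTI}) runs over $c\in\{0,1\}$ only. For $c=0$ the term $c\,n_{\rm AR}d_{\mathcal{A}}$ disappears from (\ref{eq:muL muU}), so $\mu_L^{(j)},\mu_U^{(j)}$ --- and therefore $\kappa(0,s,v+r_{\rm bc})$ --- do not depend on $n_{\rm AR}$; they involve only $d_{\mathcal{A}}$, $v+r_{\rm bc}$ and $d_{\max}$. The key step is to show that the remaining term vanishes in the limit, i.e.\ $\kappa(1,s,v+r_{\rm bc})\to 0$. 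This is a consequence of the geometry underlying (\ref{eq:LTI})--(\ref{eq:kappa}): $\kappa(1,s,x)$ is the (integrated) interference contribution of the vehicles that lie simultaneously in the first co-channel segment $\mathcal{A}_{1}^{(\mathrm{AR}_0)}=[\,n_{\rm AR}d_{\mathcal{A}}-d_{\mathcal{A}}/2,\ n_{\rm AR}d_{\mathcal{A}}+d_{\mathcal{A}}/2\,)$ and in the ball $\mathfrak{b}_x(d_{\max})$. Since $v$ ranges over the compact set $[-d_{\mathcal{A}}/2,d_{\mathcal{A}}/2]$, the receiver location $x=v+r_{\rm bc}$ stays bounded while $\mathcal{A}_{1}^{(\mathrm{AR}_0)}$ recedes to $+\infty$; hence for all sufficiently large $n_{\rm AR}$ this segment is disjoint from $\mathfrak{b}_x(d_{\max})$, the corresponding interval of integration in (\ref{eq:kappa}) is empty, the indicators $\mathbf{1}(\mu_L^{(j)}<\mu_U^{(j)})$ are zero for both $j$, and $\kappa(1,s,v+r_{\rm bc})=0$. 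Consequently, for every fixed $v$,
\[
\lim_{n_{\rm AR}\to\infty}\mathcal{L}_{I(v+r_{\rm bc})}(s)=\exp\!\bigl(-\lambda p_a\,\kappa(0,s,v+r_{\rm bc})\bigr),
\]
with $s$ the argument already prescribed in (\ref{eq:CCDF SINR}).

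Finally, I would substitute this pointwise limit back into (\ref{eq:CCDF SINR}) and interchange the limit with the $v$-integral over $[-d_{\mathcal{A}}/2,d_{\mathcal{A}}/2]$; this is legitimate by dominated convergence, since $\mathcal{L}_{I(v+r_{\rm bc})}(s)\le 1$ uniformly and the interval is compact. Carrying the constant prefactor $\mathrm{e}^{-\frac{\gamma\sigma_n^2}{\rho_{\rm VT}}(\tau r_{\rm bc})^{\alpha}}$ through unchanged yields the claimed expression. The only non-routine point is the middle step --- tracking $\mu_L^{(j)},\mu_U^{(j)}$ in (\ref{eq:muL muU}) for $c=1$ and confirming that the indicators switch off once the nearest co-channel segment leaves the communication range --- whereas the $n_{\rm AR}$-independence of $\kappa(0,\cdot,\cdot)$ and the limit/integral interchange are immediate.
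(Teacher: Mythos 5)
Your proposal is correct and follows essentially the paper's own route: the paper's one-line proof states exactly that, as $n_{\rm AR}\to\infty$, the indicator $\mathbf{1}\left(y\in\mathfrak{b}_{x}\left(d_{\max}\right)\right)$ in (\ref{eq: LTI_1}) survives only for $c=0$, and you simply make explicit the recession of the $c\neq 0$ segments beyond $d_{\max}$, the $n_{\rm AR}$-independence of the $c=0$ term, and the (trivially justified) limit--integral interchange. One cosmetic remark: the argument $s=\gamma\left(\tau r_{\rm bc}\right)^{\alpha}/\rho_{\rm VT}$ you carry over from Theorem \ref{theorem:CCDF SINR} is the consistent one, whereas the corollary as printed shows an extra $\sigma_{n}^{2}$ inside the second argument of $\kappa$, which is a typo in the statement rather than a gap in your derivation.
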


\begin{proof}
The proof follows the fact that when $n_\mathrm{AR}\to\infty$ the indicator function given in (\ref{eq: LTI_1}), $\mathbf{1}\left( y\in \mathfrak{b}_{x}\left( d_{\max } \right) \right)$, is non zero only for $c=0$.  
\end{proof}

\begin{remark}[Intra-segment interference limited regime]
\label{rem:Intra-segment interference}
In view of Corollary \ref{cor:nRB to inf} it can be observed that the capture probability when $n_\mathrm{AR}$ tends to infinity is limited by the interference of the probe segment ($c=0$), which is related to those cases where an interfering vehicle is transmitting in the same segment as the probe vehicle transmitter. 
\end{remark}

It has been necessary to obtain the Laplace transform of the interference to compute the CCDF of the SINR. Besides, the Laplace transform of the interference is useful to obtain the average interference, which provides further insights. The following Lemma gives such result.

\begin{lemma}
\label{lemma:avI}
The average received interference at $x$, being the probe segment centered at the origin can be expressed as 

\begin{align}
\label{eq:avI}
& \mathbb{E}\left[ I\left( x \right) \right]=\frac{\lambda p_{a}\rho _{\mathrm{VT}}}{\alpha -1}\sum\limits_{c=-\left\lfloor d_{\max }/\left( n_{\mathrm{AR}}d_{\mathcal{A}} \right) \right\rfloor }^{\left\lceil d_{\max }/\left( n_{\mathrm{AR}}d_{\mathcal{A}} \right) \right\rceil }\sum\limits_{j\in \{1,2\}}
\nonumber \\ & \quad
\left( \mu _{L}^{(j)}\left( c,x \right)\left( \mu _{U}^{(j)}\left( c,x \right) \right)^{\alpha } \right.-\tau ^{-\alpha }\left( \mu _{U}^{(j)}\left( c,x \right) \right)^{1-\alpha }
\nonumber \\ & \quad
\times {\mathbf{1}\left( \mu _{L}^{(j)}\left( c,x \right)<\mu _{U}^{(j)}\left( c,x \right) \right)}
\end{align}

\end{lemma}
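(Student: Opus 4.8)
The plan is to obtain the first moment of $I(x)$ directly from Campbell's theorem for a Poisson point process, rather than by differentiating the Laplace transform of Lemma \ref{lem:SLP LTI}. With SLP the active interferers $\Phi^{(\mathrm{a})}$ form a homogeneous PPP of intensity $\lambda p_a$; the fading gains $H_{\mathrm{VT}_i}$ are i.i.d.\ with unit mean and independent of $\Phi^{(\mathrm{a})}$; and by Slivnyak's theorem the conditional law of $\Phi^{(\mathrm{a})}\setminus\{\mathrm{VT}_0\}$ given a point at $\mathrm{VT}_0$ is again the PPP of intensity $\lambda p_a$, so deleting the probe transmitter is immaterial for the mean. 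Campbell's formula then yields
\[
\mathbb{E}\!\left[ I(x) \right] = \lambda p_a \rho_{\mathrm{VT}} \int_{\mathbb{R}} \left( \tau |y-x| \right)^{-\alpha} \mathbf{1}\!\left( y \in \mathcal{A}^{(\mathrm{AR}_0)} \right) \mathbf{1}\!\left( |y-x| \le d_{\max} \right) \mathrm{d}y ,
\]
and the remaining work is purely the closed-form evaluation of this one-dimensional integral.

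Next I would use $\mathcal{A}^{(\mathrm{AR}_0)} = \bigcup_{c\in\mathbb{Z}} \mathcal{A}_c^{(\mathrm{AR}_0)}$, where $\mathcal{A}_c^{(\mathrm{AR}_0)}$ is the interval of length $d_{\mathcal{A}}$ centered at $c\, n_{\mathrm{AR}} d_{\mathcal{A}}$, so the integral becomes a sum over $c$. The factor $\mathbf{1}(|y-x|\le d_{\max})$ restricts $y$ to $[x-d_{\max},x+d_{\max}]$, which meets $\mathcal{A}_c^{(\mathrm{AR}_0)}$ only for finitely many $c$; taking the conservative window $c\in\{-\lfloor d_{\max}/(n_{\mathrm{AR}} d_{\mathcal{A}})\rfloor,\dots,\lceil d_{\max}/(n_{\mathrm{AR}} d_{\mathcal{A}})\rceil\}$ (as in Lemma \ref{lem:SLP LTI}) retains all nonzero terms, the spurious ones being killed by the non-emptiness indicator introduced below.

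For each $c$, I would split $\mathcal{A}_c^{(\mathrm{AR}_0)}\cap[x-d_{\max},x+d_{\max}]$ into its part with $y\ge x$ and its part with $y\le x$, and on each half substitute $u=|y-x|$. Since the integrand depends only on $|y-x|$, the contribution of segment $c$ becomes $\sum_{j\in\{1,2\}}\int_{\mu_L^{(j)}}^{\mu_U^{(j)}} (\tau u)^{-\alpha}\,\mathrm{d}u$, where $\mu_L^{(j)},\mu_U^{(j)}$ are precisely the distances from $x$ to the clamped endpoints of the relevant half — i.e.\ the $\min/\max$ expressions of (\ref{eq:muL muU}) — and $\mathbf{1}(\mu_L^{(j)}<\mu_U^{(j)})$ records whether that half is nonempty. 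Evaluating the elementary power-law integral, $\int_{\mu_L}^{\mu_U}(\tau u)^{-\alpha}\mathrm{d}u = \frac{\tau^{-\alpha}}{\alpha-1}\big(\mu_L^{\,1-\alpha}-\mu_U^{\,1-\alpha}\big)$ for $\alpha\neq 1$, and reinstating the prefactor $\lambda p_a\rho_{\mathrm{VT}}$, gives (\ref{eq:avI}).

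The delicate part is the geometric bookkeeping in the last step: correctly identifying, for every $c$ and every side of $x$, the clamped endpoints of the intersection and matching them to $\mu_L^{(j)},\mu_U^{(j)}$, together with the degenerate cases captured by the indicator. One should note that $\alpha>1$ is needed for finiteness, and that if $x$ lies inside a co-channel segment the corresponding lower limit is $\mu_L^{(j)}=0$, so that term — and hence $\mathbb{E}[I(x)]$ — diverges, which is the analytic counterpart of the intra-segment-interference-limited behavior noted earlier. As an alternative, the same result follows from $\mathbb{E}[I(x)] = -\frac{\mathrm d}{\mathrm d s}\mathcal{L}_{I(x)}(s)\big|_{s=0}$ via Lemma \ref{lem:SLP LTI}: since $\kappa(c,0,x)=0$ one gets $\mathbb{E}[I(x)]=\lambda p_a\sum_c \partial_s\kappa(c,0,x)$, and differentiating under the integral sign before the hypergeometric is formed reduces $\partial_s\kappa(c,0,x)$ to the very same integral $\rho_{\mathrm{VT}}\int_{\mu_L^{(j)}}^{\mu_U^{(j)}}(\tau u)^{-\alpha}\mathrm{d}u$; this avoids differentiating the Gauss hypergeometric factor in (\ref{eq:kappa}) explicitly but is otherwise equivalent to the direct route.
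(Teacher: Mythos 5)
Your proof is correct, but it takes a different primary route than the paper. The paper proves this lemma by exploiting Lemma~\ref{lem:SLP LTI}: it uses the moment-generating property $\mathbb{E}[I(x)]=-\frac{\mathrm{d}}{\mathrm{d}s}\mathcal{L}_{I(x)}(s)\big|_{s=0}$ and differentiates (\ref{eq:LTI}) at $s=0$, which is exactly the alternative you sketch in your last paragraph (including the observation $\kappa(c,0,x)=0$). Your main argument instead computes the mean directly via Campbell's theorem for the PPP (with Slivnyak disposing of the conditioning on $\mathrm{VT}_0$ and $\mathbb{E}[H]=1$ removing the fading), and then reuses the same geometric bookkeeping as Appendix~\ref{app:Proof of lem SLP LTI}: decomposition of $\mathcal{A}^{(\mathrm{AR}_0)}$ into co-channel segments, clamping by $d_{\max}$, splitting each segment at $x$, and evaluating $\int_{\mu_L^{(j)}}^{\mu_U^{(j)}}(\tau u)^{-\alpha}\mathrm{d}u$. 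What your route buys is independence from Lemma~\ref{lem:SLP LTI} and no need to differentiate (or even form) the Gauss hypergeometric representation, at the cost of redoing the interval bookkeeping that the Laplace-transform route inherits for free; the paper's route is shorter on paper precisely because that bookkeeping is already embedded in $\kappa$ and $\mu_L^{(j)},\mu_U^{(j)}$. One reassuring by-product of your derivation: it yields the per-term expression $\frac{\tau^{-\alpha}}{\alpha-1}\bigl((\mu_L^{(j)})^{1-\alpha}-(\mu_U^{(j)})^{1-\alpha}\bigr)$, which is also what differentiating (\ref{eq:LTI}) gives, so the term $\mu_L^{(j)}(c,x)\bigl(\mu_U^{(j)}(c,x)\bigr)^{\alpha}$ appearing in the displayed (\ref{eq:avI}) should be read as $\tau^{-\alpha}\bigl(\mu_L^{(j)}(c,x)\bigr)^{1-\alpha}$ (a typesetting slip in the lemma statement, not a gap in your argument); your remarks on $\alpha>1$ and on divergence when $\mu_L^{(j)}=0$ are likewise consistent with Remark~\ref{rem:Covergence of the interference}.
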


\begin{proof}
Using the fact that the Laplace transform can be used as a moment generating function, the average interference can be written as 
$\mathbb{E}\left[ I (x) \right]=-\left| \frac{\mathrm{d}}{\mathrm{d}s}\mathcal{L}_{I(x)}\left( s \right) \right|_{s=0}$. Hence the proof consists on obtaining the derivative of (\ref{eq:LTI}) and then particularizing for $s=0$. 

\end{proof}

\begin{remark}[Convergence of the interference]
\label{rem:Covergence of the interference}
In view of (\ref{eq:avI}) it can be stated that the average interference is only finite for $x \notin \mathfrak{b}_{c \cdot n_\mathrm{AR} \cdot d_\mathcal{A}} (d_\mathcal{A}/2)$, since for $x \in \mathfrak{b}_{c \cdot n_\mathrm{AR} \cdot d_\mathcal{A}}(d_\mathcal{A}/2)$ we have $\mu^{(2)}_U (c,x)=0$ which makes   the average interference tends to infinity. 
% This fact is due to the asymptote that the path loss model, $(\tau \cdot r)^{-\alpha}$, exhibit at $r=0$.
\end{remark}

%From the above results it can be observed that the numerical complexity is reasonable. It involves a single finite integral and a finite summation for either the CCDF of the SINR or average BR with a Gauss Hypergeometric special function. 
%
% Obtained results are evaluated numerically in next section in order to illustrate the benefits of GLOC. 

\subsection{Multi-Lane Partition (MLP)}
\label{sec:Analysis MLP}

\begin{figure*}[t]
\normalsize 
%\small
%\scriptsize
%\hrulefill
\begin{align}
\label{eq:zeta}
& \zeta \left( c,s,x \right)=\sum\limits_{i\in \{1,2\}} \sum\limits_{j\in \{1,2\}}\mathbf{1}\left( \mu _{L}^{(i,j)}<\mu _{U}^{(i,j)} \right)
%\nonumber \\ & \quad \times 
\Bigg(\mu _{U}^{(i,j)} {_2F_1}\left( 1,\frac{1}{\alpha },1+\frac{1}{\alpha },\frac{\left( \tau \mu _{U}^{(i,j)} \right)^{\alpha }}{-s\cdot \rho _{\mathrm{VT}}} \right)
%\nonumber \\ & \quad 
-\mu _{L}^{(i,j)} {_2F_{1}}\left( 1,\frac{1}{\alpha },1+\frac{1}{\alpha },\frac{\left( \tau \mu _{L}^{(i,j)} \right)^{\alpha }}{-s\cdot \rho _{\mathrm{VT}}} \right) \Bigg)
\end{align}
\hrulefill
\end{figure*}

The Laplace transform of the interference for the case of MLP is given by the following lemma. 

\begin{lemma}
\label{lem:MLP LTI}
In case of MLP, the Laplace transform of the interference evaluated at the probe receiver, placed at $x$ is given by
\begin{equation}
\mathcal{L}_{I\left( x \right)}\left( s \right)=\exp \left( -\lambda \cdot p_{a}\sum\limits_{c=-\left\lfloor d_{\max }/\left( n_{\mathrm{RB}}d_{\mathcal{A}} \right) \right\rfloor }^{\left\lceil d_{\max }/\left( n_{\mathrm{RB}}d_{\mathcal{A}} \right) \right\rceil }{\zeta \left( c,s,x \right)} \right)
\end{equation}
\noindent where 
$\zeta \left( c,s,x \right)$ is written in (\ref{eq:zeta}) and

\begin{align}
\label{eq:muijL muijU 1}
& \mu _{U}^{(1,1)}=\min \left( c\cdot n_{\mathrm{RB}}\cdot d_{\mathcal{A}}+\frac{d_{\mathcal{A}}}{2}-x,d_{\max } \right) 
\nonumber \\ 
& \mu _{L}^{(2,2)}=-\max \left( c\cdot n_{\mathrm{RB}}\cdot d_{\mathcal{A}}-\frac{d_{\mathcal{A}}}{2}-x,-d_{\max } \right) 
\nonumber \\ 
& \mu _{L}^{(1,1)}=\max \left( \mu _{L}^{(2,2)},d_{\mathrm{safe} },d_{\mathrm{safe} }-x+v \right) 
\nonumber \\ 
& \mu _{U}^{(2,1)}=\min \left( \mu _{U}^{(1,1)},v-x-d_{\mathrm{safe} } \right) % \nonumber
\nonumber \\ 
& \mu _{L}^{(2,1)}=\max \left( \mu _{L}^{(2,2)},d_{\mathrm{safe} } \right) 
\nonumber \\ 
& \mu _{L}^{(2,1)}=\max \left( \mu _{L}^{(2,2)},d_{\mathrm{safe} } \right) 
%end{align}
%%
%begin{align}
%\label{eq:muijL muijU 2}
\nonumber \\ 
& \mu _{U}^{(1,2)}=-\min \left( \mu _{U}^{(1,1)},-d_{\mathrm{safe} } \right) 
\nonumber \\ 
& \mu _{L}^{(1,2)}=-\max \left( \mu _{L}^{(2,2)},d_{\mathrm{safe} }-x+v \right) 
\nonumber \\ 
& \mu _{U}^{(2,2)}=-\min \left( \mu _{U}^{(1,1)},-d_{\mathrm{safe} },v-x-d_{\mathrm{safe} } \right) 
\end{align}
\end{lemma}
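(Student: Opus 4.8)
The plan is to compute the Laplace transform of the interference $I(x)$ for the MLP case following the same template as the proof of Lemma \ref{lem:SLP LTI}, but now accounting for the two additional indicator functions $\mathbf{1}(|\mathrm{VT}_i - x| > d_{\rm safe})$ and $\mathbf{1}(|\mathrm{VT}_i - \mathrm{VT}_0| > d_{\rm safe})$ introduced by \textbf{Assumption \ref{ass:HCPP}}. First I would start from the definition
\begin{equation*}
\mathcal{L}_{I(x)}(s) = \mathbb{E}\left[ \exp\left( -s \sum_{\mathrm{VT}_i \in \Phi^{(a)}} H_{\mathrm{VT}_i} (\tau|\mathrm{VT}_i - x|)^{-\alpha} \rho_{\mathrm{VT}} \, \mathbf{1}(\cdots) \right) \right],
\end{equation*}
apply the probability generating functional of the PPP $\Phi^{(a)}$ of intensity $\lambda p_a$ (valid since we are in the PPP regime by Assumption \ref{ass:HCPP}), and average over the i.i.d. exponential fading $H_{\mathrm{VT}_i}$ as in the SLP proof. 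This gives $\mathcal{L}_{I(x)}(s) = \exp\left( -\lambda p_a \int_{\mathbb{R}} \left(1 - \frac{1}{1 + s\rho_{\mathrm{VT}}(\tau|y-x|)^{-\alpha}} \right) g(y)\, \mathrm{d}y \right)$, where $g(y)$ is the product of the three indicator functions (co-channel membership, $|y-x|>d_{\rm safe}$, and $|y - \mathrm{VT}_0|>d_{\rm safe}$), evaluated with $\mathrm{VT}_0$ at the appropriate reference point. Note $\mathrm{VT}_0$ and $x$ differ: $x = \mathrm{VT}_0 + r_{\rm bc}$, and in the statement the dummy $v$ plays the role of $\mathrm{VT}_0$'s offset, consistent with the integral in Theorem \ref{theorem:CCDF SINR}.

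Next I would break the integral over $y$ into the union of co-channel segments $\mathcal{A}_c^{(\mathrm{AR}_0)}$, each centered at $c\, n_{\rm AR} d_{\mathcal{A}}$, restricted further to the ball $\mathfrak{b}_x(d_{\max})$; this produces the outer sum over $c$ from $-\lfloor d_{\max}/(n_{\rm AR} d_{\mathcal{A}})\rfloor$ to $\lceil d_{\max}/(n_{\rm AR} d_{\mathcal{A}})\rceil$. Within each segment, I would substitute $\mu = y - x$ (or $\mu = x - y$) so the integrand becomes $\frac{1}{1 + (\tau\mu)^\alpha/(s\rho_{\mathrm{VT}})}$, whose antiderivative is $\mu \cdot {_2F_1}(1, 1/\alpha, 1+1/\alpha, (\tau\mu)^\alpha/(-s\rho_{\mathrm{VT}}))$ — exactly the hypergeometric primitive already used in \eqref{eq:kappa}. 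The two indicator constraints $|y-x|>d_{\rm safe}$ and $|y-\mathrm{VT}_0|>d_{\rm safe}$ each carve a forbidden interval out of the domain of $\mu$, so the remaining region of integration is a union of subintervals; the four index pairs $(i,j) \in \{1,2\}^2$ bookkeep the sign of $\mu$ (whether $y$ is to the right, $j=1$, or left, $j=2$, of $x$) and which of the two exclusion zones is active ($i=1$ vs $i=2$). Writing down the endpoints of each surviving subinterval as max/min combinations of the segment boundaries $c\,n_{\rm AR}d_{\mathcal{A}} \pm d_{\mathcal{A}}/2 - x$, the cutoff $\pm d_{\max}$, and the safe-distance offsets $d_{\rm safe}$ and $d_{\rm safe} - x + v$, yields precisely the expressions for $\mu_L^{(i,j)}$ and $\mu_U^{(i,j)}$ in \eqref{eq:muijL muijU 1}; the indicator $\mathbf{1}(\mu_L^{(i,j)} < \mu_U^{(i,j)})$ in \eqref{eq:zeta} discards empty subintervals. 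Evaluating the hypergeometric antiderivative between these limits and summing over $(i,j)$ and $c$ gives $\zeta(c,s,x)$ and hence the claimed formula.

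The main obstacle I anticipate is the careful case analysis of how the two exclusion intervals around $x$ and around $\mathrm{VT}_0$ intersect the current co-channel segment: depending on $c$, on $r_{\rm bc}$, and on the position $v$ of $\mathrm{VT}_0$ within the probe segment, the forbidden zones may lie entirely inside one segment, straddle a boundary, overlap each other, or fall outside $\mathfrak{b}_x(d_{\max})$ altogether. Getting the endpoints right in all these configurations — and verifying that the compact max/min formulas in \eqref{eq:muijL muijU 1} correctly collapse to the empty interval (via the $\mathbf{1}(\mu_L < \mu_U)$ guard) in the degenerate cases — is the delicate part. Everything else is a direct transcription of the SLP computation: the PGFL step, the fading average, the hypergeometric primitive, and the segment decomposition are all identical in structure to the proof of Lemma \ref{lem:SLP LTI}, so I would present those steps briefly and devote the bulk of the argument to justifying the integration limits.
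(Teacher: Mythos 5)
Your proposal is correct and follows essentially the same route as the paper's proof in Appendix B: invoke Assumption \ref{ass:HCPP}, apply the PGFL of the PPP and average over the exponential fading, decompose $\mathcal{A}^{(\mathrm{AR}_0)}$ into co-channel segments truncated by $\mathfrak{b}_x(d_{\max})$, change variables to $t=y-x$, split according to the sign of $t$ and the two half-lines induced by $\mathbf{1}(|t+x-v|>d_{\rm safe})$, and evaluate each surviving subinterval with the ${}_2F_1$ primitive, the guard $\mathbf{1}(\mu_L^{(i,j)}<\mu_U^{(i,j)})$ handling empty intervals. You also correctly single out the integration-limit bookkeeping as the only delicate step, which is exactly where the paper's proof does its work.
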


\begin{proof}
The proof is given in Appendix \ref{app:Proof of lem MLP LTI}. 
\end{proof}

\begin{corollary}
In the special case, $d_{\mathcal{A}} < d_{\rm safe} < (n_{\rm AR} - 1) d_{\cal A}$ and $|x| < n_{\rm AR} d_{\cal A}/2$, the Laplace transform of the interference can be simplified into the following expression
\begin{equation}
\mathcal{L}_{I\left( x \right)}\left( s \right)=\exp \left( -\lambda \cdot p_{a}\sum\limits_{c=-\left\lfloor \frac{d_{\max }}{ n_{\mathrm{AR}}d_{\mathcal{A}} } \right\rfloor }^{\left \lceil \frac{d_{\max }}{ n_{\mathrm{RB}}d_{\mathcal{A}} } \right\rceil }{\kappa \left( c,s,x \right)}\mathbf{1}\left( c\ne 0 \right) \right)
\end{equation}
\noindent where $\lambda=\lambda_L$ for the MLP case (Table \ref{tab:SLP and MLP}). 
\end{corollary}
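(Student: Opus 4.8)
The plan is to observe that, by the very computations behind Lemma~\ref{lem:SLP LTI} and Lemma~\ref{lem:MLP LTI}, both $\kappa(c,s,x)$ and $\zeta(c,s,x)$ are the same integral --- the one produced by the probability generating functional of the PPP $\Phi^{(\mathrm a)}$ with exponential marks --- only taken over different subsets of the $c$-th co-channel segment. Writing $g_s(u):=1-\bigl(1+s\rho_{\mathrm{VT}}(\tau u)^{-\alpha}\bigr)^{-1}$, one has
\[
\kappa(c,s,x)=\int_{\mathcal{A}_c^{(\mathrm{AR}_0)}\cap\,\mathfrak{b}_x(d_{\max})}g_s\bigl(|y-x|\bigr)\,\mathrm{d}y ,
\]
whereas, by Assumption~\ref{ass:HCPP},
\[
\zeta(c,s,x)=\int_{\mathcal{A}_c^{(\mathrm{AR}_0)}\cap\,\mathfrak{b}_x(d_{\max})\,\setminus\,\bigl(\mathfrak{b}_x(d_{\mathrm{safe}})\cup\mathfrak{b}_{\mathrm{VT}_0}(d_{\mathrm{safe}})\bigr)}g_s\bigl(|y-x|\bigr)\,\mathrm{d}y ,
\]
the Gauss--hypergeometric formulas in (\ref{eq:kappa}) and (\ref{eq:zeta}) being just closed forms of these integrals after the substitution $u=|y-x|$, with the index $j$ (resp.\ the pair $(i,j)$) labelling the at most two (resp.\ four) sub-intervals of distances that survive the $d_{\max}$-truncation and the two exclusion balls. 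Hence the corollary reduces to two elementary facts about intervals: (A) the MLP integration region is empty when $c=0$, so $\zeta(0,s,x)=0$; and (B) the set $\mathfrak{b}_x(d_{\mathrm{safe}})\cup\mathfrak{b}_{\mathrm{VT}_0}(d_{\mathrm{safe}})$ does not meet $\mathcal{A}_c^{(\mathrm{AR}_0)}$ for any $c\neq0$, so $\zeta(c,s,x)=\kappa(c,s,x)$ there. Substituting these into Lemma~\ref{lem:MLP LTI} and recalling $\lambda=\lambda_L$ for MLP (Table~\ref{tab:SLP and MLP}) gives the displayed expression.

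For (A): the probe transmitter lies in the probe segment $\mathcal{A}_0^{(\mathrm{AR}_0)}=[-d_{\mathcal{A}}/2,d_{\mathcal{A}}/2)$, which has length $d_{\mathcal{A}}<d_{\mathrm{safe}}$; therefore $|y-\mathrm{VT}_0|<d_{\mathcal{A}}<d_{\mathrm{safe}}$ for every $y\in\mathcal{A}_0^{(\mathrm{AR}_0)}$, i.e.\ $\mathcal{A}_0^{(\mathrm{AR}_0)}\subset\mathfrak{b}_{\mathrm{VT}_0}(d_{\mathrm{safe}})$. The MLP integration region for $c=0$ is thus empty and $\zeta(0,s,x)=0$, which is precisely the factor $\mathbf{1}(c\neq0)$ in the claim; the same conclusion is visible in (\ref{eq:muijL muijU 1}), where $c=0$ forces $\mu_L^{(i,j)}\ge\mu_U^{(i,j)}$ for all $i,j$ and hence kills every indicator in (\ref{eq:zeta}).

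For (B): fix $c\neq0$, so every $y\in\mathcal{A}_c^{(\mathrm{AR}_0)}$ obeys $|c|n_{\mathrm{AR}}d_{\mathcal{A}}-d_{\mathcal{A}}/2\le|y|<|c|n_{\mathrm{AR}}d_{\mathcal{A}}+d_{\mathcal{A}}/2$. Since $\mathrm{VT}_0\in(-d_{\mathcal{A}}/2,d_{\mathcal{A}}/2)$, such a $y$ satisfies $|y-\mathrm{VT}_0|>|c|n_{\mathrm{AR}}d_{\mathcal{A}}-d_{\mathcal{A}}\ge(n_{\mathrm{AR}}-1)d_{\mathcal{A}}>d_{\mathrm{safe}}$, so $\mathfrak{b}_{\mathrm{VT}_0}(d_{\mathrm{safe}})\cap\mathcal{A}_c^{(\mathrm{AR}_0)}=\emptyset$; the analogous estimate with $x$ in place of $\mathrm{VT}_0$, using $|x|<n_{\mathrm{AR}}d_{\mathcal{A}}/2$, gives $\mathfrak{b}_x(d_{\mathrm{safe}})\cap\mathcal{A}_c^{(\mathrm{AR}_0)}=\emptyset$ as well. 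Consequently the two dependent-thinning indicators of Assumption~\ref{ass:HCPP} are identically $1$ over each co-channel segment with $c\neq0$, the MLP and SLP integration regions coincide there, and $\zeta(c,s,x)=\kappa(c,s,x)$; at the level of (\ref{eq:muijL muijU 1}) this is just the statement that the $d_{\mathrm{safe}}$-dependent entries of the $\mu^{(i,j)}$ become dominated by the segment endpoints, so that $\zeta$ collapses term by term onto $\kappa$ in (\ref{eq:kappa})--(\ref{eq:muL muU}).

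The only genuine work is the interval bookkeeping in (B): one has to run the distance estimates through every sign configuration ($c>0$ or $c<0$, and $x$ on either side of the origin), be careful with the half-open endpoints of the segments, and confirm that a sliver of a co-channel segment deleted by $\mathfrak{b}_x(d_{\mathrm{safe}})$ cannot reappear through the truncation ball $\mathfrak{b}_x(d_{\max})$ --- which is immediate in the regime of interest since $d_{\mathrm{safe}}<(n_{\mathrm{AR}}-1)d_{\mathcal{A}}\le d_{\max}$. The three hypotheses $d_{\mathcal{A}}<d_{\mathrm{safe}}$, $d_{\mathrm{safe}}<(n_{\mathrm{AR}}-1)d_{\mathcal{A}}$ and $|x|<n_{\mathrm{AR}}d_{\mathcal{A}}/2$ are exactly what make (A) and (B) go through (should the stated range of $x$ prove marginally too generous, restricting $x$ to the probe segment --- the only case needed for the capture-probability formula --- suffices); once they hold, the corollary is a verbatim specialization of the proofs of Lemmas~\ref{lem:SLP LTI} and \ref{lem:MLP LTI} in Appendices~\ref{app:Proof of lem SLP LTI} and \ref{app:Proof of lem MLP LTI}.
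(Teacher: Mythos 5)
Your argument is correct and is essentially the paper's own proof written out at the level of the integration regions: the paper likewise observes that $d_{\mathcal{A}}<d_{\rm safe}$ makes $\mathbf{1}\left( \left| \mathrm{VT}_{i}-\mathrm{VT}_{0} \right|>d_{\rm safe} \right)$ vanish on the probe segment (no intra-segment interference, hence the factor $\mathbf{1}(c\neq 0)$) and equal to one on every co-channel segment because $d_{\rm safe}<(n_{\rm AR}-1)d_{\mathcal{A}}$, while $|x|<n_{\rm AR}d_{\mathcal{A}}/2$ is invoked to trivialize $\mathbf{1}\left( \left| \mathrm{VT}_{i}-x \right|>d_{\rm safe} \right)$, so the Laplace transform collapses to the SLP form with $\lambda=\lambda_L$. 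Your more explicit bookkeeping even surfaces the one delicate point --- for $x$ near $n_{\rm AR}d_{\mathcal{A}}/2$ the distance to the nearest co-channel segment is only guaranteed to exceed $(n_{\rm AR}-1)d_{\mathcal{A}}/2$, not $(n_{\rm AR}-1)d_{\mathcal{A}}$ --- a looseness equally present in the paper's one-line justification, and your hedge of restricting $x$ to the probe segment is a legitimate repair.
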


\begin{proof}
The proof consists on noticing that in the case where $d_{\mathcal{A}} < d_{\rm safe}< (n_{\rm AR} - 1) d_{\cal A}$ holds, then, the indicator function $\mathbf{1}(|{\rm VT_i}-{\rm VT_0}|>d_{\rm safe})$ is equal to $0$ if ${\rm VT_i} \in [-d_{\cal A}/2,d_{\cal A}/2)$ and $1$ otherwise. This means that there is no interfering vehicles inside the probe segment. 
Additionally, 
if $|x| < n_{\rm AR} d_{\cal A}/2$, then $\mathbf{1}(|{\rm VT_i}-x|>d_{\rm safe})=1$. 
Hence, in view of \textbf{Assumption \ref{ass:HCPP}}, the analysis is analogous to the case of SLP, but taking into account that there is no intra-segment interference, which is captured in the indicator function $\mathbf{1}(c \neq 0)$. 
\end{proof}

The next theorem gives the capture probability with MLP. 

\begin{theorem}
\label{theorem:MLP CCDF SINR}
The capture probability of a probe receiver placed at a distance $r_{\rm bc}$ from the transmitter with MLP is

\begin{align}
& \bar{F}_{\mathrm{SINR}\left( \mathrm{VT}_{0}+r_{\mathrm{bc}} \right)}\left( \gamma  \right)=
\mathrm{e}^{-\frac{\gamma }{p_{\mathrm{VT}}}\sigma _{n}^{2}\left( \tau r_{\mathrm{bc}} \right)^{\alpha }}
\int\limits_{v=-d_{\mathcal{A}}/2}^{d_{\mathcal{A}}/2}
\nonumber \\ & \quad
\times \frac{\mathbf{1}\left( v\notin \mathfrak{b}_{v+r_{\mathrm{bc}}}\left(d_{\mathrm{safe} } \right) \right)}{\left| \mathcal{D}\left( v+r_{\mathrm{bc}} \right) \right|} \mathcal{L}_{I\left( v+r_{\mathrm{bc}} \right)}\left( \frac{\gamma }{p_{\mathrm{VT}}}\left( \tau r_{\mathrm{bc}} \right)^{\alpha } \right)\cdot \mathrm{d}v
\end{align}
\noindent where $\mathcal{L}_{I(x)}(s)$ is given in \textbf{Lemma \ref{lem:MLP LTI}} and $\left| \mathcal{D}\left( x \right) \right|$, which represents the Lebesgue measure of the relative complement of the interval $[-d_{\cal A}/2,d_{\cal A}/2)$ with respect to the set 
$\mathfrak{b}_{x}\left( d_{\mathrm{safe} } \right)$, and it is written as 

\begin{equation}
\left| \mathcal{D}\left( x \right) \right|=\int\limits_{w=-d_{\mathcal{A}}/2}^{d_{\mathcal{A}}/2}{\mathbf{1}\left( v\notin \mathfrak{b}_{x}\left( d_{\mathrm{safe}} \right) \right)}\cdot \mathrm{d}w
\end{equation}

%\begin{align}
%& \left| \mathcal{D}\left( x \right) \right|=\mathbf{1}\left( -\frac{d_{\mathcal{A}}}{2}<\min \left( \frac{d_{\mathcal{A}}}{2},x-d_{\mathrm{safe}} \right) \right)
%\nonumber \\ & \quad
%\times \left( \min \left( \frac{d_{\mathcal{A}}}{2},x-d_{\mathrm{safe}} \right)+\frac{d_{\mathcal{A}}}{2} \right) 
%%
%\nonumber \\ & \quad
%+\mathbf{1}\left( \max \left( -\frac{d_{\mathcal{A}}}{2},x+d_{\mathrm{safe}} \right)<\frac{d_{\mathcal{A}}}{2} \right)
%\nonumber \\ & \quad \times
%\left( \frac{d_{\mathcal{A}}}{2}-\max \left( -\frac{d_{\mathcal{A}}}{2},x+d_{\mathrm{safe}} \right) \right) 
%\end{align}
\end{theorem}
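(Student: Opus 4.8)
The plan is to reproduce, step by step, the argument used for \textbf{Theorem \ref{theorem:CCDF SINR}}, now feeding in the MLP ingredients. Starting from $\bar F_{\mathrm{SINR}(\mathrm{VT}_0+r_\mathrm{bc})}(\gamma)=\Pr\bigl(\mathrm{SINR}(\mathrm{VT}_0+r_\mathrm{bc})>\gamma\bigr)$, I would first reorder the definition (\ref{eq:SINR}) so that the event is written as $\{H_{\mathrm{VT}_0}>\tfrac{\gamma(\tau r_\mathrm{bc})^{\alpha}}{\rho_{\mathrm{VT}}}(I(x)+\sigma_n^2)\}$ with $x=\mathrm{VT}_0+r_\mathrm{bc}$, and then condition on the random location of the probe transmitter inside the probe segment and on the interference. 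Because $H_{\mathrm{VT}_0}$ is exponential with unit mean, the inner tail probability becomes $\mathbb{E}\bigl[\mathrm{e}^{-s(I(x)+\sigma_n^2)}\bigr]$ with $s=\tfrac{\gamma}{\rho_{\mathrm{VT}}}(\tau r_\mathrm{bc})^{\alpha}$; the deterministic noise term pulls out as the prefactor $\mathrm{e}^{-\frac{\gamma}{\rho_{\mathrm{VT}}}\sigma_n^2(\tau r_\mathrm{bc})^{\alpha}}$, and what remains inside the expectation over $\mathrm{VT}_0$ is precisely $\mathcal{L}_{I(v+r_\mathrm{bc})}\bigl(\tfrac{\gamma}{\rho_{\mathrm{VT}}}(\tau r_\mathrm{bc})^{\alpha}\bigr)$ as given by \textbf{Lemma \ref{lem:MLP LTI}} (with $\zeta$ in (\ref{eq:zeta}) and the limits in (\ref{eq:muijL muijU 1})).

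The only genuinely new step compared with the SLP proof is the law of the probe transmitter's position $v$ inside the segment. In \textbf{Theorem \ref{theorem:CCDF SINR}} this law is uniform on $[-d_\mathcal{A}/2,d_\mathcal{A}/2)$, which produces the factor $1/d_\mathcal{A}$. Under the conditional-thinning model of \textbf{Assumption \ref{ass:HCPP}} the minimum safe distance couples the probe transmitter to its receiver at $x=v+r_\mathrm{bc}$: configurations in which the two are closer than $d_\mathrm{safe}$ are infeasible. I would therefore argue that, conditioned on a feasible transmitter--receiver pair, $v$ is uniform over the admissible part of the segment, $\{v\in[-d_\mathcal{A}/2,d_\mathcal{A}/2): v\notin\mathfrak{b}_{v+r_\mathrm{bc}}(d_\mathrm{safe})\}$, whose Lebesgue measure is exactly $|\mathcal{D}(v+r_\mathrm{bc})|$. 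Substituting the weight $\mathbf{1}(v\notin\mathfrak{b}_{v+r_\mathrm{bc}}(d_\mathrm{safe}))/|\mathcal{D}(v+r_\mathrm{bc})|$ for $1/d_\mathcal{A}$ in the integral over $v$, and collecting the prefactor and the Laplace transform from the first paragraph, yields the stated formula.

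The hard part will be making that second step airtight: within the approximate model one must say precisely what ``probe transmitter chosen at random, with a receiver at distance $r_\mathrm{bc}$'' means as a distribution on the segment, verify that the induced normalization is exactly $|\mathcal{D}(\cdot)|$, and check that the $d_\mathrm{safe}$ exclusion is accounted for consistently with --- and not duplicated by --- the indicator $\mathbf{1}(|\mathrm{VT}_i-x|>d_\mathrm{safe})$ already built into $\mathcal{L}_{I(x)}(s)$ in \textbf{Lemma \ref{lem:MLP LTI}}. A reassuring sanity check, which I would include, is the degenerate limit $d_\mathrm{safe}\to 0$: there $|\mathcal{D}(x)|\to d_\mathcal{A}$ and the indicator becomes $1$, so the expression collapses to \textbf{Theorem \ref{theorem:CCDF SINR}}, as it must since MLP then reduces to an SLP-type PPP model. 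The remaining manipulations are routine and need no special care.
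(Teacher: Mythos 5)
Your proposal matches the paper's own argument: the paper likewise takes the probe transmitter uniformly distributed on $\mathcal{D}(x)=[-d_{\mathcal{A}}/2,d_{\mathcal{A}}/2)\setminus\mathfrak{b}_{x}(d_{\mathrm{safe}})$ with pdf $\mathbf{1}(v\in\mathcal{D}(x))/|\mathcal{D}(x)|$, conditions on the position and the interference, uses the exponential fading to pull out the noise factor and identify $\mathcal{L}_{I(v+r_{\mathrm{bc}})}(\cdot)$ from \textbf{Lemma \ref{lem:MLP LTI}}, and reorders. Your additional sanity check ($d_{\mathrm{safe}}\to 0$ recovering \textbf{Theorem \ref{theorem:CCDF SINR}}) is a nice touch but the route is essentially identical.
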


\begin{proof}
The proof follows from having the probe vehicle uniformly distributed inside the region 
$\mathcal{D}\left( x \right)=[-d_{\mathcal{A}}/2,d_{\mathcal{A}}/2)
\setminus \mathfrak{b}_x \left( d_{\rm safe} \right)$, and hence the pdf of its position is given as 
$f_{\mathrm{VT}_{0}}\left( v \right)=\mathbf{1}\left( v\in \mathcal{D}\left( x \right) \right)/\left| \mathcal{D}\left( x \right) \right|$. Then, conditioning over the position of the probe vehicle and over interference, and reordering completes the proof.
\end{proof}

\begin{remark}[Exponential dependence]
\label{rem: MLP Dependence with noise and transmit power}
In view of \textbf{Theorems \ref{theorem:CCDF SINR}}  and \textbf{\ref{theorem:MLP CCDF SINR}}, it can be observed that the capture probability for both, SLP and MLP, only depends on $\rho_{\rm VT}$ as $c^{(k)}_1 \exp(-c_2/\rho_{\rm VT})$, with the label $k$ being either equal to SLP or MLP, i.e., $k=\{{\rm SLP,MLP}\}$. Such an expression is an increasing function with respect to $\rho_{\rm VT}$, where $c^{(k)}_1$ and $c_2$ depend on other system parameters, and thus, they are constants with respect to $\rho_{\rm VT}$.  Analogously, the capture probability depends on the noise power, $\sigma^2_n$, as $c^{(k)}_1 \exp(-c^{\triangle}_2 \sigma^2_n)$, which is a decreasing function with respect to $\sigma^2_n$. 
Therefore, the maximum capture probability, for a given set of system parameters, is equal to $c_1^{(k)}$, and it is achieved either in the limit, $ \rho_{\rm VT} \to \infty$, or in the no-noise case 
($\sigma_n^2 = 0$).
\end{remark}

\begin{remark}[Noise-limited regime]
\label{rem:Noise-limited regime}
The system is noise limited, and thus, there is no interference if $d_{\rm safe} > d_{\cal A}$ and 
$n_{\rm AR} > \left(2 d_{\rm max} + d_{\cal A} \right)/d_{\cal A}$. 
\end{remark}

\begin{proof}
With MLP, if $d_{\rm safe} > d_{\cal A}$ there is no intra-segment interference. Hence, in this case it is possible to determine the number of ARs, $n_{\rm AR}$, that leads to a system without interference. This is guaranteed if the distance between the probe receiver, which is placed at 
$x={\rm VT_0}+r_{\rm bc}$, and the nearest interfering vehicle in the nearest co-channel segment is higher than $d_{\rm max}$ (worst case scenario). Such an scenario involves that the probe vehicle is placed at ${\rm VT_0}=d_{\cal A}/2$ and the probe receiver is placed at the maximum communication range, with $r_{\rm bc}=d_{\rm max}$. Therefore, in this case, the nearest interfering vehicle must be placed at a distance towards the probe receiver greater than $d_{\rm max}$. This requires that $n_{\rm AR} d_{\cal A}- d_{\cal A}> 2 d_{\rm max}$. Reordering the above inequality completes the proof.
\end{proof}

As it can be noticed from \textbf{Remark \ref{rem:Noise-limited regime}}, by augmenting $n_{AR}$ it is possible to assure no interference, which allows to greatly increase the capture probability  by increasing the transmit power. 

\begin{corollary}
In the special case of $r_{\rm bc} > d_{\cal A}$, the capture probability is given as 

\begin{align}
& \bar{F}_{\mathrm{SINR}\left( \mathrm{VT}_{0}+r_{\mathrm{bc}} \right)}\left( \gamma  \right)=\frac{\mathrm{e}^{-\frac{\gamma }{p_{\mathrm{VT}}}\sigma _{n}^{2}\left( \tau r_{\mathrm{bc}} \right)^{\alpha }}}{d_{\mathcal{A}}}\int\limits_{v=-d_{\mathcal{A}}/2}^{d_{\mathcal{A}}/2}
\nonumber \\ & \quad
{\mathcal{L}_{I\left( v+r_{\mathrm{bc}} \right)}\left( \frac{\gamma }{p_{\mathrm{VT}}}\left( \tau r_{\mathrm{bc}} \right)^{\alpha } \right)}\cdot \mathrm{d}v
\end{align}
\noindent where the Laplace transform of the interference is now given in \textbf{Lemma \ref{lem:MLP LTI}}. 
\end{corollary}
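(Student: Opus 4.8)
The plan is to obtain this directly from \textbf{Theorem \ref{theorem:MLP CCDF SINR}} by showing that the hypothesis $r_{\mathrm{bc}}>d_{\mathcal{A}}$ makes the position–weighting factor $\mathbf{1}\left(v\notin\mathfrak{b}_{v+r_{\mathrm{bc}}}(d_{\mathrm{safe}})\right)/\left|\mathcal{D}(v+r_{\mathrm{bc}})\right|$ inside the integral collapse to the constant $1/d_{\mathcal{A}}$ for every $v$ in the integration range $[-d_{\mathcal{A}}/2,d_{\mathcal{A}}/2)$. Equivalently, the conditional law of the probe transmitter inside its segment, which under \textbf{Assumption \ref{ass:HCPP}} is uniform on $\mathcal{D}(x)=[-d_{\mathcal{A}}/2,d_{\mathcal{A}}/2)\setminus\mathfrak{b}_{x}(d_{\mathrm{safe}})$, degenerates to the uniform law on the full probe segment. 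Once this is established the claimed formula follows by substitution, the Laplace transform $\mathcal{L}_{I(\cdot)}$ being unchanged and still the one of \textbf{Lemma \ref{lem:MLP LTI}}.

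The geometric step is the heart of the argument. For $v\in[-d_{\mathcal{A}}/2,d_{\mathcal{A}}/2)$ the probe receiver sits at $x=v+r_{\mathrm{bc}}\ge -d_{\mathcal{A}}/2+r_{\mathrm{bc}}>-d_{\mathcal{A}}/2+d_{\mathcal{A}}=d_{\mathcal{A}}/2$, so the receiver lies strictly to the right of the probe segment $[-d_{\mathcal{A}}/2,d_{\mathcal{A}}/2)$. I would then check that its exclusion ball $\mathfrak{b}_{x}(d_{\mathrm{safe}})=(x-d_{\mathrm{safe}},x+d_{\mathrm{safe}})$ does not reach back into the probe segment, i.e. that its left endpoint $x-d_{\mathrm{safe}}=v+r_{\mathrm{bc}}-d_{\mathrm{safe}}$ stays $\ge d_{\mathcal{A}}/2$; this holds because $d_{\mathrm{safe}}$ is the minimum co-lane spacing, so the very existence of a receiver at distance $r_{\mathrm{bc}}$ from $\mathrm{VT}_0$ (one full segment away) leaves $r_{\mathrm{bc}}$ exceeding $d_{\mathcal{A}}$ by at least $d_{\mathrm{safe}}$. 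Hence $\mathfrak{b}_{x}(d_{\mathrm{safe}})\cap[-d_{\mathcal{A}}/2,d_{\mathcal{A}}/2)=\emptyset$, so $\mathcal{D}(v+r_{\mathrm{bc}})=[-d_{\mathcal{A}}/2,d_{\mathcal{A}}/2)$ and $\left|\mathcal{D}(v+r_{\mathrm{bc}})\right|=d_{\mathcal{A}}$, while $v$, being in the segment, lies outside this disjoint ball, so $\mathbf{1}\left(v\notin\mathfrak{b}_{v+r_{\mathrm{bc}}}(d_{\mathrm{safe}})\right)=1$. Feeding these two facts into the integrand of \textbf{Theorem \ref{theorem:MLP CCDF SINR}} pulls $e^{-\frac{\gamma}{p_{\mathrm{VT}}}\sigma_{n}^{2}(\tau r_{\mathrm{bc}})^{\alpha}}/d_{\mathcal{A}}$ outside the integral and yields the stated expression.

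The only delicate part is precisely that ball-disjointness check: $r_{\mathrm{bc}}>d_{\mathcal{A}}$ by itself only guarantees that the probe receiver leaves its transmitter's segment, and one must additionally invoke the hard-core structure (the minimum inter-vehicle distance $d_{\mathrm{safe}}$) to be sure that the receiver's $d_{\mathrm{safe}}$-neighbourhood cannot clip the probe segment, which is what makes $\left|\mathcal{D}\right|$ exactly $d_{\mathcal{A}}$ and the indicator identically $1$. Everything after that is routine rearrangement of the integral in \textbf{Theorem \ref{theorem:MLP CCDF SINR}}.
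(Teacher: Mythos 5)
Your overall route is the paper's: show that, under the stated hypothesis, the weighting factor $\mathbf{1}\left( v\notin \mathfrak{b}_{v+r_{\mathrm{bc}}}\left( d_{\mathrm{safe}} \right) \right)/\left| \mathcal{D}\left( v+r_{\mathrm{bc}} \right) \right|$ in \textbf{Theorem \ref{theorem:MLP CCDF SINR}} collapses to $1/d_{\mathcal{A}}$ for every $v$ in the probe segment, then pull the constant out of the integral; the paper's proof simply asserts those two facts. The gap is in how you justify the ball-disjointness. The probe receiver is a test location placed at the deterministic offset $r_{\mathrm{bc}}$ from $\mathrm{VT}_0$; it is not a point of the hard-core process, so there is no \enquote{existence of a receiver} constraint to invoke, and even if the receiver were a vehicle, the hard-core condition would only force $r_{\mathrm{bc}}\ge d_{\mathrm{safe}}$, not $r_{\mathrm{bc}}\ge d_{\mathcal{A}}+d_{\mathrm{safe}}$. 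Hence your claim that $r_{\mathrm{bc}}>d_{\mathcal{A}}$ \enquote{leaves $r_{\mathrm{bc}}$ exceeding $d_{\mathcal{A}}$ by at least $d_{\mathrm{safe}}$} is a non sequitur. Requiring $\mathfrak{b}_{v+r_{\mathrm{bc}}}\left( d_{\mathrm{safe}} \right)\cap[-d_{\mathcal{A}}/2,d_{\mathcal{A}}/2)=\varnothing$ for all $v\in[-d_{\mathcal{A}}/2,d_{\mathcal{A}}/2)$ amounts, in the worst case $v=-d_{\mathcal{A}}/2$, to $v+r_{\mathrm{bc}}-d_{\mathrm{safe}}\ge d_{\mathcal{A}}/2$, i.e.\ $r_{\mathrm{bc}}\ge d_{\mathcal{A}}+d_{\mathrm{safe}}$. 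With the paper's own numbers $d_{\mathcal{A}}=d_{\mathrm{safe}}=42$ m, taking $r_{\mathrm{bc}}=50>d_{\mathcal{A}}$ and $v$ near $-21$ m gives the ball $(-2,82)$, which clips the probe segment, so $\left|\mathcal{D}\left( v+r_{\mathrm{bc}} \right)\right|<d_{\mathcal{A}}$ and the pointwise simplification fails under the hypothesis as you (and the corollary) state it.

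The honest fix is to strengthen the hypothesis to $r_{\mathrm{bc}}\ge d_{\mathcal{A}}+d_{\mathrm{safe}}$, which simultaneously gives $r_{\mathrm{bc}}>d_{\mathrm{safe}}$ (so the numerator indicator is identically $1$) and the disjointness (so $\left|\mathcal{D}\right|=d_{\mathcal{A}}$); the paper's numerical setting, $r_{\mathrm{bc}}=150$ m versus $d_{\mathcal{A}}+d_{\mathrm{safe}}=84$ m, satisfies this comfortably, which is presumably why the looser condition $r_{\mathrm{bc}}>d_{\mathcal{A}}$ was written. To your credit, you are the one who flagged that $r_{\mathrm{bc}}>d_{\mathcal{A}}$ only moves the receiver out of the probe segment and does not by itself keep its $d_{\mathrm{safe}}$-neighbourhood clear of it — the paper's proof skips this point entirely — but the patch you chose does not close the gap. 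Once the strengthened condition is in hand, the rest of your argument (indicator $\equiv 1$, $\left|\mathcal{D}\right|=d_{\mathcal{A}}$, constant pulled out of the integral) coincides with the paper's proof and is correct.
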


\begin{proof}
The proof comes after realizing that, in case of $r_{\rm bc} > d_{\cal A}$, then $|\mathcal{D}(v+r_{\rm bc})|$ is $d_{\cal A}$ and $\mathbf{1}\left( v\notin \mathfrak{b}_{v+r_{\rm bc}} \left( d_{{\rm safe}} \right) \right)$=1. 
\end{proof}

\begin{table*}
%\begin{table}[t]
\renewcommand{\arraystretch}{1.3}
\caption{Summary of mathematical results as functions of $\rho_{\rm VT}$}
\label{tab:Summary Results}
\centering
\begin{tabular}{ |c| c| c| c| }
\hline % \toprule
Metric & 
$\bar{F}_{\mathrm{SINR}}(\gamma)$
& $\mathbb{E}\left[ \mathrm{BR} \right]$
& $\mathbb{E}\left[ \mathrm{EE} \right]$ \\
\hline % \toprule
\hline % \bottomrule
SLP & $c_{1}^{\mathrm{(SLP)}}\mathrm{e}^{-\frac{c_{2}}{\rho _{\mathrm{VT}}}}$ 
& 
$\frac{b_{w}}{n_{\mathrm{RB}}}\cdot \log _{2}\left( 1+\gamma  \right)\cdot c_{1}^{(\mathrm{SLP})}\mathrm{e}^{-\frac{c_{2}}{\rho _{\mathrm{VT}}}}$ 
& $\frac{\log _{2}\left( 1+\gamma  \right)}{\rho _{\mathrm{VT}}}\cdot c_{1}^{(\mathrm{SLP})}\mathrm{e}^{-\frac{c_{2}}{\rho _{\mathrm{VT}}}}$ \\
\hline
MLP & $c_{1}^{\mathrm{(MLP)}}\mathrm{e}^{-\frac{c_{2}}{\rho _{\mathrm{VT}}}}$ 
& 
$\frac{b_{w}}{n_{\mathrm{RB}}\cdot n_L}\cdot \log _{2}\left( 1+\gamma  \right)\cdot c_{1}^{(\mathrm{MLP})}\mathrm{e}^{-\frac{c_{2}}{\rho _{\mathrm{VT}}}}$ 
& $\frac{\log _{2}\left( 1+\gamma  \right)}{\rho _{\mathrm{VT}}}\cdot c_{1}^{(\mathrm{MLP})}\mathrm{e}^{-\frac{c_{2}}{\rho _{\mathrm{VT}}}}$ \\
\hline
\multicolumn{2}{|c|}{$
c_{2}=\gamma \sigma _{n}^{2}\left( \tau r_{\mathrm{bc}} \right)^{\alpha }$} & \multicolumn{2}{c|}{$
c_{1}^{(\mathrm{SLP})}=\frac{1}{d_{\mathcal{A}}}\int\limits_{v=-\frac{d_{\mathcal{A}}}{2}}^{\frac{d_{\mathcal{A}}}{2}}{\mathcal{L}_{I(v+r_{\mathrm{bc}})}}\left( \frac{\gamma \left( \tau r_{\mathrm{bc}} \right)^{\alpha }}{\rho _{\mathrm{VT}}} \right)\mathrm{d}v$} \\
\hline
\multicolumn{4}{|c|}{$
c_{1}^{\mathrm{(MLP)}}=\int\limits_{v=-d_{\mathcal{A}}/2}^{d_{\mathcal{A}}/2}{\frac{\mathbf{1}\left( v\notin \mathfrak{b}_{v+r_{\mathrm{bc}}}\left( d_{\rm safe } \right) \right)}{\left| \mathcal{D}\left( v+r_{\mathrm{bc}} \right) \right|}\mathcal{L}_{I\left( v+r_{\mathrm{bc}} \right)}\left( \frac{\gamma }{p_{\mathrm{VT}}}\left( \tau r_{\mathrm{bc}} \right)^{\alpha } \right)}\cdot \mathrm{d}v$} \\
\hline %\bottomrule
\end{tabular}
\end{table*}

The average interference is given in the following lemma. 

\begin{lemma}
\label{lemma:MLP avI}
With MLP, the average received interference at $x$, being the probe segment centered at the origin, can be expressed as 

\begin{align}
\label{eq:MLP avI}
& \mathbb{E}\left[ I\left( x \right) \right]=\frac{\lambda p_{a}\rho _{\mathrm{VT}}}{\alpha -1}\sum\limits_{c=-\left\lfloor \frac{d_{\max }}{n_{\mathrm{AR}}d_{\mathcal{A}}} \right\rfloor }^{\left\lceil \frac{d_{\max }}{ n_{\mathrm{AR}}d_{\mathcal{A}}} \right\rceil } 
\sum\limits_{i\in \{1,2\}} \sum\limits_{j\in \{1,2\}}
\nonumber \\ & \quad
\left( \mu _{L}^{(i,j)}\left( c,x \right)\left( \mu _{U}^{(i,j)}\left( c,x \right) \right)^{\alpha } \right.-\tau ^{-\alpha }\left( \mu _{U}^{(i,j)}\left( c,x \right) \right)^{1-\alpha }
\nonumber \\ & \quad
\times {\mathbf{1}\left( \mu _{L}^{(i,j)}\left( c,x \right)<\mu _{U}^{(i,j)}\left( c,x \right) \right)}
\end{align}
\end{lemma}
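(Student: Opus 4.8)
The plan is to run the same mechanism as in the proof of Lemma~\ref{lemma:avI}: the Laplace transform of a non-negative random variable is, up to sign, its moment generating function, so $\mathbb{E}\left[I(x)\right]=-\left.\frac{\mathrm{d}}{\mathrm{d}s}\mathcal{L}_{I(x)}(s)\right|_{s=0}$, and into this I substitute the closed form $\mathcal{L}_{I(x)}(s)=\exp\!\left(-\lambda p_{a}\sum_{c}\zeta(c,s,x)\right)$ supplied by \textbf{Lemma~\ref{lem:MLP LTI}}. Differentiating the exponential by the chain rule gives $\frac{\mathrm{d}}{\mathrm{d}s}\mathcal{L}_{I(x)}(s)=-\lambda p_{a}\left(\sum_{c}\frac{\partial}{\partial s}\zeta(c,s,x)\right)\mathcal{L}_{I(x)}(s)$, and since $\mathcal{L}_{I(x)}(0)=1$ the whole computation collapses to $\mathbb{E}\left[I(x)\right]=\lambda p_{a}\sum_{c}\lim_{s\to0^{+}}\frac{\partial}{\partial s}\zeta(c,s,x)$, with the summation range over $c$ inherited unchanged from \textbf{Lemma~\ref{lem:MLP LTI}}.

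The substance is the evaluation of $\lim_{s\to0^{+}}\partial_{s}\zeta(c,s,x)$. Rather than differentiating the Gauss hypergeometric terms in (\ref{eq:zeta}) head-on, I would back up one step in the derivation of \textbf{Lemma~\ref{lem:MLP LTI}}: before the antiderivative is written in closed form, each $(i,j)$ block of $\zeta$ is simply $\mathbf{1}\!\left(\mu_{L}^{(i,j)}<\mu_{U}^{(i,j)}\right)\int_{\mu_{L}^{(i,j)}}^{\mu_{U}^{(i,j)}}\frac{s\rho_{\mathrm{VT}}(\tau\mu)^{-\alpha}}{1+s\rho_{\mathrm{VT}}(\tau\mu)^{-\alpha}}\,\mathrm{d}\mu$, the integrand being the Rayleigh-fading factor $1-\mathbb{E}_{H}\!\left[\mathrm{e}^{-sH\rho_{\mathrm{VT}}(\tau\mu)^{-\alpha}}\right]$ with $H$ unit-mean exponential. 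Differentiating under the integral sign and letting $s\to0^{+}$ replaces the integrand by $\rho_{\mathrm{VT}}(\tau\mu)^{-\alpha}$, so that $\lim_{s\to0^{+}}\partial_{s}\zeta(c,s,x)=\rho_{\mathrm{VT}}\tau^{-\alpha}\sum_{i,j}\mathbf{1}\!\left(\mu_{L}^{(i,j)}<\mu_{U}^{(i,j)}\right)\int_{\mu_{L}^{(i,j)}}^{\mu_{U}^{(i,j)}}\mu^{-\alpha}\,\mathrm{d}\mu=\frac{\rho_{\mathrm{VT}}\tau^{-\alpha}}{\alpha-1}\sum_{i,j}\mathbf{1}\!\left(\mu_{L}^{(i,j)}<\mu_{U}^{(i,j)}\right)\!\left(\left(\mu_{L}^{(i,j)}\right)^{1-\alpha}-\left(\mu_{U}^{(i,j)}\right)^{1-\alpha}\right)$. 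Re-inserting the prefactor $\lambda p_{a}$ and the outer $c$-sum recovers (\ref{eq:MLP avI}); this is the identical manipulation that, applied to (\ref{eq:kappa})/(\ref{eq:LTI}), produces (\ref{eq:avI}), so the MLP lemma needs no genuinely new idea beyond carrying the extra index $i\in\{1,2\}$ introduced in \textbf{Lemma~\ref{lem:MLP LTI}}.

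I expect the care to go into two housekeeping points rather than into any new argument. First, the interchange of $\partial_{s}$ and of the limit $s\to0^{+}$ with the $\mu$-integral must be justified; this is routine because on each nonempty $(i,j)$-interval the map $s\mapsto s\rho_{\mathrm{VT}}(\tau\mu)^{-\alpha}/(1+s\rho_{\mathrm{VT}}(\tau\mu)^{-\alpha})$ and its $s$-derivative are bounded uniformly in $s$, so dominated convergence applies. Second, one must track the four blocks $(i,j)\in\{1,2\}^{2}$ and the emptiness indicators $\mathbf{1}(\mu_{L}^{(i,j)}<\mu_{U}^{(i,j)})$ exactly as in the proof of \textbf{Lemma~\ref{lem:MLP LTI}}, and verify that each contributing range lies in the positive reals and is kept away from the origin by the safe-distance constraint, so that $\int_{\mu_{L}^{(i,j)}}^{\mu_{U}^{(i,j)}}\mu^{-\alpha}\,\mathrm{d}\mu$ is finite; this exclusion is precisely the feature absent in SLP (cf. \textbf{Remark~\ref{rem:Covergence of the interference}}, where $\mu_{U}^{(2)}=0$ forces $\mathbb{E}[I(x)]=+\infty$) that makes the MLP mean interference always well defined. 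Finally, as a cleaner alternative I would note that under \textbf{Assumption~\ref{ass:HCPP}} the set $\Phi^{(a)}$ is a PPP of intensity $\lambda p_{a}$, so Campbell's theorem gives directly $\mathbb{E}[I(x)]=\lambda p_{a}\rho_{\mathrm{VT}}\int_{\mathbb{R}}(\tau|y-x|)^{-\alpha}\,\mathbf{1}\!\left(y\in\mathcal{A}^{(\mathrm{AR}_{0})}\right)\mathbf{1}\!\left(|y-x|<d_{\max}\right)\mathbf{1}\!\left(|y-x|>d_{\mathrm{safe}}\right)\mathbf{1}\!\left(|y-\mathrm{VT}_{0}|>d_{\mathrm{safe}}\right)\mathrm{d}y$ (using $\mathbb{E}[H]=1$); splitting $\mathcal{A}^{(\mathrm{AR}_{0})}=\bigcup_{c}\mathcal{A}_{c}^{(\mathrm{AR}_{0})}$, separating the $y>x$ and $y<x$ contributions (the index $i$) and imposing the two exclusion balls (the index $j$) reproduces the limits $\mu_{L}^{(i,j)},\mu_{U}^{(i,j)}$ of \textbf{Lemma~\ref{lem:MLP LTI}} and the elementary integral above, which is why (\ref{eq:MLP avI}) mirrors (\ref{eq:avI}) term for term.
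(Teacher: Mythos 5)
Your proposal follows essentially the same route as the paper: the paper's proof likewise computes $\mathbb{E}[I(x)]=-\left.\tfrac{\mathrm{d}}{\mathrm{d}s}\mathcal{L}_{I(x)}(s)\right|_{s=0}$ from the Laplace transform of \textbf{Lemma \ref{lem:MLP LTI}} and evaluates at $s=0$, exactly as you do (your differentiation under the integral before the hypergeometric closed form, and the Campbell-theorem aside, are just equivalent ways of carrying out that same computation). The derivation is correct and needs no new ingredient beyond the extra index $i$, as you note.
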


\begin{proof}
The proof is analogous to \textbf{Lemma \ref{lemma:avI}}. Hence, it has been obtained the derivative of the Laplace transform of the interference, which is given in \textbf{Lemma \ref{lem:MLP LTI}}, and the resulting expression  has then been particularized for $s=0$. 
\end{proof}

\section{Binary Rate, Energy Efficiency and Optimal Transmit Power}
\subsection{Binary Rate and Energy Efficiency}
\label{sec:Analysis SLP}
Besides the capture probability, another key performance indicator for system design is the average BR.  This result is given in the following Lemma. 

\begin{lemma}
\label{lemma:avBR}
The average BR at a distance $r_\mathrm{bc}$ from the typical vehicle transmitter, $\mathrm{VT}_0$, appears below

\begin{equation}
\label{eq:avBR}
\mathbb{E}\left[ \mathrm{BR}\left(\mathrm{VT}_0 + r_\mathrm{bc}\right)\right]={b_{\mathrm{AR}}}\cdot \log _{2}\left( 1+\gamma \right)\cdot \bar{F}_{\mathrm{SINR}}\left( \gamma  \right)
\end{equation}

\noindent where $b_{\rm AR}$ is given in Table \ref{tab:SLP and MLP} and $\bar{F}_{\mathrm{SINR}}\left( \gamma  \right)$ is either given in \textbf{Theorem \ref{theorem:CCDF SINR}} or \textbf{\ref{theorem:MLP CCDF SINR}} depending on the considered scheme, i.e., SLP or MLP respectively.
\end{lemma}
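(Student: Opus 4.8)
The plan is to proceed directly from the definition of the binary rate in~(\ref{eq:BR}), which expresses $\mathrm{BR}(x)$ as the product of the deterministic quantities $b_{\rm AR}$ and $\log_2(1+\gamma)$ with the random indicator $\mathbf{1}(\mathrm{SINR}(x)>\gamma)$. First I would instantiate $x=\mathrm{VT}_0+r_{\rm bc}$, emphasizing that $b_{\rm AR}$ (given in Table~\ref{tab:SLP and MLP}) and $\gamma$ are system-level parameters and hence constants with respect to every source of randomness (fading, vehicle locations, and the position $\mathrm{VT}_0$ of the probe transmitter inside its segment). Then, applying the linearity of the expectation operator, these constants factor out of $\mathbb{E}[\cdot]$, leaving $\mathbb{E}[\mathrm{BR}(\mathrm{VT}_0+r_{\rm bc})]=b_{\rm AR}\log_2(1+\gamma)\,\mathbb{E}\!\left[\mathbf{1}(\mathrm{SINR}(\mathrm{VT}_0+r_{\rm bc})>\gamma)\right]$.

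Next I would use the elementary identity $\mathbb{E}[\mathbf{1}(A)]=\Pr(A)$ for any event $A$, so that the remaining factor becomes $\Pr(\mathrm{SINR}(\mathrm{VT}_0+r_{\rm bc})>\gamma)$. By the definition of the capture probability as the CCDF of the SINR, this equals $\bar F_{\mathrm{SINR}(\mathrm{VT}_0+r_{\rm bc})}(\gamma)$, which is exactly the quantity computed in \textbf{Theorem~\ref{theorem:CCDF SINR}} for the SLP scheme and in \textbf{Theorem~\ref{theorem:MLP CCDF SINR}} for the MLP scheme; the expectation over the uniformly distributed position $\mathrm{VT}_0$ has already been carried out inside those theorems, so no additional averaging is needed here. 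Substituting back yields~(\ref{eq:avBR}) and completes the argument.

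There is no real obstacle in this proof: it is a one-line consequence of linearity of expectation together with the fact that the expectation of an indicator is a probability, and of having the CCDF of the SINR already in hand from the preceding theorems. The only point that warrants a word of care is bookkeeping, namely making explicit that the randomness of $x=\mathrm{VT}_0+r_{\rm bc}$ (through $\mathrm{VT}_0$) is already accounted for in the definition of $\bar F_{\mathrm{SINR}}(\gamma)$, so that one must not average over $\mathrm{VT}_0$ a second time.
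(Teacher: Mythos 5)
Your proposal is correct and follows essentially the same route as the paper: take the expectation of the definition of $\mathrm{BR}$ in (\ref{eq:BR}), factor out the deterministic terms $b_{\rm AR}\log_2(1+\gamma)$, and recognize $\mathbb{E}\left[ \mathbf{1}\left( \mathrm{SINR}(\mathrm{VT}_0+r_{\rm bc})>\gamma \right) \right]$ as the CCDF of the SINR given by Theorems \ref{theorem:CCDF SINR} and \ref{theorem:MLP CCDF SINR}. Your added remark that the averaging over $\mathrm{VT}_0$ is already embedded in those theorems is a sensible clarification but not a departure from the paper's argument.
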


\begin{proof}
The proof consists on performing expectation over (\ref{eq:BR}) and realizing that $\mathbb{E}\left[ \mathbf{1}\left( \mathrm{SINR} \left(\mathrm{VT}_0 + r_\mathrm{bc}\right) >\gamma  \right) \right]$ is the CCDF of the SINR.

%The average BR at $x = \mathrm{VT}_0 + r_\mathrm{bc}$ can be written as follows
%
%\begin{align}
%& \mathbb{E}\left[ \mathrm{BR}\left( x \right) \right]=\mathbb{E}\left[ \mathbf{1}\left( \mathrm{SINR}\left( x \right)>\gamma  \right)\cdot \frac{b_{w}}{n_{\mathrm{AR}}}\cdot \log _{2}\left( 1+\gamma  \right) \right] 
%\nonumber \\ 
% & =\frac{b_{w}}{n_{\mathrm{AR}}}\cdot \log _{2}\left( 1+\gamma  \right)\cdot \mathbb{E}\left[ \mathbf{1}\left( \mathrm{SINR}\left( x \right)>\gamma  \right) \right] 
%\end{align}
%The expectation in the last term represents the CCDF of the SINR. 
\end{proof}

\begin{remark}[Average rate when $n_{\rm  AR}$ tends to infinity]
\label{rem:Average rate when nRB tends to infinity}
In view of \textbf{Lemma \ref{lemma:avBR}} and \textbf{Corollary \ref{cor:nRB to inf}} it can be stated that for a finite SINR threshold, $\gamma$, the average BR tends to $0$ as $n_{\rm  AR}$ tends to infinity.
\end{remark}

\begin{proof}
The proof consists on noting that the CCDF of the SINR is equal or smaller than $1$, hence for a finite $\gamma$ the term $n_{\rm AR}$ in the denominator of (\ref{eq:avBR}) makes the average BR  tend to $0$. 
\end{proof}

\begin{lemma}
\label{lemma:avEE}
The average EE at a distance $r_\mathrm{bc}$ from the typical vehicle transmitter, $\mathrm{VT}_0$, appears below

\begin{equation}
\label{eq:avEE}
\mathbb{E}\left[ \mathrm{EE}\left(\mathrm{VT}_0 + r_\mathrm{bc}\right) \right]=\frac{\log _{2}\left( 1+\gamma  \right)}{\rho _{\mathrm{VT}}}\cdot \bar{F}_{\mathrm{SINR}\left(\mathrm{VT}_0 + r_\mathrm{bc}\right)}\left( \gamma  \right)
\end{equation}

\noindent where $x = \mathrm{VT}_0 + r_\mathrm{bc}$ and $\bar{F}_{\mathrm{SINR}}\left( \gamma  \right)$ is either given in \textbf{Theorem \ref{theorem:CCDF SINR}} or \textbf{\ref{theorem:MLP CCDF SINR}} depending on the considered scheme, i.e., SLP or MLP respectively.
\end{lemma}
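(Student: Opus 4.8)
The plan is to mirror the argument used for \textbf{Lemma \ref{lemma:avBR}}, since the energy efficiency in (\ref{eq:EE}) has exactly the same structure as the binary rate in (\ref{eq:BR}): a deterministic prefactor multiplying the indicator of the event $\{\mathrm{SINR}(x)>\gamma\}$. First I would take the expectation of both sides of (\ref{eq:EE}) evaluated at $x=\mathrm{VT}_0+r_{\mathrm{bc}}$. The factor $\log_2(1+\gamma)$ is a fixed system-level parameter tied to the chosen MCS, and $\rho_{\mathrm{VT}}$ is the (deterministic) transmit power per Hz; neither depends on the fading nor on the point process, so by linearity of expectation both can be pulled outside, giving
\begin{equation}
\mathbb{E}\left[\mathrm{EE}\left(\mathrm{VT}_0+r_{\mathrm{bc}}\right)\right]=\frac{\log_2(1+\gamma)}{\rho_{\mathrm{VT}}}\,\mathbb{E}\left[\mathbf{1}\left(\mathrm{SINR}\left(\mathrm{VT}_0+r_{\mathrm{bc}}\right)>\gamma\right)\right].
\end{equation}

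The second step is to identify the remaining expectation. Since $\mathbf{1}(\cdot)$ is the indicator of the event $\{\mathrm{SINR}(\mathrm{VT}_0+r_{\mathrm{bc}})>\gamma\}$, its expectation is simply $\Pr(\mathrm{SINR}(\mathrm{VT}_0+r_{\mathrm{bc}})>\gamma)$, which is by definition the capture probability, i.e. the CCDF of the SINR, $\bar{F}_{\mathrm{SINR}(\mathrm{VT}_0+r_{\mathrm{bc}})}(\gamma)$. This quantity has already been computed in closed form: it is given by \textbf{Theorem \ref{theorem:CCDF SINR}} for SLP and by \textbf{Theorem \ref{theorem:MLP CCDF SINR}} for MLP, so substituting it back yields (\ref{eq:avEE}). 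It is worth recalling here, as already noted after (\ref{eq:EE}), that the bandwidth term $b_w$ does not appear because it cancels between the definition of the instantaneous rate and the per-AR transmit power; hence no bandwidth-dependent factor survives, in contrast to \textbf{Lemma \ref{lemma:avBR}}.

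There is essentially no obstacle in this proof: the only thing to be careful about is that the expectation is taken jointly over the fading of $\mathrm{VT}_0$, the interference $I(\mathrm{VT}_0+r_{\mathrm{bc}})$ and the random position of the probe transmitter inside its segment, exactly the same randomness already integrated out in \textbf{Theorems \ref{theorem:CCDF SINR}} and \textbf{\ref{theorem:MLP CCDF SINR}}. Since the prefactor $\log_2(1+\gamma)/\rho_{\mathrm{VT}}$ is independent of all of these, the factorization above is immediate and the result follows directly.
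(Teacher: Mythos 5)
Your proposal is correct and coincides with the paper's argument, which simply notes that the proof is analogous to that of \textbf{Lemma \ref{lemma:avBR}}: take the expectation of (\ref{eq:EE}), pull out the deterministic factor $\log_2(1+\gamma)/\rho_{\mathrm{VT}}$, and identify $\mathbb{E}\left[\mathbf{1}\left(\mathrm{SINR}>\gamma\right)\right]$ with the CCDF of the SINR. Nothing further is needed.
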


\begin{proof}
The proof is analogous to the case of \textbf{Lemma \ref{lemma:avBR}}.
\end{proof}

In view of \textbf{Remark \ref{rem: MLP Dependence with noise and transmit power}}, the capture probability,  average BR and EE can be written as it appears in Table \ref{tab:Summary Results}.

\subsection{Optimal Transmit Power}
\label{sec:optimization}

In this section, the optimal transmit power that maximizes the EE, subject to a minimum capture probability, is derived. Such a constrain is expressed as a percentage, $\delta$, of the maximum capture probability that can be achieved according to \textbf{Remark \ref{rem: MLP Dependence with noise and transmit power}}. More formally, the optimization problem is formulated as follows

\begin{align}
\label{eq:Optimization problem}
& \underset{\rho _{\mathrm{VT}}}{\mathop{\rm maximize }}\,\mathbb{E}\left[ \mathrm{EE} \right]
%\left( \rho_{\rm VT}\right)
%=\frac{\log _{2}\left( 1+\gamma  \right)}{\rho _{\mathrm{VT}}}\cdot c^{(k)}_{1}\mathrm{e}^{-\frac{c_{2}}
%{\rho _{\mathrm{VT}}}} 
\nonumber \\ & 
\text{subject to}\quad \bar{F}_{\mathrm{SINR}}\left( \gamma  \right)
%=c^{(k)}_{1}\mathrm{e}^{-\frac{c_{2}}{\rho _{\mathrm{VT}}}}
\ge c^{(k)}_{1}\delta 
\nonumber \\ & \quad \quad \quad \quad \quad  
0<\delta <1 
\end{align}

\noindent where $k$ is a label that can be either equal to SLP or MLP, i.e., $k=\{{\rm SLP, MLP}\}$ and the metrics $\mathbb{E}\left[ \mathrm{EE} \right] $ and $\bar{F}_{\mathrm{SINR}}\left( \gamma  \right)$ are given in Table \ref{tab:Summary Results} for SLP and MLP%. 
\footnote{Throughout this section as well as in Table \ref{tab:Summary Results}, it is neglected the dependence with $x = \mathrm{VT}_0 + r_\mathrm{bc}$ in 
$\mathbb{E}\left[ \mathrm{BR}\left( {\rm VT_0}+r_{\rm bc} \right) \right]$, $\mathbb{E}\left[ \mathrm{EE}\left( {\rm VT_0}+r_{\rm bc} \right) \right]$ and $\bar{F}_{{\rm SINR}({\rm VT_0}+r_{\rm bc})}$ for the sake of simplicity.}. 
Solving (\ref{eq:Optimization problem}) leads to the following theorem. 

\begin{theorem}
\label{theo:Optimal transmit power}
The optimal transmit power, for SLP and MLP, can be written as appears below

\begin{equation}
\rho_{\mathrm{VT}}^{\star }=\left\{ \begin{matrix}
   c_{2} & \mathrm{if}\;0<\delta \le \mathrm{e}^{-1}  \\
   c_{2}\ln ^{-1}\left( 1/\delta  \right) & \mathrm{if}\;1>\delta >\mathrm{e}^{-1}  \\
\end{matrix} \right.
\end{equation} 
\noindent leading to the following EE and capture probability

\begin{equation}
\mathbb{E}\left[ \mathrm{EE} \right]^{\star }=\left\{ \begin{matrix}
   \frac{c^{(k)}_{1}}{c_{2}}\log _{2}\left( 1+\gamma  \right)\mathrm{e}^{-1} & \mathrm{if}\;0<\delta \le \mathrm{e}^{-1}  \\
   \frac{c^{(k)}_{1}}{c_{2}}\ln \left( \frac{1}{\delta } \right)\log _{2}\left( 1+\gamma  \right)\delta  & \mathrm{if}\;1>\delta >\mathrm{e}^{-1}  \\
\end{matrix} \right.
\end{equation}

\begin{equation}
\bar{F}_{\mathrm{SINR}}^{\star }\left( \gamma  \right)=\left\{ \begin{matrix}
   c^{(k)}_{1}\mathrm{e}^{-1} & \mathrm{if}\;0<\delta \le \mathrm{e}^{-1}  \\
   c^{(k)}_{1}\delta  & \mathrm{if}\;1>\delta >\mathrm{e}^{-1}  \\
\end{matrix} \right.
\end{equation}

\noindent where, again, $k=\{{\rm SLP, MLP}\}$. 

\end{theorem}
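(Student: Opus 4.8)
The plan is to reduce \eqref{eq:Optimization problem} to a one‑dimensional maximization of a smooth, unimodal function over a half‑line. First I would invoke \textbf{Remark \ref{rem: MLP Dependence with noise and transmit power}} (equivalently the rows of Table \ref{tab:Summary Results}) to write, for $k=\{\mathrm{SLP},\mathrm{MLP}\}$,
\[
\mathbb{E}\left[\mathrm{EE}\right]=\frac{A}{\rho_{\mathrm{VT}}}\mathrm{e}^{-c_2/\rho_{\mathrm{VT}}},\qquad \bar{F}_{\mathrm{SINR}}(\gamma)=c_1^{(k)}\mathrm{e}^{-c_2/\rho_{\mathrm{VT}}},
\]
with $A:=c_1^{(k)}\log_2(1+\gamma)>0$ and $c_2=\gamma\sigma_n^2(\tau r_{\mathrm{bc}})^{\alpha}>0$, the key point being that $c_1^{(k)}$, $c_2$ and $A$ do not depend on $\rho_{\mathrm{VT}}$. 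The constraint $\bar{F}_{\mathrm{SINR}}(\gamma)\ge c_1^{(k)}\delta$ then collapses to $\mathrm{e}^{-c_2/\rho_{\mathrm{VT}}}\ge\delta$, i.e., taking logarithms (legitimate since both sides are strictly positive and $0<\delta<1$), to $\rho_{\mathrm{VT}}\ge c_2/\ln(1/\delta)=:\rho_{\min}$, so the feasible set is the half‑line $[\rho_{\min},\infty)$.

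Next I would analyze $f(\rho):=(A/\rho)\mathrm{e}^{-c_2/\rho}$ on $(0,\infty)$. Differentiating gives $f'(\rho)=A\,\mathrm{e}^{-c_2/\rho}\rho^{-3}(c_2-\rho)$, hence $f$ is strictly increasing on $(0,c_2)$, attains its unique global maximum at $\rho=c_2$, and is strictly decreasing on $(c_2,\infty)$. Maximizing $f$ over $[\rho_{\min},\infty)$ is then immediate once $\rho_{\min}$ is located relative to $c_2$: if $\rho_{\min}\le c_2$ the unconstrained maximizer $c_2$ is feasible and hence optimal; if $\rho_{\min}>c_2$ then $f$ is strictly decreasing throughout the feasible set, so the optimum is the left endpoint $\rho_{\min}$. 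Translating the dichotomy into $\delta$: $\rho_{\min}=c_2/\ln(1/\delta)\le c_2$ iff $\ln(1/\delta)\ge 1$ iff $0<\delta\le\mathrm{e}^{-1}$, and the reverse inequality corresponds to $\mathrm{e}^{-1}<\delta<1$. This yields $\rho_{\mathrm{VT}}^{\star}=c_2$ in the first regime and $\rho_{\mathrm{VT}}^{\star}=c_2\ln^{-1}(1/\delta)$ in the second, as claimed, and since the $\rho_{\mathrm{VT}}$‑dependence is identical in both rows of Table \ref{tab:Summary Results}, the same $\rho_{\mathrm{VT}}^{\star}$ serves SLP and MLP.

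Finally I would substitute the two optimal values back into the expressions above. At $\rho_{\mathrm{VT}}^{\star}=c_2$ one has $\mathrm{e}^{-c_2/\rho_{\mathrm{VT}}^{\star}}=\mathrm{e}^{-1}$, giving $\mathbb{E}[\mathrm{EE}]^{\star}=(c_1^{(k)}/c_2)\log_2(1+\gamma)\,\mathrm{e}^{-1}$ and $\bar{F}_{\mathrm{SINR}}^{\star}(\gamma)=c_1^{(k)}\mathrm{e}^{-1}$; at $\rho_{\mathrm{VT}}^{\star}=c_2\ln^{-1}(1/\delta)$ one has $\mathrm{e}^{-c_2/\rho_{\mathrm{VT}}^{\star}}=\delta$, giving $\mathbb{E}[\mathrm{EE}]^{\star}=(c_1^{(k)}/c_2)\ln(1/\delta)\log_2(1+\gamma)\,\delta$ and $\bar{F}_{\mathrm{SINR}}^{\star}(\gamma)=c_1^{(k)}\delta$, matching the three displayed formulas. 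The only step requiring genuine care — the ``hard part'' — is the sign analysis of $f'$ that establishes strict unimodality with interior maximizer $c_2$; once that is in place, the boundary‑versus‑interior case split and the back‑substitution are pure bookkeeping.
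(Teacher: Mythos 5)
Your proposal is correct and follows essentially the same route as the paper's proof: reduce the constraint to the half-line $\rho_{\mathrm{VT}}\ge c_2\ln^{-1}(1/\delta)$, locate the unconstrained maximizer at $\rho_{\mathrm{VT}}=c_2$, split on whether it is feasible ($\delta\le \mathrm{e}^{-1}$) or the constraint binds, and back-substitute. If anything, your explicit sign analysis of $f'(\rho)=A\,\mathrm{e}^{-c_2/\rho}\rho^{-3}(c_2-\rho)$ is a cleaner justification of unimodality than the paper's assertion that the EE is concave on $(0,\infty)$ (it is only quasi-concave), but the conclusion and case structure are identical.
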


\begin{proof}
The average EE is a concave function in the open interval  $\rho_{\rm VT} \in (0,\infty)$. Hence, it has a single critical point, which is placed at $\rho_{\rm VT}=c_2$, that leads to the global maximum. Nevertheless, the constrain over the capture probability imposes that the solution must lie between the following interval $\rho_{\rm VT} \in [c_2 \ln^{-1} (1/\delta),\infty)$. It should be noticed that the average EE is an increasing function for $\rho_{\rm VT}  < c_2$ and a decreasing function for $\rho_{\rm VT}  > c_2$. Hence, if $\delta < {\rm e}^{-1}$, the critical point fulfills the constrain over the capture probability, which leads to the solution $\rho_{\rm VT}^\star = c_2$. On the other hand, if $\delta > {\rm e}^{-1}$, the constrain governs the optimal transmit power which is now $\rho_{\rm VT}^\star = c_{2}\ln ^{-1}\left( 1/\delta  \right)$.
\end{proof}

\begin{remark}[Independence of the optimal transmit power]
\label{rem:opt}
In view of \textbf{Theorem \ref{theo:Optimal transmit power}} it can be stated that the optimal transmit power is independent of the considered scheme, i.e., SLP or MLP. This is due to the fact that the optimal transmit power only depends on $c_2$ and $\delta$. 
\end{remark}

\section{Numerical Results}
\label{sec:Numerical Results}

\begin{table}
%\begin{table}[t]
\renewcommand{\arraystretch}{1.3}
\caption{Simulation Parameters}
\label{tab:Simulation Parameters}
\centering
\begin{tabular}{ c c c c }
\toprule
Parameter & Value & Parameter & Value \\
\toprule
$(\tau,\alpha)$ & $(490,1.68)$ & $\lambda_L$ (vehicles/m) & $0.8\cdot 84^{-1}$\\
\hline
$d_{\rm safe}$ (m) & $42$ & $n_L$ & 2 \\
\hline
$d_{\cal A}$ (m) & $42$  & $n_{\rm AR}$ & $10$ \\
\hline
$p_a$ (non-periodic) & $0.25$ & $(m_{\rm bc},t_{\rm rep})$ (bits,ms) & $2400,100$\\
\hline
$\rho_{\rm VT}$ (dBm/Hz) & $-40$ & $\gamma$ (dB) & $5$ \\
\hline
$\sigma_n^2$ (dBm/Hz) & $-165$ & $d_{\rm max}$ (km) & $56$ \\
\hline
$b_w$ (MHz) & $9$ & $r_{\rm bc}$ (m) & 150 \\
\bottomrule
\end{tabular}
\end{table}

In this section, analytical results are illustrated and validated with extensive Monte Carlo simulations in order to assess GLOC performance. 
The simulation setup is chosen from the guidelines given in \cite{Karedal11,3gpp2016, Chen10}. In particular, it is considered a velocity-dependent safe distance between vehicles, according to 
3GPP simulation assumptions for LTE V2X \cite{3gpp2016}: $d_{\rm safe}({\rm m}) = 2.5 \cdot v ({\rm m/s})$. Assuming a vehicle velocity of $60$ km/h, this leads to a safe distance of $42$ m. Such a minimum distance yields to a maximum vehicle density per lane of $\lambda_{L,{\rm max}}=1/(2 d_{\rm safe})=84^{-1}$ \cite{Haenggi13}. A high density of vehicles is considered, and hence the density per lane is set to $80\%$ of the maximum density. The system bandwidth is $10$ MHz, as given in \cite{3gpp2016}; however, excluding guard-bands in LTE this leads to $9$ MHz of useful bandwidth. With non-periodic messages, it is considered a probability of being active, i.e., with data to transmit, of $0.25$. On the other hand, with periodic messages it is considered a message size of $2400$ bits and reporting time of $100$ ms as given in \cite {3gpp2016}.
The path loss is taken from \cite{Karedal11}, where a vast measurement campaign over $5.2$ GHz is performed. 
It is considered that VTs transmit with $-40$ dBm/Hz, a thermal noise power of $-174$ dBm/Hz and a noise figure of $9$ dB as pointed out in \cite{3gpp2016}, which leads to $\sigma_n^2=-165$ dBm/Hz. 
%
%The parameter 
%$d_\mathrm{max}$ is chosen as the minimum distance in which the average received power from a transmitting vehicle is $a$ times below the noise floor, $\sigma_n^2$. Thus we have 
%$
%d_{\max }\ge \frac{1}{\tau }\left( \frac{\rho_{\mathrm{VT}}}{\sigma _{n}^{2}}a \right)^{\frac{1}{\alpha }}
%$. With aforementioned parameters and $a=10$, $d_\mathrm{max}$ should be higher than $4341$ m. 
%
Simulation parameters are summarized in Table \ref{tab:Simulation Parameters}. Simulations are carried out averaging over $10^4$ spatial realizations. Through this section, analytical results are drawn with solid lines whereas markers are used for simulation results. As stated in \textbf{Remark \ref{rem:Exact analysis}}, results related with SLP are exact whereas results related to MLP are approximations. Nevertheless a good match between simulation and analysis is observed in both cases. 

\subsection{Impact of the SINR threshold and the traffic activity}
\label{sec:Rate and density}

\begin{figure}[t]
\centering
\includegraphics[width=\@figSize]{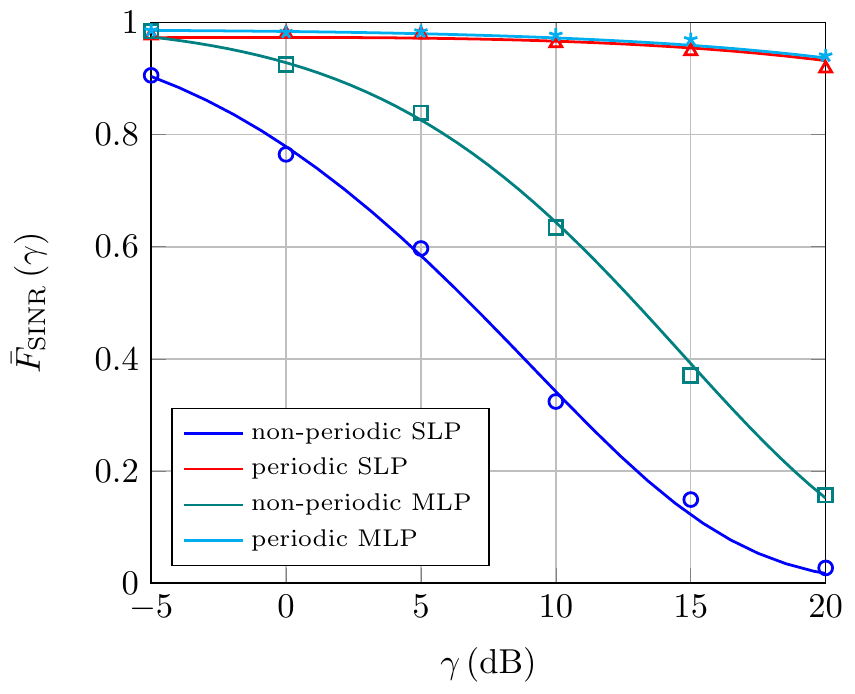}
\caption{Capture probability versus the SINR threshold, $\gamma$, for SLP ($d_{\rm safe}=0$ m) and for MLP ($d_{\rm safe}=42$) with periodic and non-periodic broadcast messages. Analytical results are represented with solid lines whereas simulation results are represented with marks.}
\label{fig:ccdfSINR_gamma} 
\end{figure}

\begin{figure}[t]
\centering
\includegraphics[width=3.2in]{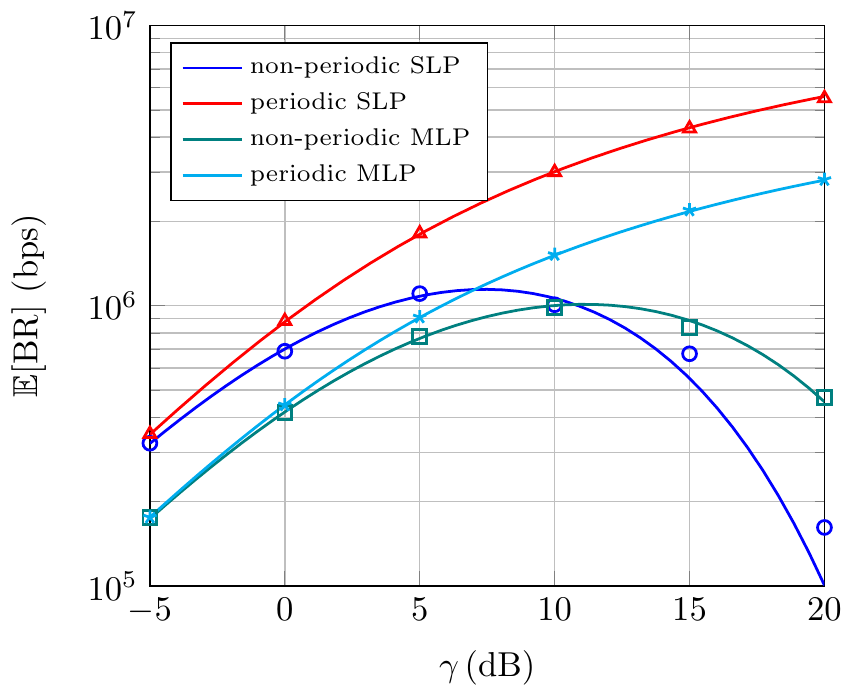}
\caption{Average BR versus the SINR threshold, $\gamma$, for SLP ($d_{\rm safe}=0$ m) and for MLP ($d_{\rm safe}=42$ m) with periodic and non-periodic broadcast messages.}
\label{fig:avBR_gamma}
\end{figure}

\begin{figure}[t]
\centering
\includegraphics[width=\@figSize]{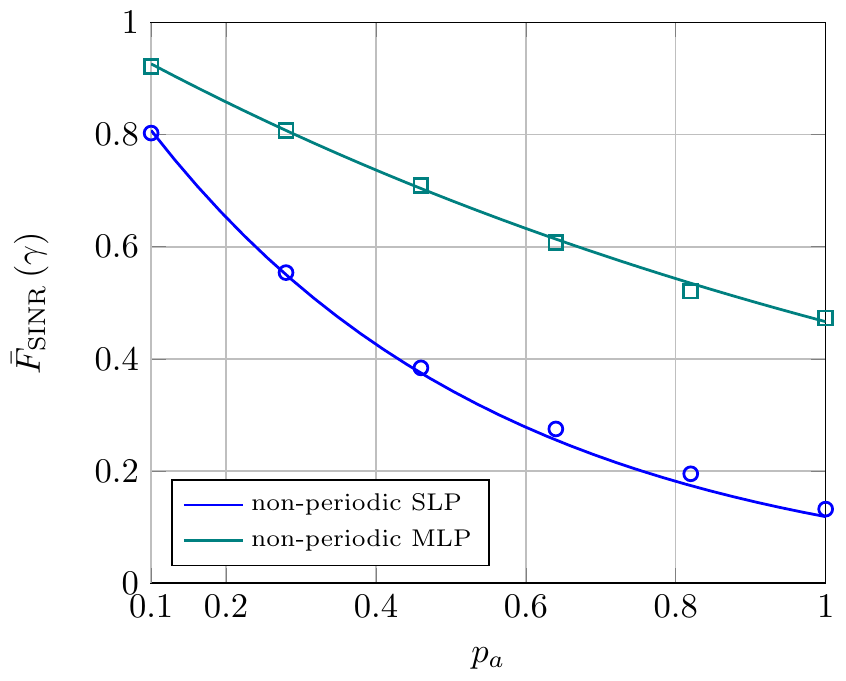}
\caption{Capture probability versus the probability of being active related to non-periodic broadcast messages with SLP and MLP schemes.}
\label{fig:ccdfSINR_pa}
\end{figure}

\begin{figure}[t]
\centering
\includegraphics[width=\@figSize]{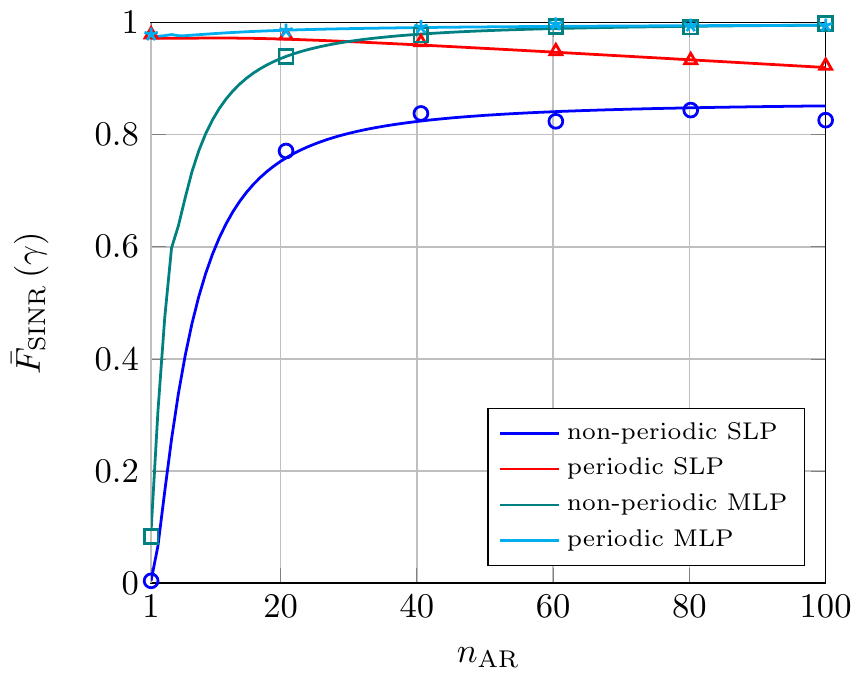}
\caption{Capture probability versus the number of ARs, $n_{\rm AR}$, for SLP ($d_{\rm safe}=0$ m) and for MLP ($d_{\rm safe}=42$ m) with periodic and non-periodic broadcast messages.}
\label{fig:ccdfSINR_nAR}
\end{figure}

In this sub-section, it is evaluated the effect of the density of concurrent transmitters and also, the impact of rate, or equivalently, the SINR threshold, $\gamma$, that leads to the correct reception of a message transmitted with $\log_2 (1+\gamma)$ bps/Hz. Fig. \ref{fig:ccdfSINR_gamma} illustrates the capture probability for SLP and MLP with both periodic and non-periodic messages versus $\gamma$. In the case of non-periodic messages, MLP scheme achieves a higher capture probability than SLP. This is due to the fact that with MLP there is no intra-segment interference, as well as to the fact that the density of interfering vehicles is smaller. If we focus on the case of periodic messages, it is observed a higher capture probability than in the case of non-periodic messages. This is because in this latter case, the probability of being active, which depends on (\ref{eq:pa periodic}), is greatly smaller than in the case of non-periodic messages. According to \textbf{Remark \ref{rem:Density of active vehicles}}, the density of active vehicles in case of periodic reporting is the same in SLP and MLP schemes. However, SLP, contrary to MLP, has intra-segment interference, which explains why SLP has a sightly smaller capture probability, as it can be observed from Fig. \ref{fig:ccdfSINR_gamma}. 

The average BR of correctly received messages versus the SINR threshold is shown in Fig. \ref{fig:avBR_gamma}. Three trends can be observed from the figure. Firstly, it can be noticed that the average BR of periodic messages is higher than non-periodic messages. This is related to the smaller capture probability that exhibits non-periodic messages as it has been discussed above. Secondly, it can be observed that SLP leads to a higher average BR, which is due to the fact that with this scheme the bandwidth per AR is higher. Finally, it is observed that there exist a value of $\gamma$ that maximizes the average BR. 

Fig. \ref{fig:ccdfSINR_pa} illustrates that the capture probability is a decreasing function with respect to $p_a$ as expected, which is related to the higher interference as the density of transmitting vehicles increases. 

\subsection{Impact of the number of Access Resources}
\label{sec:nAR}

\begin{figure}[t]
\centering
\includegraphics[width=\@figSize]{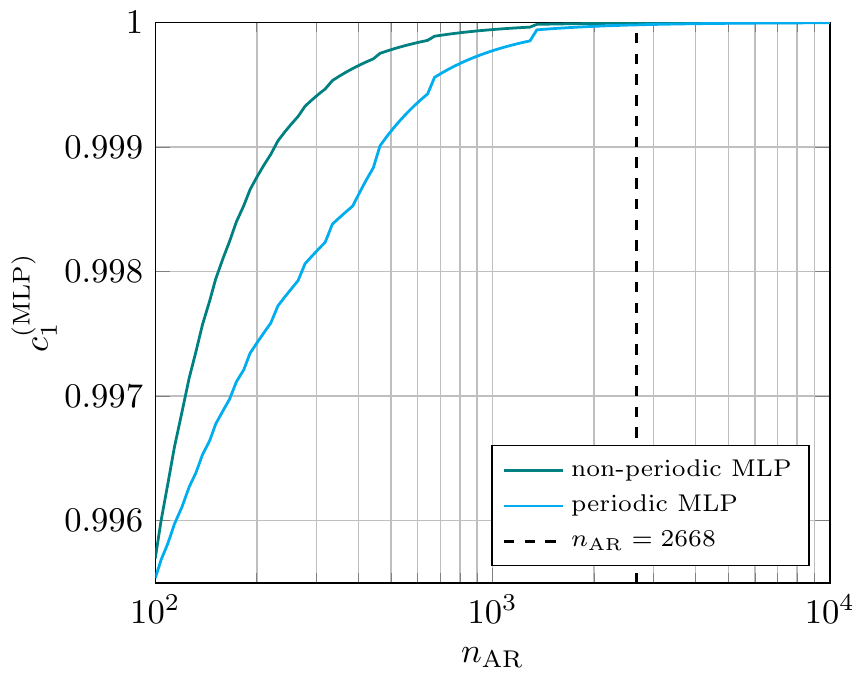}
\caption{Maximal capture probability, $c_1^{(\rm MLP)}$, for $n_{\rm AR}$ ranging from $10^2$ to $10^4$, under MLP ($d_{\rm safe}=42$ m) with periodic and non-periodic broadcast messages with $r_{\rm bc}=150$ m, $d_{\rm max}=56$ km and $\gamma=5$ dB.}
\label{fig:ccdfSINR_nAR_zoom}
\end{figure}

%\begin{figure}[t]
%\centering
%\includegraphics[width=\@figSize]{}
%\caption{Zoom in of the capture probability versus the number of ARs, $n_{\rm AR}$, for MLP ($d_{\rm safe}=42$ m) with periodic broadcast messages.}
%\label{fig:ccdfSINR_nAR_zoom}
%\end{figure}

\begin{figure}[t]
\centering
\includegraphics[width=\@figSize]{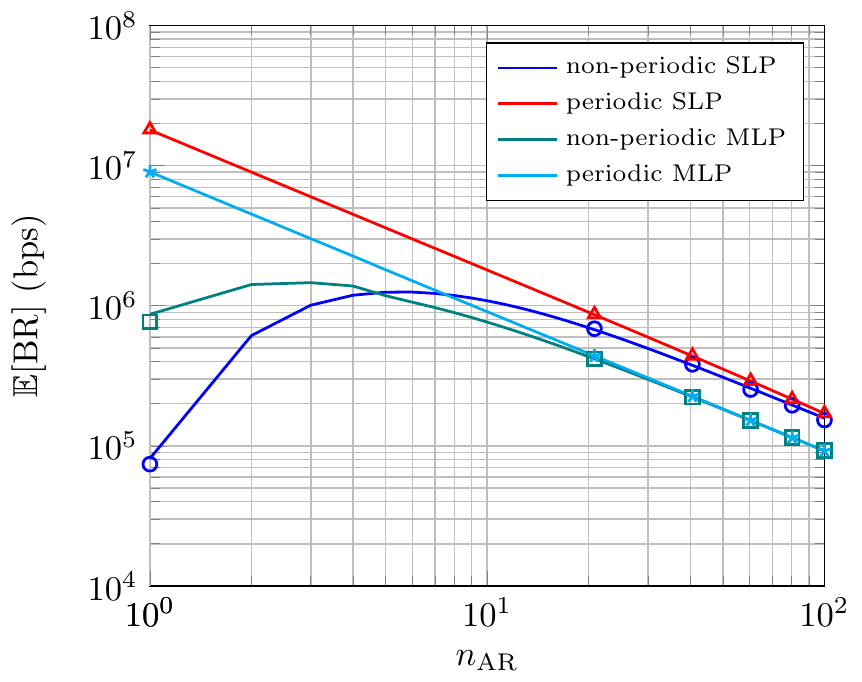}
\caption{Average BR versus the number of ARs, $n_{\rm AR}$, for SLP ($d_{\rm safe}=0$ m) and for MLP ($d_{\rm safe}=42$ m) with periodic and non-periodic broadcast messages.}
\label{fig:avBR_nAR}
\end{figure}

%\begin{figure}[t]
%\centering
%\includegraphics[width=\@figSize]{}
%\caption{Average BR for $n_{\rm AR}$ ranging from $10^2$ to $10^5$, under MLP ($d_{\rm safe}=42$ m) with periodic and non-periodic broadcast messages with $r_{\rm bc}=150$ m, $d_{\rm max}=56$ km and $\gamma=5$ dB.}
%\label{fig:avBR_nAR_zoom}
%\end{figure}

\begin{figure}[t]
\centering
\includegraphics[width=\@figSize]{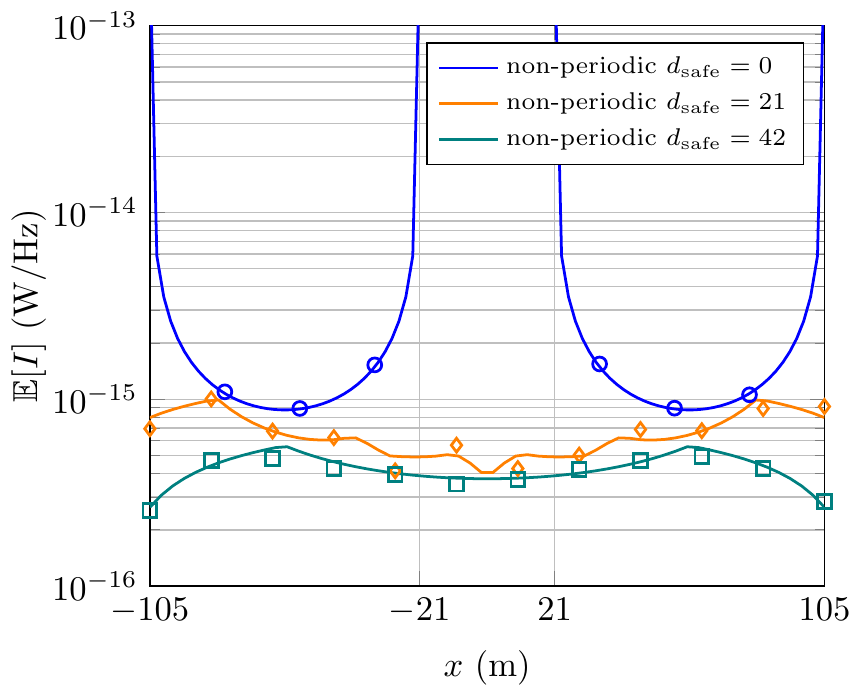}
\caption{Average interference for different locations, $x \in [-105,+105]$ m, for SLP and MLP with $d_{\rm safe}=\{21,42\}$ with non-periodic messages, $\lambda=0.8/84$ vehicles/m and
$n_{\rm AR}=3$.}
\label{fig:avI_x}
\end{figure}

\begin{figure}[t]
\centering
\includegraphics[width=\@figSize]{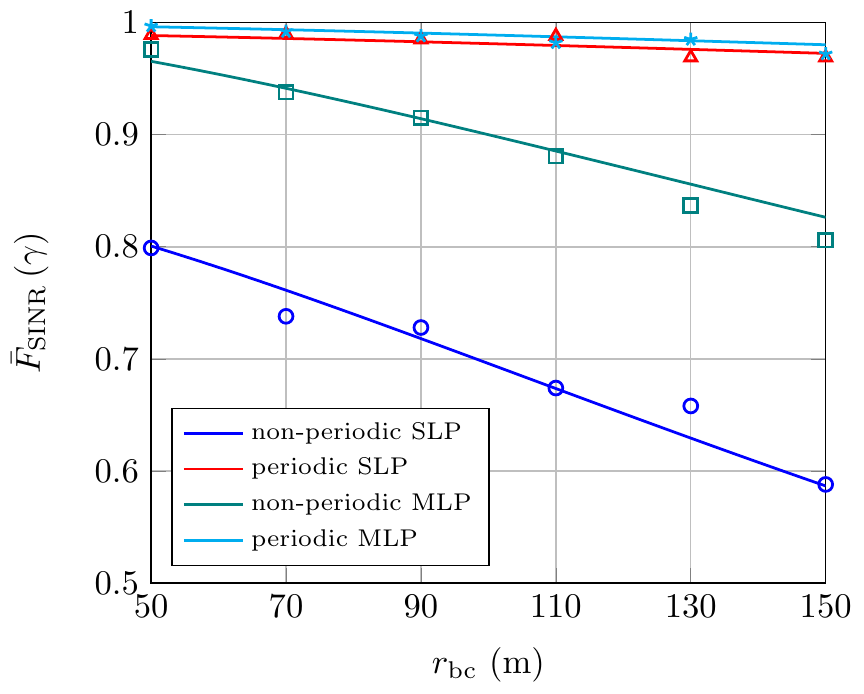}
\caption{Capture probability versus the broadcast distance, $r_{\rm bc}$, for SLP ($d_{\rm safe}=0$ m) and for MLP ($d_{\rm safe}=42$ m) with periodic and non-periodic broadcast messages.}
\label{fig:ccdfSINR_rbc}
\end{figure}

Throughout this sub-section, the impact of the number of ARs is studied. In particular, Fig. \ref{fig:ccdfSINR_nAR} illustrates the capture probability when $n_{\rm AR}$ is ranging from $1$ to $100$. It is observed that in the case of non-periodic messages, with a high active probability ($p_a=0.25$), increasing $n_{\rm AR}$ greatly increases the capture probability for both SLP and MLP. 
This scenario highlights the great potential of GLOC for non-periodic messages where reliability plays a crucial role. 
Such a trend is also observed with periodic messages under MLP; however, with SLP the capture probability of periodic messages is a decreasing function with $n_{\rm AR}$. 
% @@@ Comment: Explain why with SLP and periodic messages capture probability is a decreasing function with nAR. 
To understand this, let us remark that increasing $n_{\rm AR}$ tend to reduce the interference, since it rises the co-channel distance. On the other hand, increasing $n_{\rm AR}$, leads to an increment of the probability of being active, $p_a$, which also increases the intra-segment interference. The growth of the probability of being active is due to the fact that the time to transmit a periodic message rises with $n_{\rm AR}$. Hence, in case of SLP, intra-segment interference dominates over the capture probability,  which diminishes with $n_{\rm AR}$. 
In case of MLP the capture probability grows with respect to $n_{\rm AR}$, since there is no intra-segment interference, and also because rising $n_{\rm AR}$ increases the distance to interfering vehicles. In particular, with MLP it is obtained reliabilities of $99.55\%$ and $99.76\%$  with non-periodic and periodic messages respectively, at a distance of $150$ m with $n_{\rm AR}=100$. 

V2X communications have to be highly reliable. The exact reliability target changes from standard to standard and depending on the application, e.g.,  $95\%$ (ITS), $99\%$ (LTE V2X) and $99.999\%$ (5G V2X)   \cite{Chen10,5GAA16,3gpp2016}. 
Hence, providing a very high capture probability is a paramount issue. One of the great benefits of MLP is that, according to \textbf{Remark \ref{rem:Noise-limited regime}}, it is possible to move the system from an interference-limited into a noise-limited regime by increasing $n_{\rm AR}$; once the system is noise-limited, the capture probability can be increased by increasing the transmit power as stated in \textbf{Remark \ref{rem: MLP Dependence with noise and transmit power}}. This is illustrated in Fig. \ref{fig:ccdfSINR_nAR_zoom} where it is shown the maximum capture probability, $c_1^{(\rm MLP)}$, which it is achieved for $\rho_{\rm VT} \to \infty$. As stated in \textbf{Remark \ref{rem:Noise-limited regime}}, for our simulation assumptions the system is noise-limited for $n_{\rm AR} > 2668$. 

Finally, Fig. \ref{fig:avBR_nAR} % and \ref{fig:avBR_nAR_zoom} 
illustrates the average BR versus $n_{\rm AR}$. It is shown that, when $10<n_{\rm AR}<100$, SLP achieves a higher average BR than MLP since ARs have more bandwidth in the former case. Regarding non-periodic messages, it is shown that there exist an optimal value of $n_{\rm AR}$, which maximizes the average BR. 
Finally, it is shown that the average BR is a decreasing function with $n_{\rm AR}$, and it tends to $0$ as $n_{\rm AR} \to \infty$, as stated in \textbf{Remark \ref{rem:Average rate when nRB tends to infinity}}.

\subsection{Impact of the broadcast distance and the segment size: $r_{bc}$ and $d_{\mathcal{A}}$}
\label{sec:Spatial design}

\begin{figure}[t]
\centering
\includegraphics[width=\@figSize]{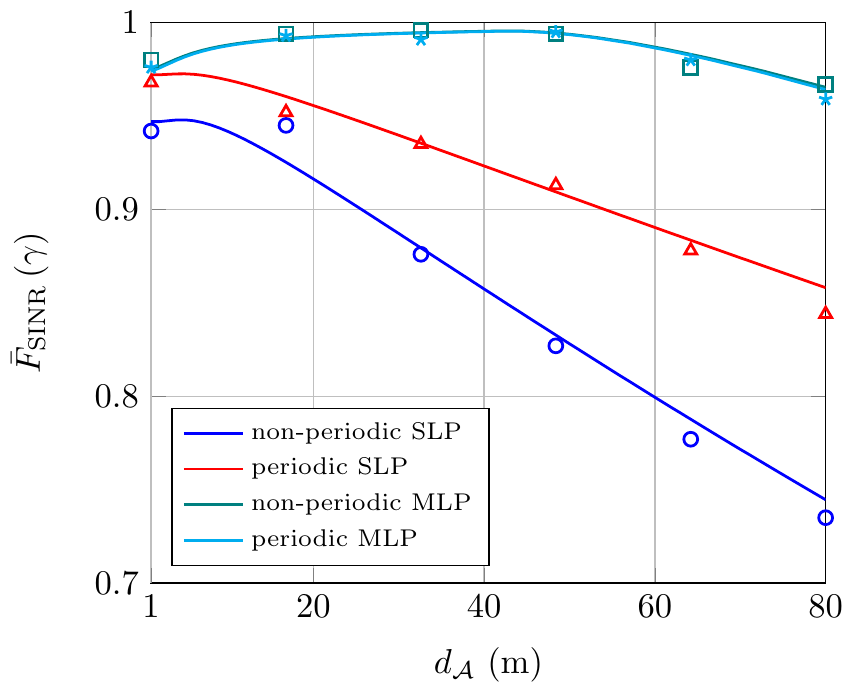}
\caption{Capture probability versus the segment size, $d_\mathcal{A}$, for SLP ($d_{\rm safe}=0$ m) and for MLP ($d_{\rm safe}=42$ m) with periodic and non-periodic broadcast messages.}
\label{fig:ccdfSINR_dA}
\end{figure}

\begin{figure}[t]
\centering
\includegraphics[width=\@figSize]{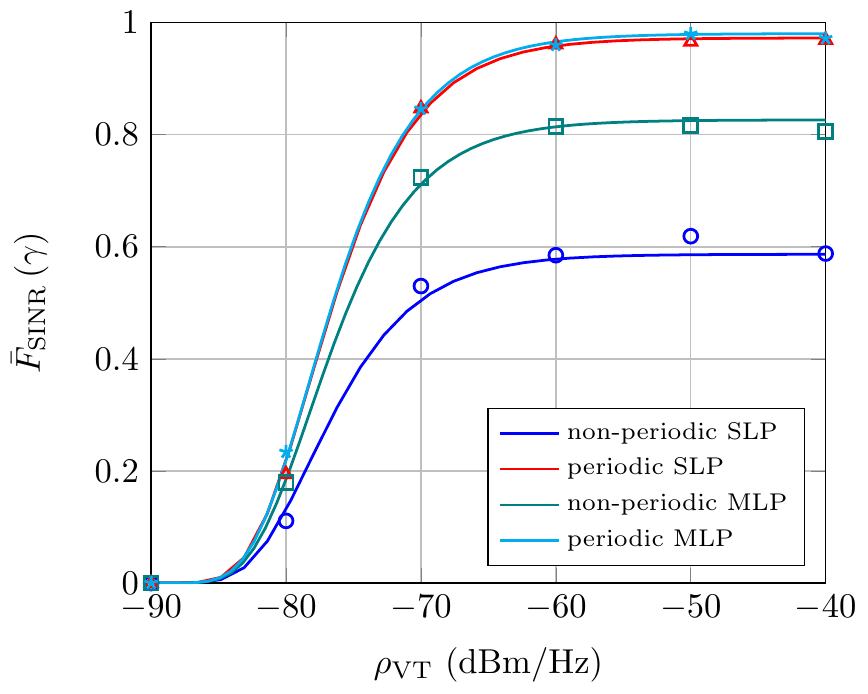}
\caption{Capture probability versus the transmit power per Hz, $\rho_{\rm VT}$, for SLP ($d_{\rm safe}=0$ m) and for MLP ($d_{\rm safe}=42$ m) with periodic and non-periodic broadcast messages.}
\label{fig:ccdfSINR_rhoVT}
\end{figure}

%%% DESCARTADA PORQUE NO PARECE MUY RELEVANTE
%\begin{figure}[t] % DESCARTADA PORQUE NO PARECE MUY RELEVANTE
%\centering
%\includegraphics[width=\@figSize]{}
%\caption{.}
%\label{fig:}
%\end{figure}

\begin{figure}[t]
\centering
\includegraphics[width=\@figSize]{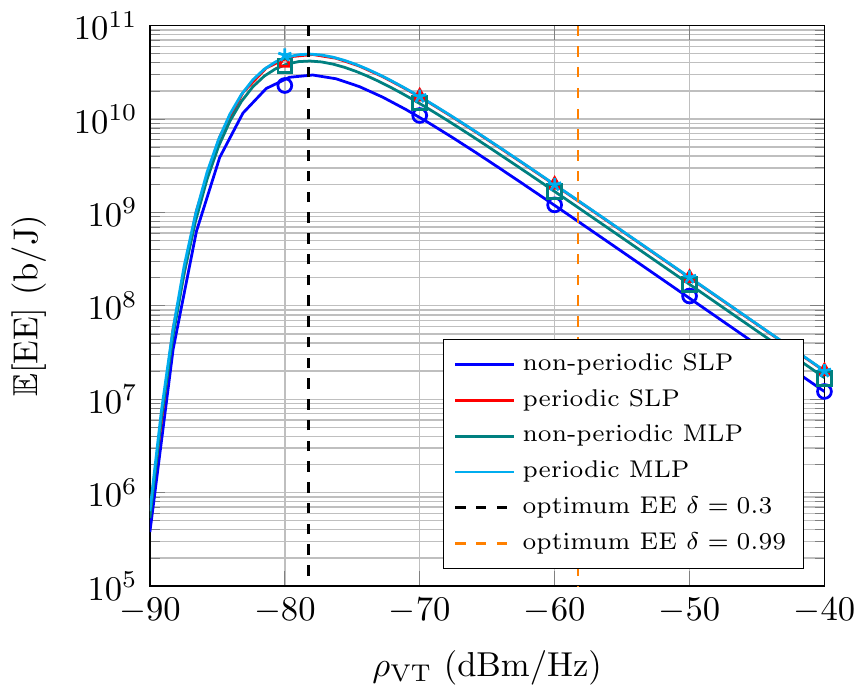}
\caption{Average EE versus the transmit power per Hz, $\rho_{\rm VT}$, for SLP ($d_{\rm safe}=0$ m) and for MLP ($d_{\rm safe}=42$ m) with periodic and non-periodic broadcast messages.}
\label{fig:avEE_rhoVT}
\end{figure}

\begin{figure}[t]
\centering
\includegraphics[width=\@figSize]{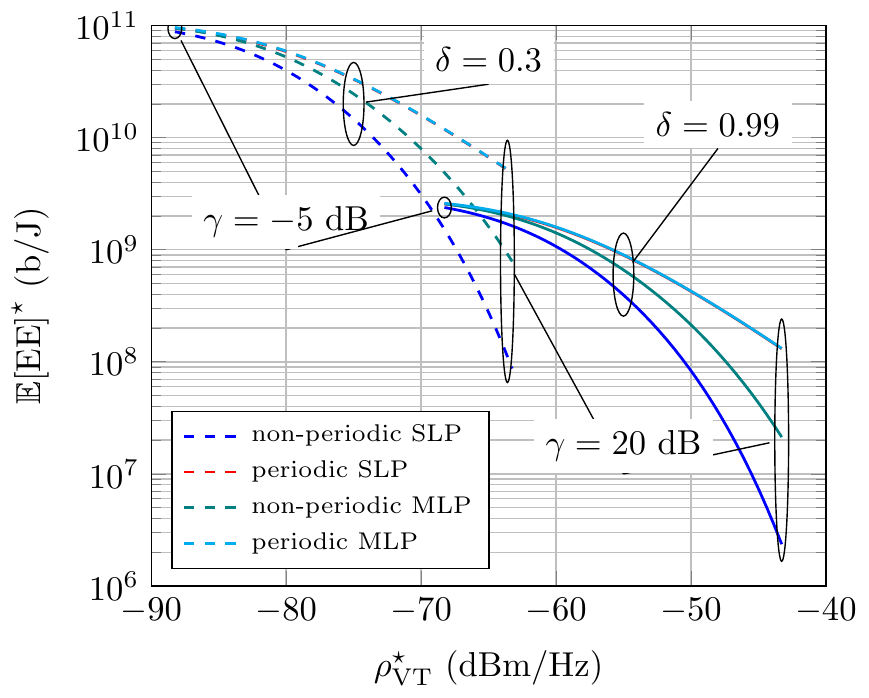}
\caption{Optimum EE versus the optimum transmit power per Hz, $\rho^\star_{\rm VT}$, for SLP ($d_{\rm safe}=0$ m) and for MLP ($d_{\rm safe}=42$ m) with periodic and non-periodic broadcast messages. Each pair of values $(\mathbb{E}[{\rm EE}]^\star,\rho^\star_{\rm VT})$ is obtained for a different SINR threshold, $\gamma$, ranging from $-5$ dB to $20$ dB.}
\label{fig:optAvEE_OptRhoVT}
\end{figure}

The average interference is evaluated in Fig. \ref{fig:avI_x} for locations, $x \in \mathbb{R}$, ranging from $-105$ to 
$+105$ m, where the probe segment is centered at $x=0$. As it is stated in \textbf{Remark \ref{rem:Covergence of the interference}}, it can be observed that without a minimum distance between points, the interference does not converge within co-channel segments, which are centered at multiples of $n_{\rm AR}$. It is used the same density $\lambda$ with both SLP and MLP to assess the effect of the minimum distance between vehicles in the interference. As it is expected, the average interference is reduced as the minimum distance between points, $d_{\rm safe}$, is increased. 

Fig. \ref{fig:ccdfSINR_rbc} shows the capture probability versus the broadcast distance, where it is observed the reduction in capture probability as $r_{\rm bc}$ increases. However, it can be observed that the dependence with $r_{\rm bc}$ is higher in case of non-periodic than in case of periodic messages.

% COMENTARIO: MUCHO INCREASE, CANSE DE LEER. REESCRIBIR.
The capture probability versus the segment size is represented in Fig. \ref{fig:ccdfSINR_dA}. To understand this result, two aspects should be taken into account. On the one hand, increasing $d_{\cal A}$ rises the distance to co-channel segments, since this distance depends on $n_{\rm AR} d_{\cal A}$. However, if $d_{\cal A}$ grows the segment-size also rises, which may lead to intra-segment interference.
With SLP, it is shown that the capture probability decreases as $d_{\cal A}$ increases roughly below $10$ m. This is because, since there exist intra-segment interference, the interference rises as $d_{\cal A}$ increases. With MLP, the capture probability grows as $n_{\rm AR}$ rises for $n_{\rm AR} < 42$ m. This is due to the fact that increasing $d_{\cal A}$ rises the distance to co-channel segments. Then, for $n_{\rm AR} > 42$ m, if $n_{\rm AR}$ grows the capture probability decreases since there is now intra-segment interference, which is due to the fact that $d_{\cal A} > d_{\rm safe}$.

\subsection{Transmit Power and Optimum Energy Efficiency}
\label{sec:Transmit Power and Optimum}

\begin{figure}[t]
\centering
\includegraphics[width=\@figSize]{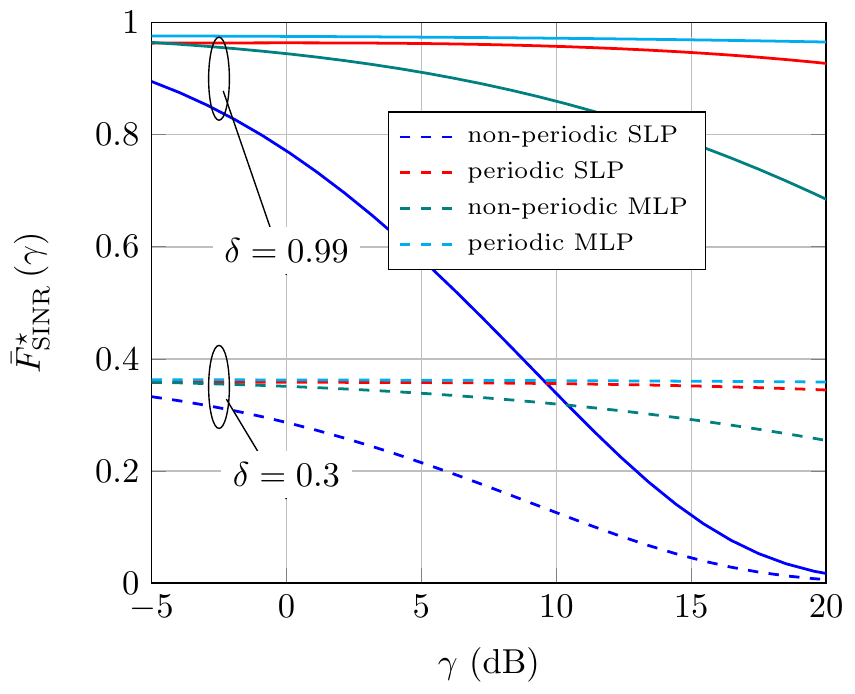}
\caption{Capture probability versus the SINR threshold, $\gamma$, associated with the optimum transmit power, 
$\rho^\star_{\rm VT}$ as appear in Fig \ref{fig:optAvEE_OptRhoVT}.}
\label{fig:optPc_gamma}
\end{figure}

In this sub-section it is shown the effect of the transmit power. Then, its optimal value in terms of EE is studied. 
\textbf{Remark \ref{rem: MLP Dependence with noise and transmit power}} is illustrated in Fig. \ref{fig:ccdfSINR_rhoVT}. It is shown that the capture probability is an increasing function with $\rho_{\rm VT}$, and its maximum value, $c_1^{(k)}$, is roughly achieved when $\rho_{\rm VT}$ is high enough. 

The average EE versus $\rho_{\rm VT}$ is shown in Fig. \ref{fig:avEE_rhoVT}, where it is observed that the global (unconstrained optimum) of the EE is achieved with $\delta=0.3$, as stated in \textbf{Theorem \ref{theo:Optimal transmit power}}. If a higher capture probability, i.e., higher $\delta$, must be satisfied, the optimal transmit power leads to a smaller capture probability, as it is observed for 
$\delta=0.99$ in the figure. Specifically, for MLP with $\delta=0.99$, it is observed that the optimal EE, which is around $1.4 \cdot 10^9$ b/J, is achieved at $\rho_{\rm VT}=-58.27$ dBm/Hz. In that case, the capture probability of periodic and non-periodic messages is $97.41\%$ and $96.27\%$ respectively, but the transmit power is around $20$ dB smaller than in the baseline case ($\rho_{ \rm VT}=-40$ dBm/Hz). 

Fig. \ref{fig:optAvEE_OptRhoVT} illustrates the optimum EE versus the optimal transmit power for different $\gamma$ values ranging from $-5$ to $20$ dBs. Hence, in this figure, every pair of the form $(\mathbb{E}[{\rm EE}]^\star,\rho^\star_{\rm VT})$ is obtained for a different SINR threshold. As it is expected, a smaller $\gamma$ yields to a higher  optimal EE, and a smaller optimal transmit power. 
Additionally, increasing the capture probability reduces the optimal EE and increases the optimal transmit power. 
The capture probability associated with Fig. \ref{fig:optAvEE_OptRhoVT} is illustrated in Fig. \ref{fig:optPc_gamma}. 

\section{Conclusions}
\label{sec:Conclusion}
This paper proposes an abstraction model that allows a tractable analysis and optimization of Geo-Location based access in vehicular networks. 
%-> @@@ Comment: Ellaborate, so the conclusions are self-contained (DONE)
With such a technique, the road is divided in segments and a single orthogonal Access Resource (AR) is associated with a given segment. The mapping between segments and ARs is made aiming to maximize the co-channel distance. Vehicles determines its corresponding segment based on its geographical position,  therefore reducing the interference when accessing the channel. 
Two frequency allocation schemes are considered: Single-Lane Partition (SLP) and Multi-Lane Partition (MLP). 
MLP distinguish between different lanes and hence it can avoid intra-sement interference, however it uses the bandwidth less efficiently than SLP since it requires orthogonal bandwidth allocations to each lane.
A wide set of analytical results 
%, which are validated against extensive Monte Carlo simulations, 
are obtained aiming at providing a deep understanding about the proposed schemes. From these analytical results, theoretical insights are derived. 
In particular, it has been shown that: (i) the capture probability is an increasing function with respect to the transmit power with exponential dependence; (ii) the system is noise-limited for MLP when the number of ARs is high enough, whereas it is interference-limited in case of SLP; (iii) the average interference diverges when it is evaluated in co-channel segments with SLP, whereas it always converges for the case of MLP. 
Interestingly, (iii) means that with MLP it is possible to obtain a capture probability as high as necessary, by increasing the number of ARs and transmit power. 
%
% @@@ Comment: Add numbers to the following paragraph to make your results more interesting (DONE)
This facilitates the implementation of safety applications that requires very high reliability. In particular, with MLP it has been obtained reliabilities of $99.55\%$ and $99.76\%$  with non-periodic and periodic messages respectively, at a distance of $150$ m. 
Finally, the optimum transmit power that achieves maximal EE subject to a minimum capture probability is also obtained. 
Specifically, for MLP it is observed that the optimal EE is achieved with a transmit power of $-58.27$ dBm/Hz which is $20$ dB smaller than baseline case while keeping a $99\%$ of the maximum capture probability.

% +++++++++++++++++++++++++++
% APPENDIX
% +++++++++++++++++++++++++++
\appendices

\section{Proof of Lemma \ref{lem:SLP LTI}}
\label{app:Proof of lem SLP LTI}

In case of SLP, the Laplace of the interference can be written as follows 

\begin{align}
\label{eq: LTI_1}
& \mathcal{L}_{I\left( x \right)}\left( s \right)=\mathbb{E}_{I\left( x \right)}\left[ \mathrm{e}^{-sI\left( x \right)} \right] 
\nonumber \\ & \quad \overset{\mathrm{(a)}}{\mathop{=}}\,
\mathbb{E}_{\Phi ^{(\mathrm{a})}}
\Bigg[ \prod\limits_{\mathrm{VT}_{i}\in \Phi ^{(\mathrm{a})} \setminus 
\left\{ \mathrm{VT}_{0} \right\} \bigcap{\mathfrak{b}\left( x,d_{\max } \right)}}\mathbb{E}_{H_{\mathrm{VT}_{i}}}\mathrm{exp 
\Bigg(}-sH_{\mathrm{VT}_{i}} 
\nonumber \\ 
& \quad \times \left( \tau |\mathrm{VT}_{i}-x| \right)^{-\alpha }
\rho_{\mathrm{VT}}\mathbf{1}\left( \mathrm{VT}_{i}\in \mathcal{A}^{(\mathrm{AR}_{0})} 
\right) \Bigg] \Bigg) 
\nonumber \\ & \quad \overset{\mathrm{(b)}}{\mathop{=}}\,
\exp \Bigg(-\lambda \cdot p_{a}\int
\limits_{\mathbb{R}}
{\frac{s\left( \tau |y-x| \right)^{-\alpha }\rho_{\mathrm{VT}}}{1+s\left( \tau |y-x| \right)^{-\alpha } \rho_{\mathrm{VT}}}} 
\nonumber \\ & \quad 
\times \mathbf{1}\left( y \in \mathcal{A}^{(\mathrm{AR}_{0})} \right)
\cdot \mathbf{1}\left( y \in \mathfrak{b}_x \left( d_{\max } \right)  \right) 
\mathrm{d}y \Bigg) 
\nonumber \\ & \quad \overset{\mathrm{(c)}}{\mathop{=}}\,
%\nonumber \\ & \quad =
\exp \Bigg(-\lambda \cdot p_{a}\sum\limits_{c=-\infty }^{\infty }\int\limits_{y=c\cdot n_{\mathrm{AR}}\cdot d_{\mathcal{A}}-\frac{d_{\mathcal{A}}}{2}}^{c\cdot n_{\mathrm{AR}}\cdot d_{\mathcal{A}}+\frac{d_{\mathcal{A}}}{2}} 
\nonumber \\ & \quad 
\frac{s\left( \tau |y-x| \right)^{-\alpha }\rho_{\mathrm{VT}}}{1+s\left( \tau |y-x| \right)^{-\alpha }\rho_{\mathrm{VT}}} \mathbf{1}\left( y\in \mathfrak{b}_x \left( d_{\max } \right) \right)\mathrm{d}y
\Bigg) 
\end{align}

\noindent where (a) comes after expressing exponential of the summation 
that defines the interference as a product over the PPP $\Phi^{(\mathrm{a})}$; 
(b) after applying the PGFL \cite{Haenggi13} of the PPP and performing expectation over the fading
and (c) after expressing the region $\mathcal{A}^{(\mathrm{AR}_0)}$ as a summation of
co-channel segments. Then, we can proceed as appears below

\begin{align}
\label{eq: LTI_2}
& \mathcal{L}_{I\left( x \right)}\left( s \right) 
\overset{\mathrm{(a)}}{\mathop{=}}\,
\exp \Bigg (-\lambda \cdot p_{a}\sum \limits_{c=-\left\lfloor d_{\max }/\left( n_{\mathrm{RB}}d_{\mathcal{A}} \right) \right\rfloor }^{\left\lceil d_{\max }/\left( n_{\mathrm{RB}}d_{\mathcal{A}} \right) \right\rceil }  
\nonumber \\ & \quad 
\int\limits_{t=\max \left( c\cdot n_{\mathrm{RB}}\cdot d_{\mathcal{A}}-\frac{d_{\mathcal{A}}}{2}-x,-d_{\max } \right)}^{\min \left( c\cdot n_{\mathrm{RB}}\cdot d_{\mathcal{A}}+\frac{d_{\mathcal{A}}}{2}-x,d_{\max } \right)}{\frac{s\left( \tau |t| \right)^{-\alpha }\rho_{\mathrm{VT}}}{1+s\left( \tau |t| \right)^{-\alpha }\rho_{\mathrm{VT}}}}\cdot \mathrm{d}t \Bigg)
\nonumber \\ & \quad 
\overset{\mathrm{(b)}}{\mathop{=}}\,
\exp \Bigg(-\lambda \cdot p_{a}\sum\limits_{c=-\left\lfloor d_{\max }/\left( n_{\mathrm{RB}}d_{\mathcal{A}} \right) \right\rfloor }^{\left\lceil d_{\max }/\left( n_{\mathrm{RB}}d_{\mathcal{A}} \right) \right\rceil } \Bigg[ 
\mathbf{1}\left( \mu _{L}^{(1)}<\mu _{U}^{(1)} \right)
\nonumber \\ & \quad 
\int\limits_{t=\mu _{L}^{(1)}}^{\mu _{U}^{(1)}}{\frac{sH_{\mathrm{VT}_{i}}\left( \tau t \right)^{-\alpha }\rho_{\mathrm{VT}}}{1+sH_{\mathrm{VT}_{i}}\left( \tau t \right)^{-\alpha }\rho_{\mathrm{VT}}}}\cdot \mathrm{d}t 
\nonumber \\ & \quad 
+\mathbf{1}\left( \mu_{L}^{(2)}<\mu_{U}^{(2)} \right)
\nonumber \\ & \quad 
\int\limits_{t=\mu _{L}^{(2)}}^{\mu _{U}^{(2)}}{\frac{sH_{\mathrm{VT}_{i}}\left( -\tau t \right)^{-\alpha }\rho_{\mathrm{VT}}}{1+sH_{\mathrm{VT}_{i}}\left( -\tau t \right)^{-\alpha }\rho_{\mathrm{VT}}}}\cdot \mathrm{d}t \Bigg] \Bigg)  
\end{align}

\noindent where (a) comes after applying the maximum distance to
the integration limits and performing the change of variables
$t=y-x$ and (b) comes after expressing the absolute value
function as $|t|=t \cdot \mathbf{1} (t \geq 0) - t \cdot \mathbf{1} (t < 0)$ and applying the
indicator functions to the integration limits. Finally, performing
both integrals and reordering completes the proof.

\section{Proof of Lemma \ref{lem:MLP LTI}}
\label{app:Proof of lem MLP LTI}

The Laplace transform of the interference can be written as

\begin{align}
\label{eq:MLP app LTI 1}
& \mathcal{L}_{I\left( x \right)}\left( s \right)=\mathbb{E}_{I\left( x \right)}\left[ \mathrm{e}^{-sI\left( x \right)} \right] 
\nonumber \\ & \quad 
\overset{\mathrm{(a)}}{\mathop{=}}\,\mathbb{E} \Bigg[ \mathrm{exp} \Bigg(-s\sum\limits_{\mathrm{VT}_{i}\in \Phi ^{(\mathrm{a})}\setminus \left\{ \mathrm{VT}_{\mathrm{0}} \right\}}{H_{\mathrm{VT}_{i}}\left( \tau |\mathrm{VT}_{i}-x| \right)^{-\alpha }\rho _{\mathrm{VT}}} 
\nonumber \\ & \quad
\times \mathbf{1}\left( \mathrm{VT}_{i}\in \mathfrak{b}_{x}\left( d_{\max } \right) \right)\mathbf{1}\left( \mathrm{VT}_{i}\in \mathcal{A}^{(\mathrm{RB}_{0})} \right) 
\nonumber \\ & \quad 
\times \mathbf{1}\left( \left| \mathrm{VT}_{i}-x \right|>d_{\mathrm{safe}} \right)\mathbf{1}\left( \left| \mathrm{VT}_{i}-v \right|>d_{\mathrm{safe}} \right) \Bigg) \Bigg] 
\nonumber \\ & \quad 
\overset{\mathrm{(b)}}{\mathop{=}}\,\exp \Bigg(-\lambda \cdot p_{a}\sum\limits_{c=-\left\lfloor d_{\max }/\left( n_{\mathrm{RB}}d_{\mathcal{A}} \right) \right\rfloor }^{\left\lceil d_{\max }/\left( n_{\mathrm{RB}}d_{\mathcal{A}} \right) \right\rceil }
\nonumber \\ & \quad 
{\int\limits_{t=\max \left( c\cdot n_{\mathrm{RB}}\cdot d_{\mathcal{A}}-\frac{d_{\mathcal{A}}}{2}-x,-d_{\mathrm{max}} \right)}^{\min \left( c\cdot n_{\mathrm{RB}}\cdot d_{\mathcal{A}}+\frac{d_{\mathcal{A}}}{2}-x,d_{\mathrm{max}} \right)}{\frac{s\left( \tau |t| \right)^{-\alpha }\rho _{\mathrm{VT}}}{1+s\left( \tau |t| \right)^{-\alpha }\rho _{\mathrm{VT}}}}} 
\nonumber \\ & \quad 
\times  \mathbf{1}\left( \left| t \right|>d_{\mathrm{safe}} \right)\cdot \mathbf{1}\left( \left| t+x-v \right|>d_{\mathrm{safe}} \right)\cdot \mathrm{d}t \Bigg)
\end{align}

\noindent where (a) comes after applying \textbf{Assumption \ref{ass:HCPP}} and reordering the resulting expression and (b) after applying the PGFL of the PPP, performing expectation over the fading, expressing the region $\mathcal{A}^{(\mathrm{AR}_0)}$ as a summation of
co-channel segments and performing the change of variables
$t=y-x$. Then, we proceed as follows

\begin{align}
\label{eq:MLP app LTI 2}
& \exp \Bigg(-\lambda \cdot p_{a}\sum\limits_{c=-\left\lfloor d_{\max }/\left( n_{\mathrm{RB}}d_{\mathcal{A}} \right) \right\rfloor }^{\left\lceil d_{\max }/\left( n_{\mathrm{RB}}d_{\mathcal{A}} \right) \right\rceil }{{}} 
\nonumber \\ & \quad
\int\limits_{t=\max \left( c\cdot n_{\mathrm{RB}}\cdot d_{\mathcal{A}}-\frac{d_{\mathcal{A}}}{2}-x,-d_{\mathrm{max}} \right)}^{\min \left( c\cdot n_{\mathrm{RB}}\cdot d_{\mathcal{A}}+\frac{d_{\mathcal{A}}}{2}-x,d_{\mathrm{max}} \right)}{\frac{s\left( \tau t \right)^{-\alpha }\rho _{\mathrm{VT}}}{1+s\left( \tau t \right)^{-\alpha }\rho _{\mathrm{VT}}}} 
\nonumber \\ & \quad
\times \mathbf{1}\left( t>d_{\mathrm{safe}} \right)\cdot \mathbf{1}\left( \left| t+x-v \right|>d_{\mathrm{safe}} \right)\cdot \mathrm{d}t 
\nonumber \\ & \quad 
+\int\limits_{t=\max \left( c\cdot n_{\mathrm{RB}}\cdot d_{\mathcal{A}}-\frac{d_{\mathcal{A}}}{2}-x,-d_{\mathrm{max}} \right)}^{\min \left( c\cdot n_{\mathrm{RB}}\cdot d_{\mathcal{A}}+\frac{d_{\mathcal{A}}}{2}-x,d_{\mathrm{max}} \right)}{\frac{s\left( -\tau t \right)^{-\alpha }\rho _{\mathrm{VT}}}{1+s\left( -\tau t \right)^{-\alpha }\rho _{\mathrm{VT}}}} 
\nonumber \\ & \quad
\times \mathbf{1}\left( t<-d_{\mathrm{safe}} \right)\cdot \mathbf{1}\left( \left| t+x-v \right|>d_{\mathrm{safe}} \right)\cdot \mathrm{d}t\Bigg) 
\end{align}

\noindent where it has expressed the absolute value
function as $|t|=t \cdot \mathbf{1} (t \geq 0) - t \cdot \mathbf{1} (t < 0)$ and it has applied the resulting indicator functions to the integration limits. It should be noticed that the following equations holds

\begin{align}
\label{eq:MLP app LTI 3}
& \mathbf{1}\left( \left| t+x-v \right|>d_{\mathrm{safe}} \right)
\nonumber \\ & \quad 
=\mathbf{1}\left( t>d_{\mathrm{safe}}-x+v \right)+\mathbf{1}\left( t<v-x-d_{\mathrm{safe}} \right) 
\nonumber \\ & 
\mathbf{1}\left( \left| t+x-v \right|>d_{\mathrm{safe}} \right)
\nonumber \\ & \quad
=\mathbf{1}\left( t>d_{\mathrm{safe}}-x+v \right)+\mathbf{1}\left( t<v-x-d_{\mathrm{safe}} \right) 
\end{align}

Finally, substituting the above expressions into (\ref{eq:MLP app LTI 2}), performing the resulting integrals and reordering completes the proof.

% Can use something like this to put references on a page
% by themselves when using endfloat and the captionsoff option.
\ifCLASSOPTIONcaptionsoff
  \newpage
\fi

\bibliographystyle{IEEEtran}
\bibliography{DLOC}

% that's all folks
\end{document}